\newif\if@restonecol
\setlist[itemize]{leftmargin=*}
\setlist[enumerate]{leftmargin=*}
\definecolor{mygray}{gray}{.9}
\definecolor{mygreen}{rgb}{0.0, 0.5, 0.0}
\definecolor{myred}{rgb}{0.8, 0.0, 0.0}
\definecolor{mygreen1}{rgb}{0.03, 0.91, 0.87}
\definecolor{mypink}{rgb}{1.0, 0.0, 0.5}
\definecolor{mycorn}{rgb}{0.98, 0.93, 0.36}
\newtheorem{assumption}{Assumption}
\newtheorem{problem}{Problem}
\newcommand{\yuliang}[1]{{\it\small\textcolor{blue}{[[[ {#1}\ --yuliang ]]]}}}
\newcommand{\yannis}[1]{{\it\small\textcolor{myred}{[[[ {#1}\ --Yannis ]]]}}}
\newcommand{\victor}[1]{{\it\small\textcolor{red}{[[[ {#1}\ --victor ]]]}}}
\renewcommand{\yuliang}[1]{}
\renewcommand{\yannis}[1]{}
\renewcommand{\victor}[1]{}
\DeclareMathOperator*{\argmin}{\arg\!\min}
\DeclareMathOperator*{\argmax}{\arg\!\max}
\newcommand{\ma}{\mathbf{a}}
\newcommand{\mb}{\mathbf{b}}
\newcommand{\ms}{\mathbf{s}}
\newcommand{\me}{\mathds{1}}
\newcommand{\mq}{\mathbf{q}}
\newcommand{\mC}{\mathbf{C}}
\newcommand{\db}{\mathcal{D}}
\newcommand{\csim}{\mathtt{cos}}
\newcommand{\calL}{\mathcal{L}}
\newcommand{\calO}{\mathcal{O}}
\newcommand{\calT}{\mathcal{T}}
\newcommand{\calB}{\mathcal{B}}
\newcommand{\opt}{\mathsf{OPT}}
\newcommand{\MR}{\mathsf{MR}}
\newcommand{\HL}{\mathsf{HL}}
\newcommand{\LQ}{\mathsf{LQ}}
\newcommand{\LL}{\mathsf{L2}}
\newcommand{\QQ}{\mathsf{Q2}}
\newcommand{\TT}{\mathsf{T}}
\newcommand{\hull}{\mathsf{H}}
\newcommand{\TA}{\ensuremath{\mathsf{TA}}}
\newcommand{\BL}{\ensuremath{\mathsf{BL}}}
\newcommand{\TC}{\ensuremath{\mathsf{TC}}}
\newcommand{\score}{\ensuremath{\mathsf{MS}}}
\newcommand{\cost}{\ensuremath{\mathsf{cost}}}
\newcommand{\stp}{\ensuremath{\mathsf{stop}}}
\title{Index-based, High-dimensional, Cosine Threshold Querying with Optimality Guarantees}
\author{Yuliang Li}{Megagon Labs \& UC San Diego, San Diego, California, USA}{}{}{}
\author{Jianguo Wang}{UC San Diego, San Diego, California, USA}{}{}{}
\author{Benjamin Pullman}{UC San Diego, San Diego, California, USA}{}{}{}
\author{Nuno Bandeira}{UC San Diego, San Diego, California, USA}{}{}{}
\author{Yannis Papakonstantinou}{UC San Diego, San Diego, California, USA}{}{}{}
\authorrunning{Y. Li, J. Wang, B. Pullman, N. Bandeira, and Y. Papakonstantinou}
\keywords{Vector databases, Similarity search, Cosine, Threshold Algorithm}%TODO mandatory; please add comma-separated list of keywords
\begin{document}

\maketitle
\begin{abstract}
Given a database of vectors, a cosine threshold query returns all vectors in the database
having cosine similarity to a query vector above a given threshold $\theta$.
These queries arise naturally in many applications, 
such as document retrieval, image search, and mass spectrometry. 
The present paper considers the efficient evaluation of such queries, providing novel optimality
guarantees and exhibiting good performance on real datasets.%\victor{Is this accurate?} \yuliang{it's good.} 
We take as a starting point Fagin's well-known Threshold Algorithm ($\TA$), 
which can be used to answer cosine threshold queries as follows: an inverted index is first built from the database vectors during pre-processing;
at query time, the algorithm traverses the index partially to gather a set of candidate vectors 
to be later verified for $\theta$-similarity. However, directly applying $\TA$ in its raw form 
misses significant optimization opportunities. 
Indeed, we first show that one can take advantage of the fact that the vectors can be assumed to be normalized, 
to obtain an improved, tight stopping condition for index traversal and to efficiently compute it incrementally. 
Then we show that one can take advantage of data skewness to obtain better traversal strategies.
In particular, we show a novel traversal strategy that exploits a common data skewness condition which holds in multiple 
domains including mass spectrometry, documents, and image databases.
We show that under the skewness assumption, the new traversal strategy has a strong, near-optimal performance guarantee. 
The techniques developed in the paper are quite general since they can be applied to a large class of similarity functions beyond cosine. 
\end{abstract}

\vspace{-2mm}
\section{Introduction}
\label{sec:intro}

Given a database of vectors, a cosine threshold query asks for all database vectors with
cosine similarity to a query vector above a given threshold.

This problem arises in many applications including
document retrieval~\cite{Broder2003EQE}, image search~\cite{Kulis09}, recommender systems~\cite{Li2017FFE} and mass spectrometry.
%document retrieval, image search, recommender systems and mass spectrometry.
%\yuliang{added citations.}
For example, in mass spectrometry, billions of spectra are generated for the purpose of protein analysis~\cite{Aebersold16,PMIC200600625,pr400230p}.
Each spectrum is a collection of key-value pairs where the key is the mass-to-charge ratio
of an ion contained in the protein and the value is the intensity of the ion.
Essentially, each spectrum is a high-dimensional, non-negative and sparse vector with $\sim$2000 dimensions where $\sim$100 coordinates are non-zero.
%\yuliang{I removed the data size in MassIVE, but it seems that the number of dimensions is still a relevant number for demonstrating the improvement against TA.}
%The MassIVE (Mass Spectrometry Interactive Virtual Environment)\footnote{\url{https://massive.ucsd.edu}} repository hosted at UCSD is a central repository of spectra, with the current data size over 100 billion spectra.
%The repository is expected to grow significantly in the future as tens of thousands of spectra will be generated per hour~\cite{Aebersold16}.

Cosine threshold queries play an important role in analyzing such spectra repositories.
Example questions include ``is the given spectrum similar to any spectrum in the database?'',
spectrum identification (matching query spectra against reference spectra), or
clustering (matching pairs of unidentified spectra) or
metadata queries (searching for public datasets containing matching spectra, even if obtained from different types of samples).
For such applications with a large vector database,
it is critically important to process cosine threshold queries efficiently --
this is the fundamental topic addressed in this paper.

\vspace{-1.5mm}
\begin{definition}[Cosine Threshold Query]
Let $\db$ be a collection of high-dimensional, non-negative vectors; $\mq$ be a query vector; $\theta$ be a threshold $0 < \theta \leq 1$. Then the cosine threshold query  returns the vector set
$\mathcal{R} = \{\ms | \ms \in \db, \csim(\mq, \ms) \ge \theta \}$. %
A vector $\ms$ is called \emph{$\theta$-similar} to the query $\mq$ if $\csim(\mq, \ms) \geq \theta$ and the \emph{score} of $\ms$ is the value $\csim(\mq, \ms)$ when
$\mq$ is understood from the context.
\vspace{-1.5mm}
\end{definition}

Observe that cosine similarity is insensitive to vector normalization.
We will therefore assume without loss of generality that the database as well as query consist of unit vectors (otherwise, all vectors can be normalized in a pre-processing step).
% \victor{Added this, since there was a lot of confusion on whether vectors are assumed to be unit or not.}

In the literature, cosine threshold querying is a special case of Cosine Similarity Search (CSS)~\cite{Teflioudi2016,AnastasiuK14,Li2017FFE}, where
other aspects like approximate answers, top-k queries and similarity join are considered.
Our work considers specifically CSS with exact, threshold and single-vector queries, which is the case of interest to many applications.
%, which is the case of interest to many applications including mass spectrometry  
% \victor{I would remove "including mass spectrometry". You don't want to sound too dependent on one application domain.} 
% \yuliang{okay. removed now.}
%\yuliang{changed single to single-vector.}
%\yuliang{We need to let the reviewers know that we do consider the related work.}
%\textcolor{red}{Jianguo: why is this paragraph needed?}

Because of the unit-vector assumption, the scoring function $\csim$ computes 
the dot product $\mq \cdot \ms$.
Without the unit-vector assumption, the problem is equivalent to \emph{inner product threshold querying}, which is of interest in its own right. 
% \victor{I don't understand. Isn't inner product the same as dot product which is cosine similarity **with** unit vectors? Maybe I'm missing some terminology.}
% \yuliang{We are referring to dot product WITHOUT the unit vector condition.}
Related work on cosine and inner product similarity search is summarized in Section \ref{sec:related}.

%\yuliang{Since we will mention inner product queries in the result, it is better to define that here as well (or informally). } 
%\victor{Edited a lot from here on}

In this paper we develop novel techniques for the efficient evaluation of cosine threshold queries. 
We take as a starting point the well-known Threshold Algorithm (\textsf{TA}), 
by Fagin et al.~\cite{Fagin2001OAA}, because of its simplicity, wide applicability, and optimality guarantees. 
We review the classic $\TA$ in Appendix \ref{app:ta}.
% We begin with a brief review of the $\TA$ algorithm. 
% We begin with a brief review of the $\TA$ algorithm \victor{I would insert here a quick review of $\TA$. Or add a review in the appendix and point to it.} \yuliang{will do}

\smallskip
\noindent \textbf{A \textsf{TA}-like baseline index and algorithm and its shortcomings.}
The $\TA$ algorithm can be easily adapted to our setting, yielding a first-cut approach to processing cosine threshold queries. We describe how this is done
and refer to the resulting index and algorithm as the {\em \textsf{TA}-like baseline}.
Note first that cosine threshold queries use $\csim(\mq,\ms)$, which can be viewed as a particular family of functions $F(\ms) = \ms \cdot \mq$ parameterized by $\mq$, that are monotonic in $\ms$ for unit vectors. However, \textsf{TA} produces the vectors with the top-k scores according to $F(\ms)$, whereas
cosine threshold queries return all $\ms$ whose score exceeds the threshold $\theta$. We will show how this difference can be overcome straightforwardly.

A \emph{baseline} index and algorithm inspired by $\TA$ can answer cosine threshold queries exactly without a full scan of the vector database for each query.
In addition, the baseline algorithm enjoys the same instance optimality guarantee as the original $\TA$. %\yuliang{Added the motivation for using TA as the baseline.}
%Thus, we first create the following \textit{baseline} index and algorithm, which is similar to the \textsf{TA}, modulo bypassing the differences listed above.
This baseline is created as follows.
First, identically to the \textsf{TA}, the baseline index consists of one sorted list for each of the $d$ dimensions. In particular, the $i$-th sorted list has pairs $(\mathsf{ref}(\ms), \ms[i])$, where $\mathsf{ref}(\ms)$ is a reference to the vector $\ms$ and $\ms[i]$ is its value on the $i$-th dimension. The list is sorted in descending order of $\ms[i]$.%
\footnote{There is no need to include pairs with zero values in the list.}

Next, the baseline, like the \textsf{TA}, proceeds into a \textit{gathering phase} during which it collects a complete set of references to candidate result vectors. The \textsf{TA} shows that gathering can be achieved by reading the $d$ sorted lists from top to bottom and terminating early when a \textit{stopping condition} is finally satisfied. The condition guarantees that any vector that has not been seen yet has no chance of being in the query result. %\yannis{to Jianguo and Yuliang: I think the next sentence may be wrong. I think the baseline's stopping condition is identical to the \textsf{TA} stopping condition. Correct? If so, change the next sentence to ``The baseline adopts the \textsf{TA}'s stopping condition."}
%\yuliang{The current version is correct. It is not identical to $\TA$'s stopping condition.}
The baseline makes a straightforward change to the \textsf{TA}'s stopping condition to adjust for the difference between the \textsf{TA}'s top-k requirement and the threshold requirement of the cosine threshold queries. In particular, in each round the baseline algorithm has read the first $b$ entries of each index. (Initially it is $b=0$.) If it is the case that $\csim(\mq, [L_1[b], \ldots, $ $L_d[b]]) <\theta$ then it is guaranteed that the algorithm has already read (the references to) all the possible candidates and thus it is safe to terminate the gathering phase, see Figure~\ref{fig:specExample} for an example. \victor{very nice example!} Every vector $\ms$ that appears in the $j$-th entry of a list for $j<b$ is a candidate.

In the next phase, called the \textit{verification} phase, the baseline algorithm (again like \textsf{TA}) retrieves the candidate vectors from the database and checks which ones actually score above the threshold. %\victor{Maybe give an example showing that not all candidate vectors score above the threshold?}

%\yannis{to Jianguo and Yuliang: Example of baseline goes here. With indices, gathering, verification. If it ends up being too long and disrupting the intro (read: bore Fagin himself) then add it as big figure with text (both images and text in the same figure) and place the figure after the introduction.}

%\begin{example}\label{exm:index}
%Figure~\ref{fig:listExample} shows six 10-dimensional vectors. The missing values are 0's.
%A sorted list is built for each dimension. The list $L_1$ is built for the first dimension and
%it includes 4 entries: $(\ms_1,0.8)$, $(\ms_5,0.7)$, $(\ms_3,0.3)$, $(\ms_4,0.2)$ because $\ms_1$, $\ms_5$, $\ms_3$ and $\ms_4$
%have non-zero values on the first dimension. The lists are sorted in descending order by the values in each dimension.
%\end{example}

\begin{figure}[tbp]
\centering
\renewcommand{\tabcolsep}{0.1mm}
\includegraphics[width=0.9\textwidth]{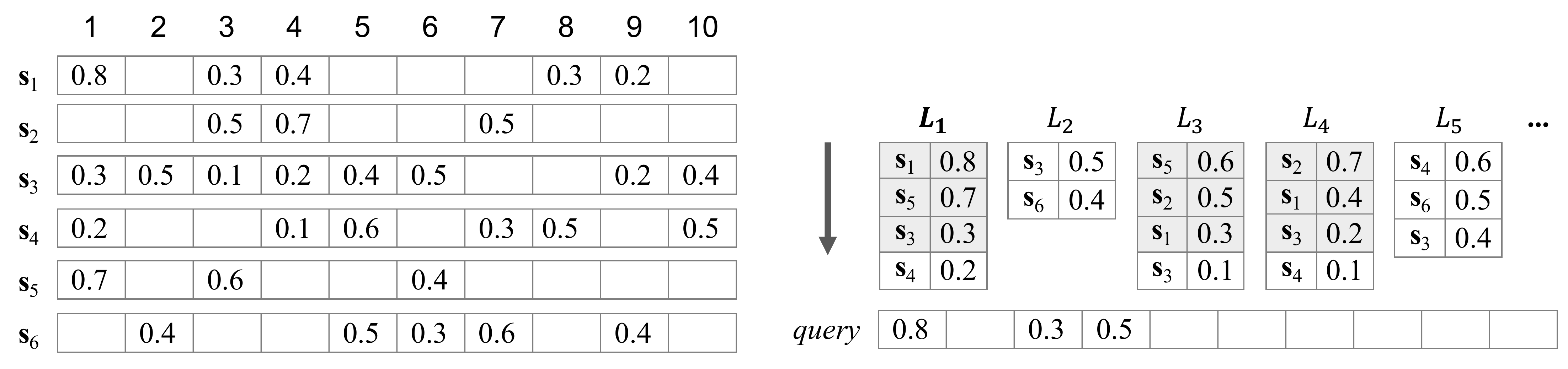}
\caption{{An example of cosine threshold query with six 10-dimensional vectors. The missing values are 0's. We only need to scan the lists $L_1$, $L_3$, and $L_4$ since the query vector has non-zero values in dimension 1, 3 and 4. For $\theta = 0.6$, the gathering phase terminates after each list has examined three entries (highlighted) because the score for any unseen vector is at most $0.8\times 0.3 + 0.3\times 0.3 + 0.5\times 0.2 = 0.43 < 0.6$. The verification phase only needs to retrieve from the database those vectors obtained during the gathering phase, i.e., $\ms_1$, $\ms_2$, $\ms_3$ and $\ms_5$, compute the cosines and produce the final result.}}\label{fig:specExample}
\end{figure}

For inner product queries, 
the baseline algorithm's gathering phase benefits from the same $d \cdot \opt$ instance optimality guarantee as the \textsf{TA}. Namely, the gathering phase will access at most $d \cdot \opt$ entries, where $\opt$ is the optimal index access cost. 
More specifically, the notion of $\opt$ is the \emph{minimal number of sequential accesses} 
of the sorted inverted index during the gathering phase for any $\TA$-like algorithm applied to the specific query and index instance.

There is an obvious optimization: Only the $k$ dimensions that have non-zero values in the query vector $\mq$ should participate in query processing -- this leads to a $k \cdot \opt$ guarantee for inner product queries.\footnote{This optimization is equally applicable to the \textsf{TA}'s problem: Scan only the lists that correspond to dimensions that actually affect the function $F$.}
But even this guarantee loses its practical value when $k$ is a large number. In the mass spectrometry scenario $k$ is $\sim$100. In document similarity and image similarity cases it is even higher.%\victor{For what queries?}\yuliang{inner product}

For cosine threshold queries, the $k \cdot \opt$ guarantee no longer holds.
The baseline fails to utilize the unit vector constraint to reach the stopping condition faster,
resulting in an unbounded gap from $\opt$ because of the unnecessary accesses (see Appendix \ref{sec:TANotTight}).\footnote{Notice, the unit vector constraint enables inference about the collective weight of the unseen coordinates of a vector.}
Furthermore, the baseline fails to utilize the skewing of the values in the vector's coordinates
(both of the database's vectors and of the query vector) and the linearity of the similarity function. Intuitively, if the query's weight is concentrated on a few coordinates, the query processing should overweight the respective lists and may, thus, reach the stopping condition much faster than reading all relevant lists in tandem.

We retain the baseline's index and the gathering-verification structure which characterizes the family of $\TA$-like algorithms.
%We also show in Section~\ref{sec:framework} why the framework of $\TA$-like algorithms is appropriate for cosine threshold queries as well.
The decision to keep the gathering and verification stages separate is discussed in Section \ref{sec:framework}. We argue that this algorithmic structure is appropriate for cosine threshold queries, because further optimizations that would require merging the two phases are only likely to yield marginal benefits. 
%why the framework of $\TA$-like algorithms is appropriate for cosine threshold queries as well
Within this framework, we reconsider %\yuliang{added pointers to the framework.}
%In order to avoid the above shortcomings, we reconsider
\vspace{-2mm}
\begin{enumerate}\parskip=0pt
\item \textit{Traversal strategy optimization}: A \textit{traversal strategy} determines the order in which the gathering phase proceeds in the lists. In particular, we allow the gathering phase to move deeper in some lists and less deep in others. For example, the gathering phase may have read at some point $b_1=106$ entries from the first list, $b_2=523$ entries from the second list, etc. Multiple traversal strategies are possible and, generally, each traversal strategy will reach the stopping condition with a different configuration of $[b_1, b_2, \ldots, b_n]$. The traversal strategy optimization problem asks that we efficiently identify a traversal path that minimizes the access cost $\sum_{i=1}^{d}b_i$.
To enable such optimization, we will allow lightweight additions to the baseline index.
\item \textit{Stopping condition optimization}: We reconsider the stopping condition so that it takes into account (a) the specifics of the $\csim$ function and (b) the unit vector constraint.
Moreover, since the stopping condition is tested frequently during the gathering phase, it has to to be evaluated very efficiently.
Notice that optimizing the stopping condition is independent of the traversal strategy or skewness assumptions about the data.
\end{enumerate}
\vspace{-2mm}

\setlength{\tabcolsep}{3.5pt}
\begin{small}
\begin{table}[!t]
\small
\caption{Summary of theoretical results for the near-convex case.}
%\vspace*{-4mm}
\centering
  \begin{tabular}{l||c|c||c|c}\hline\hline
    \multirow{2}{*}{ } & \multicolumn{2}{c||}{\textbf{Stopping Condition}} & \multicolumn{2}{c}{\textbf{Traversal Strategy}}\\\cline{2-5}
      & \emph{Baseline} & \emph{This work} & \emph{Baseline} & \emph{This work} \\\hline
Inner Product & \multicolumn{2}{c||}{Tight} & $m \cdot \opt$ & \cellcolor[gray]{0.9} $\opt + c$ \\\hline
Cosine        & Not tight & \cellcolor[gray]{0.9} Tight & $\mathsf{NA}$ & \cellcolor[gray]{0.9} $\opt(\theta - \epsilon) + c$ \\\hline\hline
  \end{tabular}
\label{tbl:result}
\vspace{-6mm}
\end{table}
\end{small}

\smallskip 
\noindent 
\textbf{Contributions and summary of results. }
\vspace{-2mm}
\begin{itemize}\parskip=0pt
\item We present a stopping condition for early termination of the index traversal (Section \ref{sec:stop}). We show that the stopping condition is \emph{complete} and \emph{tight}, informally meaning that (1) for any traversal strategy, the gathering phase will produce a candidate set containing all the vectors $\theta$-similar to the query,
and (2) the gathering terminates as soon as no more $\theta$-similar vectors can be found (Theorem \ref{thm:qiLibi}).
In contrast, the stopping condition of the (\textsf{TA}-inspired) baseline is complete but not tight (Theorem \ref{thm:TAnotTight}, Appendix \ref{sec:TANotTight}).
The proposed stopping condition takes into account that all database vectors are normalized and reduces the problem to solving a
special quadratic program (Equation \ref{equ:quadratic}) that guarantees both completeness and tightness.
While the new stopping condition prunes further the set of candidates, it can also be efficiently computed in $\calO(\log d)$ time using incremental maintenance techniques.
% presented in Section \ref{sec:incremental}.
%
\item We introduce the \emph{hull-based} traversal strategies that exploit the skewness of the data (Section \ref{sec:hull-based}).
In particular, skewness implies that each sorted list $L_i$ is ``mostly convex'', meaning that
the shape of $L_i$ is approximately the \textit{lower convex hull} constructed from the set of points of $L_i$.
This technique is quite general, as it can be extended to the class of \emph{decomposable functions} which have the form
$ F(\ms) = f_1(\ms[1]) + \ldots + f_d(\ms[d]) $
where each $f_i$ is non-decreasing.\footnote{The inner product threshold problem is the special case where $f_i(\ms[i])=q_i\cdot \ms[i]$.}
Consequently, we provide the following optimality guarantee for inner product threshold queries:
The number of accesses executed by the gathering phase (i.e., $\sum_{i=1}^{d}b_i$) is at most $\opt+c$
(Theorem \ref{thm:near-optimal} and Corollary \ref{cor:near-optimal}), where $\opt$ is the number of accesses by the optimal strategy and
$c$ is the max distance between two vertices in the lower convex hull.
Experiments show that in multiple real-world cases, $c$ is a very small fraction of $\opt$. % \yuliang{change mass spec to multiple real-world ...}
\item Despite the fact that cosine and its tight stopping condition are not decomposable,
we show that the hull-based strategy can be adapted to cosine threshold queries
by approximating the tight stopping condition with a carefully chosen decomposable function. %\yuliang{fixed.}
We show that when the approximation is at most $\epsilon$-away from the actual value,
the access cost is at most $\opt(\theta - \epsilon) + c$ (Theorem \ref{thm:near-optimal-approx})
where $\opt(\theta - \epsilon)$ is the optimal access cost
on the same query $\mq$ with the threshold lowered by $\epsilon$ and $c$ is a constant similar to the above decomposable cases.
Experiments show that the adjustment $\epsilon$ is very small in practice, e.g., 0.1.
We summarize these new results in Table \ref{tbl:result}.
\end{itemize}
\vspace{-2mm}

The paper is organized as follows. We introduce the algorithmic framework and basic definitions in Section \ref{sec:framework}.
Section \ref{sec:stop} and \ref{sec:hull-based} discuss the technical developments as we mentioned above.
Finally, we discuss related work in Section~\ref{sec:related} and conclude in Section~\ref{sec:conclusion}.

%\section{definitions and Algorithmic Framework}\label{sec:framework}
\vspace{-2mm}
\section{Algorithmic Framework}\label{sec:framework}

%\yannis{to all: this section can be very much compressed since the intro already made most points. All it needs to do is to show the algorithm and make comments that are not already in intro.}

%\yuliang{add the formal definition of index structure here.}

In this section, we present a Gathering-Verification algorithmic framework to facilitate optimizations in different
components of an algorithm with a $\TA$-like structure. We start with notations summarized in Table~\ref{tbl:symbol}.

%First, Table~\ref{tbl:symbol} shows notation.

To support fast query processing, we build an index for the database vectors similar to the original \TA.
The basic index structure consists of a set of 1-dimensional sorted lists (a.k.a inverted lists in web search \cite{Broder2003EQE})
where each list corresponds to a vector dimension and contains vectors having non-zero values on that dimension,
as mentioned earlier in Section \ref{sec:intro}. Formally, for each dimension $i$, $L_i$ is a list of pairs
$\{(\mathsf{ref}(\ms),\ms[i]) \ | \ \ms \in \db \land \ms[i] > 0\}$ sorted in descending order of $\ms[i]$ where $\mathsf{ref}(\ms)$ is a reference to the vector $\ms$ and $\ms[i]$ is its value on the $i$-th dimension. 
In the interest of brevity, we will often write $(\ms, \ms[i])$ instead of $(\mathsf{ref}(\ms),\ms[i])$.
As an example in Figure~\ref{fig:specExample}, the list $L_1$ is built for the first dimension and
it includes 4 entries: $(\ms_1,0.8)$, $(\ms_5,0.7)$, $(\ms_3,0.3)$, $(\ms_4,0.2)$ because $\ms_1$, $\ms_5$, $\ms_3$ and $\ms_4$
have non-zero values on the first dimension.

Next, we show the Gathering-Verification framework (Algorithm~\ref{alg:traverse-verify}) that operates on the index structure.
The framework includes two phases: the gathering phase and the verification phase. 

\begin{table}
\caption{Notation}
\vspace{-2mm}
\label{tbl:symbol}
\centering
\makebox[0pt][c]{\parbox{1.1\textwidth}{%
    \begin{minipage}[b]{0.5\hsize}\centering
    \begin{small}
  \begin{tabular}{l|lll}
    \hline\hline
    $\db$ & the vector database \\\hline
%    $N$ & the number of vectors in $\db$\\\hline
    $d$ & the number of dimensions \\\hline
    $\ms$ (bold font) & a data vector\\\hline
    $\mq$ (bold font) & a query vector\\\hline
    $\ms[i]$ or $s_i$ & the $i$-th dimensional value of $\ms$\\\hline
    $|\ms|$ & the L1 norm of $\ms$ \\\hline
    $\| \ms \|$ & the L2 norm of $\ms$ \\\hline
    $\theta$ & the similarity threshold \\\hline
    $\csim(\mathbf{p},\mq)$ & the cosine of vectors $\mathbf{p}$ and $\mq$\\\hline
    $L_i$ & \begin{tabular}{@{}l@{}}the inverted list of the $i$-th \\  dimension \end{tabular}  \\\hline
    $\mb = (b_1, \dots, b_d)$ & a position vector \\\hline
    $L_i[b_i]$ & the $b_i$-th value of $L_i$\\\hline
    $L[\mb]$ & the vector $(L_1[b_1], \dots, L_d[b_d])$\\\hline
  %  $\varphi(\mb, \{L_i\}_{1\leq i \leq d}, \mq, \theta)$ & stopping condition (a boolean function)\\\hline
  %  $\calT(\mb, \{L_i\}_{1\leq i \leq d}, \mq, \theta)$ & traversal strategy\\\hline
    \hline
  \end{tabular}\end{small}
  \end{minipage}
    \hfill
    \begin{minipage}[b]{0.5\hsize}\centering
\begin{small}
\begin{algorithm}[H]\small
\SetKwInOut{Input}{input}
\SetKwInOut{Output}{output}
\SetKwInOut{Variables}{variables}
\Input{ ($\db$, $\{L_i\}_{1 \leq i \leq d}$, $\mq$, $\theta$)}
\Output{ $\mathcal{R}$ the set of $\theta$-similar vectors} %the Spectrums in $\db$ that are $\theta$-similar to $\mq$
%\Variables{$\scand$, the Candidate Spectrums}
\tcc{Gathering phase}
Initialize $\mb = (b_1, \dots, b_d) = (0, \dots, 0)$\;
\tcp{$\varphi(\cdot)$ is the stopping condition}
\While{$\varphi(\mb) = \mathtt{false}$}
{
\tcp{$\calT(\cdot)$ is the traversal strategy to determine which list to access next}
    $i \leftarrow \calT(\mb)$\;
    $b_i \leftarrow b_i + 1$\;
    Put the vector $\ms$ in $L_i[b_i]$ to the candidate pool $\mathcal{C}$\;
}
\tcc{Verification phase}
$\mathcal{R} \leftarrow \{ \ms | \ms \in \mathcal{C} \land \csim(\mq, \ms) \geq \theta \}$\;
%%$\scand \leftarrow \bigcup_{i=1}^d \{\ms_{\id_i[j]} | 1 \leq j \leq b_i \}$\;
%\For{\emph{each candidate} $\ms \in \mathcal{C}$}{
%  \If {$\csim(\mq, \ms)\ge\theta$} {
%    $\mathcal{R} \leftarrow \mathcal{R} \cup \ms$\;
%  }
%}
\Return $\mathcal{R}$\;
 \caption{\small{\textsf{Gathering-Verification Framework}}}\label{alg:traverse-verify}
\end{algorithm}\end{small}
    \end{minipage}%
}}
\vspace{-1mm}
\end{table}

\smallskip
\noindent \textbf{Gathering phase} (line 1 to line 5). The goal of the gathering phase is to collect a complete set of candidate vectors while minimizing the number of accesses to the sorted lists. The algorithm maintains a \emph{position vector} $\mb = (b_1,\dots, b_d)$ where each $b_i$ indicates the current position in the inverted list $L_i$.
Initially, the position vector $\mb$ is $(0,\dots,0)$. Then it traverses the lists according to a \emph{traversal strategy}
%$\calT(\mb, \{L_i\}_{1\leq i \leq d}, \mq, \theta)$ (or $\calT(\mb)$ for short)
that determines the list (say $L_i$) to be accessed next (line 3).  Then it advances the pointer $b_i$ by 1 (line 4) and adds the vector $\ms$ 
referenced in the entry $L_i[b_i]$ to a candidate pool $\mathcal{C}$ (line 5).
The traversal strategy is usually stateful, which means that its decision is made based on
information that has been observed up to position $\mb$ and its past decisions.
For example, a strategy may decide that it will make the next 20 moves along dimension 6 and thus it needs state in order to remember that it has already committed to 20 moves on dimension 6.

The gathering phase terminates once a \emph{stopping condition} is met.
Intuitively, based on the information that has been observed in the index,
the stopping condition checks if a complete set of candidates has already been found.
%Within this framework, it is critically important to design a stopping condition and a traversal strategy that minimize the total number of index accesses,
%which is the goal of this paper and discussed in more detail in Section \ref{sec:stop}, \ref{sec:hull-based} and \ref{sec:global-optimal}.

Next, we formally define stopping conditions and traversal strategies.
As mentioned above, the input of the stopping condition and the traversal strategy is the information that has been observed up to position $\mb$,
which is formally defined as follows.
\vspace*{-2mm}
\begin{definition}\label{def:partial}
Let $\mb$ be a position vector on the inverted index $\{L_i\}_{1\leq i \leq d}$ of a database $\db$.
The partial observation at $\mb$, denoted as $\calL(\mb)$, is a collection of lists
$\{\hat{L}_i\}_{1 \leq i \leq d}$ where for every $1 \leq i \leq d$, $\hat{L_i} = [L_i[1], \dots, L_i[b_i]]$.
\end{definition}
\vspace*{-2mm}
\begin{definition}\label{def:stopping-condition}
Let $\calL(\mb)$ be a partial observation and $\mathbf{q}$ be a query with similarity threshold $\theta$.
A \textbf{stopping condition} is a boolean function $\varphi(\calL(\mb), \mq, \theta)$
and a \textbf{traversal strategy} is a function $\calT(\calL(\mb), \mq, \theta)$ whose domain is $[d]$\footnote{$[d]$ is the set $\{1, \dots, d\}$}.
When clear from the context, we denote them simply by $\varphi(\mb)$ and $\calT(\mb)$ respectively.
\end{definition}
\vspace*{-2mm}

\smallskip
\noindent \textbf{Verification phase} (line 6).
The verification phase examines each candidate vector $\ms$ seen in the gathering phase to verify whether $\csim(\mq,\ms)\ge\theta$ by accessing the database.
Various techniques \cite{Teflioudi2016,Anastasiu2015PFP,Li2017FFE} have been proposed to speed up this process.
Essentially, instead of accessing all the $d$ dimensions of each $\ms$ and $\mq$ to compute exactly the cosine similarity,
these techniques decide $\theta$-similarity by performing a partial scan of each candidate vector.
We review these techniques, which we refer to as \emph{partial verification}, in Appendix \ref{sec:verification}.
Additionally, as a novel contribution, we show that in the presence of data skewness,
partial verification can have a near-constant performance guarantee (Theorem \ref{thm:verification}) for each candidate.
\vspace{-2mm}
\begin{theorem}\label{thm:verification-informal}
(Informal) For most skewed vectors, $\theta$-similarity can be computed at constant time.
\vspace{-2mm}
\end{theorem}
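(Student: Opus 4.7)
The plan is to analyze the partial verification procedure mentioned in Appendix~\ref{sec:verification}, in which we scan the coordinates of a candidate $\ms$ in a precomputed descending order of $\ms[i]$, maintaining the running partial inner product $p = \sum_{i \in S} q_i \ms[i]$ together with an upper bound $u$ on the contribution of the still-unscanned indices $R = [d]\setminus S$. Because both vectors are unit and non-negative, Cauchy--Schwarz gives
\[
  \sum_{i \in R} q_i\,\ms[i] \;\le\; \sqrt{\textstyle\sum_{i \in R} q_i^2}\cdot\sqrt{\textstyle\sum_{i \in R} \ms[i]^2} \;\le\; \sqrt{1 - \textstyle\sum_{i \in S} \ms[i]^2} \;=:\; u,
\]
so the routine may halt with ``accept'' the instant $p \ge \theta$ and with ``reject'' the instant $p + u < \theta$. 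Correctness is immediate from $\csim(\mq,\ms) = p + \sum_{i \in R} q_i \ms[i] \in [p,\,p+u]$.

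Next I would formalize the skewness hypothesis. Call $\ms$ a \emph{$(k,\epsilon)$-skewed} unit vector if the sum of squares of its $k$ largest coordinates is at least $1-\epsilon^2$, so that after $k$ scans we already have $u \le \epsilon$. Heavy-tailed real data (mass spectra, bag-of-words document vectors, SIFT-like image descriptors) satisfy this condition with $k$ independent of the ambient dimension $d$. A clean formal instantiation is power-law decay $\ms[\pi(i)] \propto i^{-\alpha}$ for some $\alpha > 1/2$: then $\sum_{i>k}\ms[\pi(i)]^2 = O(k^{1-2\alpha})$, so $k = O\bigl(\epsilon^{-2/(2\alpha-1)}\bigr)$ suffices, a constant depending only on $\epsilon$ and $\alpha$, not on $d$.

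Combining the two ingredients, after $k$ scans we have $|\csim(\mq,\ms) - p| \le u \le \epsilon$, so whenever $\csim(\mq,\ms) \ge \theta+\epsilon$ the acceptance test fires, and whenever $\csim(\mq,\ms) \le \theta-\epsilon$ the rejection test fires. The ``most'' of the informal statement then refers precisely to those candidates whose true score lies outside the narrow ambiguity window $(\theta-\epsilon,\,\theta+\epsilon)$; such candidates are decided in $k = O(1)$ coordinate comparisons, yielding the claimed constant-time guarantee.

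The main obstacle I anticipate is pinning down the formal definition of ``skewness'': it must be (i) weak enough to be validated on the datasets the paper targets, (ii) strong enough to force $k$ to depend only on $\epsilon$ and the decay parameters and not on $d$, and (iii) compatible with the fact that at query time $\mq$ is known only to be a unit vector (so the upper bound $u$ cannot exploit its sparsity unconditionally). A secondary subtlety is handling borderline scores inside $(\theta-\epsilon,\theta+\epsilon)$: on such inputs the procedure may fall back to scanning all non-zero entries of $\ms$, which is why the ``most'' qualifier is unavoidable unless one imposes a further distributional assumption controlling the fraction of candidates that land in this window.
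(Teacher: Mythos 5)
Your proposal is correct and follows essentially the same route as the paper's Theorem~\ref{thm:verification}: you bound the unscanned tail $\sum_{i>k} q_i \ms[i]$ by Cauchy--Schwarz, use skewness to make that bound at most $\epsilon$ after $k$ accesses, and conclude that any candidate whose true score lies outside the window $(\theta-\epsilon,\theta+\epsilon)$ is accepted or rejected within $k$ steps --- which is exactly the paper's argument with the substitution $c = 1-\epsilon^2$, i.e.\ $\epsilon = \sqrt{1-c}$. Your power-law instantiation of the skewness hypothesis and the explicit discussion of the ambiguity window are additions beyond what the paper states, but they do not change the underlying proof.
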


\smallskip
\noindent
\textbf{Remark on optimizing the gathering phase. }
Due to these optimization techniques, 
the \emph{number of sequential accesses} performed during the gathering phase becomes
the dominating factor of the overall running time.
This reason behind is that the number of sequential accesses is strictly greater than 
the number of candidates that need to be verified so
reducing the sequential access cost also results in better performance of the verification phase. 
In practice, we observed that the sequential cost is indeed dominating: 
for 1,000 queries on 1.2 billion vectors with similarity threshold 0.6, 
the sequential gathering time is 16 seconds and the verification time is only 4.6 seconds.
Such observation justifies 
our goal of designing a traversal strategy with near-optimal sequential access cost, 
as the dominant cost concerns the gathering stage.

\smallskip
\noindent \textbf{Remark on the suitability of $\TA$-like algorithms.} 
One may wonder whether algorithms that start the gathering phase NOT from the top of the inverted lists may outperform the best $\TA$-like algorithm.
In particular, it appears tempting to start the gathering phase from the point closest to $q_i$ in each inverted list
and traverse towards the two ends of each list. %This can be a potential future work and we note the following challenges.
Appendix~\ref{app:middle} proves why this idea can lead to poor performance. In particular, Appendix~\ref{app:middle} proves that in a general setting, 
the computation of a tight and complete stopping condition (formally defined in Definition \ref{def:complete} and \ref{def:tight}) becomes {\sc np-hard} since it needs to take into account constraints from two pointers (forward and backward) for each inverted list.
Furthermore, in many applications, the data skewing leads to small savings from pruning the top area of each list, since the top area is sparsely populated - unlike the densely populated bottom area of each list. Thus it is not justified to use an expensive gathering phase algorithm for small savings. 
\yuliang{in applications like mass spec => in many applications}

Section \ref{sec:relatedtechniques} reviews additional prior work ideas \cite{Teflioudi2016,Teflioudi2015LFR} that 
avoid  traversing some top/bottom regions of the inverted index.
Such ideas may provide additional optimizations to $\TA$-like algorithms in variations and/or restrictions of the problem (e.g., a restriction that the threshold is very high)
and thus they present future work opportunities in closely related problems.

%question whether $\TA$-like algorithms are the best option. \yuliang{I don't think these papers questioned $\TA$-like algorithms. They just proposed algorithms that do not start the traversal from the top.}
%Note, that such ideas may provide additional optimizations in variations and/or restrictions of the problem (e.g., a restriction that the threshold is very high) and thus they present future work opportunities in closely related problems.

%\section{The Gathering Phase}\label{sec:gathering}
%\yuliang{This needs to be changed.}
%In this section, we present optimizations for the gathering phase that includes two key components: stopping condition (Section~\ref{sec:stop}) and traversal strategy (Section~\ref{sec:traversal}).

\vspace{-2mm}
\section{Stopping condition}\label{sec:stop}

%\yuliang{Add ``in this section...''}
%
%\textcolor{red}{Jianguo: Proofreading: lower/upper case of captions, too many our, we...}
%
%A stopping condition defines when the index traversal can terminate.
In this section, we introduce a fine-tuned stopping condition that satisfies the
tight and complete requirements to early terminate the index traversal. 
% These two requirements can be described informally as follows.

First, the stopping condition has to guarantee \emph{completeness} (Definition \ref{def:complete}), 
i.e. when the stopping condition $\varphi$ holds on a position $\mb$,
the candidate set $\mathcal{C}$ must contain all the true results. 
Note that since the input of $\varphi$ is the partial observation at $\mb$,
we must guarantee that for all possible databases $\db$ consistent with the partial observation $\calL(\mb)$,
the candidate set $\mathcal{C}$ contains all vectors in $\db$ that are $\theta$-similar to the query $\mq$. 
This is equivalent to require that if a unit vector $\ms$ is found below position $\mb$ (i.e. $\ms$ does not appear above $\mb$),
then $\ms$ is NOT $\theta$-similar to $\mq$. %\yuliang{I think this is still needed.}
We formulate this as follows.
% \vspace{-1.5mm}
\begin{definition}[Completeness]\label{def:complete}
Given a query $\mq$ with threshold $\theta$,
a position vector $\mb$ on index $\{L_i\}_{1 \leq i \leq d}$ is complete
iff for every unit vector $\mathbf{s}$, $\ms < L[\mb]$ implies $\ms \cdot \mq < \theta$.
A stopping condition $\varphi(\cdot)$ is complete iff for every $\mb$,
$\varphi(\mb) = \mathtt{True}$ implies that $\mb$ is complete.
\end{definition}
% \vspace{-1.5mm}

The second requirement of the stopping condition is \emph{tightness}.
It is desirable that the algorithm terminates immediately once the candidate set $\mathcal{C}$ contains a complete set of candidates, such that no additional 
unnecessary access is made. This can reduce not only the number of index accesses but also the candidate set size, which in turn reduces the verification cost.
Formally,
% \vspace{-1.5mm}
\begin{definition}[Tightness]\label{def:tight}
A stopping condition $\varphi(\cdot)$ is tight iff for every complete position vector $\mb$, $\varphi(\mb) = \mathtt{True}$.
\end{definition}
% \vspace{-1.5mm}

It is desirable that a stopping condition be both complete and tight. 
However, as we shown in Appendix~\ref{sec:TANotTight}, the baseline stopping condition $\varphi_{\BL} = \big(\mq \cdot L[\mb] < \theta \big)$
is complete but not tight as it does not capture the unit vector constraint to terminate as soon as no unseen unit vector
can satisfy $\ms \cdot \mq \geq \theta$.
Next, we present a new stopping condition that is both complete and tight. %Finally in Section~\ref{sec:maintenance}, we present an algorithm efficiently computing the new stopping condition.

To guarantee tightness, one can check at every snapshot during the traversal
whether the current position vector $\mb$ is complete and stop once the condition is true.
However, directly testing the completeness is impractical since
it is equivalent to testing 
whether there exists a real vector $\ms = (s_1, \dots, s_d)$ that satisfies the following 
following set of quadratic constraints:
\vspace{-2mm}
\begin{equation} \label{equ:quadratic}
(a) \quad \sum_{i=1}^d s_i \cdot q_i \geq \theta , \quad \quad
(b) \quad s_i \leq L_i[b_i], \ \forall \ i \in [d], \quad \text{ and} \quad \quad
(c) \quad \sum_{i=1}^d s_i^2 = 1.
\vspace{-1.5mm}
\end{equation}
% \vspace{-2mm}
We denote by $\mC(\mb)$ (or simply $\mC$) the set of $\mathbb{R}^d$ points defined by the above constraints.
The set $\mC(\mb)$ is infeasible (i.e. there is no satisfying $\ms$) if and only if $\mb$ is complete, 
but directly testing the feasibility of $\mC(\mb)$ requires an expensive call to a quadratic programming solver. 
Depending on the implementation, the running time can be exponential or of high-degree polynomial \cite{convexopt}.
%\yuliang{It is hard to quantify the exact complexity because it depends on the implementation of the solver.}
%\yuliang{how expensive is the call. It is hard to quantify in big-O.}
We address this challenge by deriving an equivalently strong stopping condition that guarantees tightness and is efficiently testable:
% \vspace{-2mm}
\begin{theorem}\label{thm:qiLibi}
Let $\tau$ be the solution of the equation
%\begin{equation}\label{equ:tau}
$ \sum_{i=1}^{d} \min\{q_i \cdot \tau, L_i[b_i]\} ^ 2 = 1$ and
% \vspace{-2mm}
\begin{equation} \label{equ:score}
\score(L[\mb]) = \sum_{i=1}^{d} \min\{q_i \cdot \tau, L_i[b_i]\}  \cdot q_i
% \vspace{-2mm}
\end{equation}
called the \emph{max-similarity}.
The stopping condition $\varphi_{\TC}(\mb) = \left( \score(L[\mb]) < \theta \right)$ is tight and complete.
% \begin{equation}\label{equ:tight}
% \varphi_{\TC}(\mb) = \left( \sum_{i=1}^{d} \min\{q_i \cdot \tau, L_i[b_i]\}  \cdot q_i < \theta \right) . 
% \end{equation}
\end{theorem}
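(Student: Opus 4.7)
The plan is to show that $\varphi_{\TC}$ is simultaneously complete and tight by reducing both properties to a single equivalence:
\[
\score(L[\mb]) < \theta \quad\Longleftrightarrow\quad \mC(\mb) \text{ is infeasible}.
\]
By Definition~\ref{def:complete}, $\mb$ is complete exactly when $\mC(\mb)$ is infeasible, so once this equivalence is established, completeness ($\varphi_{\TC}(\mb)\Rightarrow$ complete) and tightness (complete $\Rightarrow \varphi_{\TC}(\mb)$) both follow immediately.

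To obtain the equivalence, I first rephrase the feasibility of $\mC(\mb)$ as an optimization problem. Since every database vector is non-negative and has unit $\ell_2$ norm, the relevant feasible region is $\{\ms \in \mathbb{R}^d : 0 \le s_i \le L_i[b_i],\ \|\ms\|_2 \le 1\}$; this is a compact convex set. Because the query $\mq$ is non-negative, the linear objective $\mq \cdot \ms$ attains its maximum on the sphere $\|\ms\|_2 = 1$ (assuming the sphere meets the box, the edge case $\sum_i L_i[b_i]^2 < 1$ is trivial since then $\mC(\mb)$ is empty). Hence $\mC(\mb)$ is infeasible iff
\[
M(\mb) \;:=\; \max\Bigl\{\,\mq\cdot\ms \;:\; 0\le s_i \le L_i[b_i],\ \|\ms\|_2 = 1\Bigr\} \;<\; \theta.
\]
So the task reduces to proving $M(\mb) = \score(L[\mb])$.

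Next, I would solve this maximization in closed form using the KKT conditions for the equivalent concave program $\max \mq\cdot\ms$ over $s_i \in [0, L_i[b_i]]$ with $\sum_i s_i^2 \le 1$. Introducing a multiplier $\mu \ge 0$ for the $\ell_2$ constraint and multipliers for each box constraint, the stationarity condition yields $q_i = 2\mu s_i^* + (\text{upper box multiplier})$. A standard case analysis shows $s_i^* = \min\!\bigl\{\tfrac{q_i}{2\mu},\, L_i[b_i]\bigr\}$: the box constraint is active precisely when $\tfrac{q_i}{2\mu} > L_i[b_i]$, and otherwise the norm constraint's Lagrangian pins $s_i^*$ to $\tfrac{q_i}{2\mu}$. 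Writing $\tau = \tfrac{1}{2\mu}$, the $\ell_2$ constraint becomes $\sum_i \min\{q_i\tau, L_i[b_i]\}^2 = 1$, which is exactly the defining equation for $\tau$ in the theorem. Plugging $s_i^* = \min\{q_i\tau, L_i[b_i]\}$ back into the objective gives $M(\mb) = \sum_i \min\{q_i\tau, L_i[b_i]\}\cdot q_i = \score(L[\mb])$.

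It remains to verify that $\tau$ is well-defined. The function $g(\tau) = \sum_i \min\{q_i \tau, L_i[b_i]\}^2$ is continuous and non-decreasing, with $g(0) = 0$ and $g(\tau) \to \sum_i L_i[b_i]^2$ as $\tau \to \infty$, so a unique $\tau \ge 0$ solving $g(\tau)=1$ exists in the non-trivial case $\sum_i L_i[b_i]^2 \ge 1$. The main obstacle, and the step I would spend the most care on, is the KKT case analysis that yields the water-filling form $s_i^* = \min\{q_i\tau, L_i[b_i]\}$, since one must handle the coordinates where $q_i = 0$ (trivially $s_i^* = 0$ is optimal) and argue that the $\ell_2$ constraint is active at the optimum (true whenever the box allows the norm to reach $1$, which is the only regime where completeness is not already trivial). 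Once this optimum is in hand, the equivalence, and therefore the theorem, follows.
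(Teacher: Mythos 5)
Your proof takes essentially the same route as the paper's: both reduce completeness and tightness to computing the maximum of $\mq\cdot\ms$ over the constrained region via the KKT conditions of the quadratic program, arriving at the water-filling optimum $s_i^* = \min\{q_i\tau, L_i[b_i]\}$ with $\tau = 1/(2\lambda)$. Your added care about the existence and uniqueness of $\tau$ and the degenerate cases ($q_i=0$, $\sum_i L_i[b_i]^2 < 1$) is welcome rigor that the paper's proof glosses over, but the underlying argument is identical.
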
 
% \vspace{-2mm}

%\section{Proof of Theorem \ref{thm:qiLibi}}
\begin{proof}
The tight and complete stopping condition is obtained by applying the Karush-Kuhn-Tucker (KKT) conditions \cite{kkt}
for solving nonlinear programs. We first formulate the set of constraints in (\ref{equ:quadratic}) as an optimization problem
over $\ms$:
\begin{equation} \label{equ:maximization}
\vspace{-1.5mm}
\begin{array}{ll@{}ll@{}ll@{}ll}
\text{maximize}  & \displaystyle\sum_{i=1}^{d} s_i \cdot q_{i} \quad\quad\quad\quad & \text{subject to}& \displaystyle\sum_{i=1}^d s_i^2 = 1 \quad \text{ and } \quad s_i \leq L_i[b_i],                    \quad \forall i \in [d]
\end{array}
\vspace{-1.5mm}
\end{equation}
So checking whether $\mC$ is feasible is equivalent to verifying whether the maximal $\sum_{i=1}^{d} s_i \cdot q_{i}$ is at least $\theta$.
So it is sufficient to show that $\sum_{i=1}^{d} s_i \cdot q_{i}$ is maximized when $s_i = \min\{q_i \cdot \tau, L_i[b_i]\}$ as specified above.

The KKT conditions of the above maximization problem specify a set of necessary conditions that the optimal $\ms$ needs to satisfy.
More precisely, let 
$$
 L(\ms, \mathbf{\mu}, \lambda) = \sum_{i=1}^d s_i q_i - \sum_{i=1}^d \mu_i(L_i[b_i] - s_i) - \lambda \left(\sum_{i=1}^d s_i^2 - 1\right) 
$$
be the Lagrangian of (\ref{equ:maximization}) where $\lambda \in \mathbb{R}$ and $\mathbf{\mu} \in \mathbb{R}^d$ are the Lagrange multipliers.
Then,
\begin{lemma}[derived from KKT]
% \vspace{-1.5mm}
The optimal $\ms$ in (\ref{equ:maximization}) satisfies the following conditions:
% \vspace{-1.5mm}
\begin{equation*}
\begin{array}{ll@{}ll}
\nabla_{\ms} L(\ms, \mu, \lambda) = 0 & \text{(Stationarity)} \\
\mu_i \geq 0 , \ \forall \ i \in [d] & \text{(Dual feasibility)} \\
\mu_i (L_i[b_i] - s_i) = 0, \ \forall \ i \in [d] & \text{(Complementary slackness)}
\end{array}
% \vspace{-1.5mm}
\end{equation*}
in addition to the constraints in (\ref{equ:maximization}) (called the \emph{Primal feasibility} conditions).
\end{lemma}
% \vspace{-1.5mm}
By the Complementary slackness condition, for every $i$, if $\mu_i \neq 0$ then
$s_i = L_i[b_i]$. If $\mu_i = 0$, then from the Stationarity condition, 
we know that for every $i$, $ q_i + \mu_i - 2 \lambda \cdot s_i = 0 $
so $s_i = q_i / 2\lambda$. Thus, the value of $s_i$ is either $L_i[b_i]$ or $q_i / 2\lambda$.

If $L_i[b_i] < q_i / 2\lambda$ then since $s_i \leq L_i[b_i]$, the only possible case is $s_i = L_i[b_i]$.
For the remaining dimensions, the objective function $\sum_{i=1}^d s_i \cdot q_i$ is maximized
when each $s_i$ is proportional to $q_i$, so $s_i = q_i / 2\lambda$.
Combining these two cases, we have $s_i = \min\{q_i / 2\lambda, L_i[b_i]\}.$

Thus, for the $\lambda$ that satisfies 
$\sum_{i=1}^d \min\{q_i / 2\lambda, L_i[b_i]\}^2 = 1$,
the objective function $\sum_{i=1}^d s_i \cdot q_i$ is maximized when
$s_i = \min\{q_i / 2\lambda, L_i[b_i]\}$ for every $i$.
The theorem is obtained by letting $\tau = 1/2\lambda$.
\end{proof}
\vspace{-1mm}

\noindent
\textbf{Remark of $\varphi_{\TC}$. } 
The tight stopping condition $\varphi_{\TC}$ computes
the vector $\ms$ below $L(\mb)$ with the maximum cosine similarity $\score(L[\mb])$ with the query $\mq$.
At the beginning of the gathering phase, $b_i = 0$ for every $i$ so 
$\score(L[\mb]) = 1$ as $\ms$ is not constrained. 
The cosine score is maximized when $\ms = \mq$ where $\tau = 1$.
During the gathering phase, as $b_i$ increases, the upper bound $L_i[b_i]$ of each $s_i$ decreases.
When $L_i[b_i] < q_i$ for some $i$, $s_i$ can no longer be $q_i$. 
Instead, $s_i$ equals $L_i[b_i]$, the rest of $\ms$ increases proportional to $\mq$ and $\tau$ increases.
% As illustrated in Figure \ref{fig:tau}, during the traversal,
During the traversal, the value of $\tau$ monotonically increases and the score $\ms(L[\mb])$ monotonically decreases.
This is because the space for $\ms$ becomes more constrained by $L(\mb)$ as the pointers move deeper in the inverted lists.

Testing the tight and complete condition $\varphi_{\TC}$ requires solving $\tau$ in Theorem (\ref{thm:qiLibi}), 
for which a direct application of the bisection method takes $\calO(d)$ time.
We show a novel efficient algorithm (Appendix \ref{sec:incremental}) based on incremental maintenance which takes only
$\calO(\log d)$ time for each test of $\varphi_{\TC}$. 
% At high level, the algorithm solves $\tau$ using binary search and maintains all information needed by the binary search
% with a binary search tree. So,
\vspace{-2mm}
\begin{theorem}\label{thm:incremental}
The stopping condition $\varphi_{\TC}(\mb)$ can be incrementally computed in $\calO(\log d)$ time.
\end{theorem}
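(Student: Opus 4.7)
\bigskip
\noindent\textbf{Proof proposal for Theorem \ref{thm:incremental}.}
The plan is to exploit the closed-form structure of the maximizer $\ms^*$ given by Theorem \ref{thm:qiLibi} and maintain a small amount of aggregate information in a balanced search tree so that each test of $\varphi_{\TC}$ costs $\calO(\log d)$ while each pointer advance costs $\calO(\log d)$ to update. For each dimension $i$ with $q_i>0$, define the \emph{ratio} $r_i = L_i[b_i]/q_i$; the index $i$ is ``clipped'' (i.e., $s^*_i = L_i[b_i]$) exactly when $r_i < \tau$, and otherwise $s^*_i = q_i\tau$. Partitioning $[d]$ into $A=\{i: r_i \ge \tau\}$ and $B=\{i: r_i < \tau\}$, the normalization constraint becomes
\begin{equation*}
\tau^2\sum_{i\in A}q_i^2 \;+\; \sum_{i\in B}L_i[b_i]^2 \;=\; 1,
\end{equation*}
which yields $\tau$ in closed form once $A,B$ are known, and similarly $\score(L[\mb]) = \tau\sum_{i\in A}q_i^2 + \sum_{i\in B} L_i[b_i]\,q_i$. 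So computing $\varphi_{\TC}$ reduces to identifying the correct threshold index in the order of the $r_i$'s.

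First I would establish a monotonicity lemma: among the $d+1$ possible splits of the sorted ratio list, the tentative $\tau$ obtained from the closed form is monotonic in the split position, and the unique correct split is the one at which the computed $\tau$ falls in the interval $(r_{(k)}, r_{(k+1)}]$. This will justify a binary search over the sorted ratio order. Concretely, I maintain a balanced BST (e.g., a red-black tree) keyed by $r_i$ over the $k$ dimensions with $q_i>0$, where each node stores in its subtree the aggregates $\Sigma Q = \sum q_j^2$, $\Sigma L = \sum L_j[b_j]^2$, and $\Sigma M = \sum L_j[b_j]\,q_j$. These are standard order-statistic augmentations and can be recomputed from children in $\calO(1)$ per rotation.

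The query then walks the BST from the root: at a node with ratio $r$, use the left subtree aggregates as a tentative $B$ and the rest as a tentative $A$, solve for $\tau$ from the displayed equation, and compare with $r$; if $\tau>r$ move right (the current node and left subtree belong to $B$), else move left. By the monotonicity lemma this descent converges to the correct split in $\calO(\log d)$ steps, after which $\tau$ and $\score(L[\mb])$ are read off from the final aggregates. A pointer advance $b_i\!\leftarrow\!b_i+1$ is handled by a delete followed by an insert with the new ratio $L_i[b_i+1]/q_i$; the augmentation is maintained along the affected root-to-leaf path in $\calO(\log d)$.

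The main obstacle is the monotonicity lemma underpinning the BST descent: I must show that as the split position moves one index from $A$ into $B$, the candidate $\tau$ crosses the boundary ratio in the ``right'' direction, so that a local comparison $\tau$ vs.\ $r$ at each BST node suffices to decide which half contains the correct split. This should follow from the identity $L_j[b_j]^2 = r_j^2 q_j^2$ and algebraic manipulation of the two candidate equations before and after reclassifying a single index, but care is needed to handle ties in the $r_i$'s and the degenerate case where every index is either in $A$ or in $B$ (which corresponds to $\tau$ being pinned to $1/\sqrt{\sum q_i^2}$ or to $+\infty$, respectively). Dimensions with $q_i=0$ contribute neither to the constraint nor to the score and are simply omitted from the tree.
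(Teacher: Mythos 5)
Your proposal is correct and follows essentially the same route as the paper's Appendix D: a balanced BST keyed by the ratios $L_i[b_i]/q_i$, augmented with the subtree aggregates $\sum q_j^2$, $\sum L_j[b_j]^2$, $\sum L_j[b_j]q_j$ (the paper's $\mathsf{Q2}$, $\mathsf{L2}$, $\mathsf{LQ}$), with the split point located by a root-to-leaf descent accumulating prefix sums and updates handled by delete-plus-reinsert in $\calO(\log d)$. The monotonicity fact you flag as the remaining obstacle is exactly what the paper also relies on (implicitly, via "the largest $k$ such that $\mathsf{eval}(k, L_k[b_k]/q_k)\leq 1$"), so you are not missing anything the paper supplies.
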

\vspace{-2mm}

\vspace{-1mm}
\section{Near-Optimal Traversal Strategy} \label{sec:hull-based}

%\yuliang{The introductory paragraphs are rewritten and simplified. Please check.}

Given the inverted lists index and a query, there can be many stopping positions
that are both complete and tight. To optimize the performance, we need a traversal strategy that
reaches one such position as fast as possible.
Specifically, the goal is to design a traversal strategy $\calT$ that minimizes $|\mb| = \sum_{i=1}^d b_i$
where $\mb$ is the first position vector satisfying the tight and complete stopping condition if $\calT$ is followed.
Minimizing $|\mb|$ also reduces the number of collected candidates,
which in turn reduces the cost of the verification phase.
We call $|\mb|$ the \emph{access cost} of the strategy $\calT$.
Formally,
\vspace{-1mm}
\begin{definition}[Access Cost]\label{def:optimal}
Given a traversal strategy $\calT$, we denote by $\{\mb_i\}_{i \geq 0}$
the sequence of position vectors obtained by following $\calT$.
The access cost of $\calT$, denoted by $\cost(\calT)$, is the minimal $k$ such that
$\varphi_{\TC}(\mb_k) = \mathtt{True}$. Note that $\cost(\calT)$ also equals $|\mb_k|$.
\end{definition}

\vspace{-1mm}
\begin{definition}[Instance Optimality]\label{def:optimal}
Given a database $\db$ with inverted lists $\{L_i\}_{1 \leq i \leq d}$, a query vector $\mq$ and a threshold $\theta$,
the optimal access cost $\opt(\db, \mq, \theta)$ is the minimum $\sum_{i=1}^d b_i$ for position vectors
$\mb$ such that $\varphi_{\TC}(\mb) = \mathtt{True}$.
When it is clear from the context, we simply denote $\opt(\db, \mq, \theta)$ as $\opt(\theta)$ or $\opt$.
\end{definition}
\vspace{-1mm}

At a position $\mb$, a traversal strategy makes its decision locally
based on what has been observed in the inverted lists up to that point,
so the capability of making globally optimal decisions is limited.
As a result, traversal strategies are often designed as simple heuristics,
such as the lockstep strategy in the baseline approach.
The lockstep strategy has a $d \cdot \opt$ near-optimal bound which is loose in the high-dimensionality setting.

In this section, we present a traversal strategy for cosine threshold queries
with tighter near-optimal bound by taking into account that
the index values are skewed in many realistic scenarios. %\yuliang{removed mass spec}
We approach the (near-)optimal traversal strategy in two steps. % \yuliang{changed tightened to tighter.}

First, we consider the simplified case with the unit-vector constraint ignored
so that the problem is reduced to inner product queries.
We propose a general traversal strategy that relies on convex hulls pre-computed from the inverted lists during indexing.
During the gathering phase, these convex hulls are accessed as auxiliary data during the traversal
to provide information on the increase/decrease rate towards the stopping condition.
The hull-based traversal strategy not only makes fast decisions (in $\calO(\log d)$ time)
but is near-optimal (Corollary \ref{cor:near-optimal}) under a reasonable assumption.
In particular, we show that if the distance between any two consecutive convex hull vertices of the inverted lists is bounded by a constant $c$,
the access cost of the strategy is at most $\opt + c$.
Experiments on real data show that this constant is small in practice.

The hull-based traversal strategy is quite general, as it applies to a large class of functions beyond inner product
called the \emph{decomposable functions}, which have the form $\sum_{i=1}^d f_i(s_i)$
where each $f_i$ is a non-decreasing real function of a single dimension $s_i$.
Obviously, for a fixed query $\mq$, the inner product $\mq \cdot \ms$ is a special case of decomposable functions,
where each $f_i(s_i) = q_i \cdot s_i$. We show that the near-optimality result for inner product queries
can be generalized to any decomposable function (Theorem \ref{thm:near-optimal}).
% \yuliang{Here we said decomposable is a generalization of inner product, but in the technical development,
% we focus on decomposable functions and then say inner product is a special case
% (it is presented this way because the decomposable result and proof are used later in the section).
% Please check if this is okay or we need to adjust the paragraph above.}

Next, in Section \ref{sec:global-optimal}, we consider the cosine queries by taking the normalization constraint into account.
Although the function $\score(\cdot)$ used in the tight stopping condition $\varphi_{\TC}$ is not decomposable so
the same technique cannot be directly applied,
we show that the hull-based strategy can be adapted by approximating
$\score(\cdot)$ with a decomposable function.
In addition, we show that with a properly chosen approximation,
the hull-based strategy is near-optimal with a small adjustment to the input threshold $\theta$,
meaning that the access cost is bounded by $\opt(\theta - \epsilon) + c$ for a small $\epsilon$ (Theorem \ref{thm:near-optimal-approx}).
Under the same experimental setting, we verify that $\epsilon$ is indeed small in practice.

\vspace{-1mm}
\subsection{Decomposable Functions}

We start with defining the decomposable functions for which the hull-based traversal strategies can be applied:
\vspace{-1mm}
\begin{definition}[Decomposable Function]\label{def:linear-score}
A decomposable function $F(\ms)$ is a $d$-dimensional real function where
$
F(\ms) = \sum_{i=1}^d f_i(s_i)
$
and each $f_i$ is a non-decreasing real function.
\end{definition}
\vspace{-1mm}

%As a special case, the cosine function $\csim(\ms, \mq)$ is a decomposable function with $f_i(x) = q_i \cdot x$.

Given a decomposable function $F$, the corresponding stopping condition is called a \emph{decomposable condition}, which we define next.
\vspace{-1mm}
\begin{definition}[Decomposable Condition]\label{def:linear}
A decomposable condition $\varphi_F$ is a boolean function
$\varphi_F(\mb) = \big(F(L[\mb])  < \theta \big)$
where $F$ is a decomposable function and $\theta$ is a fixed threshold.
%\yuliang{changed the notation.}
\end{definition}
\vspace{-1mm}

When the unit vector constraint is lifted, the decomposable condition is tight and complete for any scoring function $F$ and threshold $\theta$.
As a result, the goal of designing a traversal strategy for $F$ is to have the access cost
as close as possible to $\opt$ when the stopping condition is $\varphi_F$.

%\yuliang{stopped here.}

\vspace{-1mm}
\subsection{The max-reduction traversal strategy}

%\subsubsection{The hull-based strategy for $\varphi_{\mathsf{SS}}$}

To illustrate the high-level idea of the hull-based approach,
we start with a simple greedy traversal strategy called the \emph{Max-Reduction} traversal strategy $\calT_{\mathsf{MR}}(\cdot)$.
The strategy works as follows: at each snapshot, move the pointer $b_i$ on the inverted list $L_i$ that results in the maximal reduction on the score
$F(L[\mb])$. Formally, we define
% \vspace{-2mm}
$$
\calT_{\mathsf{MR}}(\mb) = \argmax_{1 \leq i \leq d} \left( F(L[\mb]) - F(L[\mb + \me_i]) \right) = \argmax_{1 \leq i \leq d} \left( f_i(L_i[b_i]) - f_i(L_i[b_i + 1]) \right)
% \vspace{-2mm}
$$
where $\me_i$ is the vector with 1 at dimension $i$ and 0's else where.
Such a strategy is reasonable since one would like $F(L[\mb])$ to drop as fast as possible,
so that once it is below $\theta$, the stopping condition $\varphi_F$ will be triggered and terminate the traversal.

% \textcolor{red}{Jianguo: the stopping condition $\varphi$ has 3 sub-conditions, we may need to discuss one by one.}
It is obvious that there are instances where the max-reduction strategy can be far from optimal,
but is it possible that it is optimal under some assumption? The answer is positive:
if for every list $L_i$, the values of $f_i(L_i[b_i])$ are decreasing at decelerating rate,
then we can prove that its access cost is optimal.
We state this ideal assumption next.
\vspace{-1mm}
\begin{assumption}[Ideal Convexity]\label{ass:delta}
For every inverted list $L_i$, let $\Delta_i[j] = f_i(L_i[j]) - f_i(L_i[j + 1])$ for $0 \leq j < |L_i|$.\footnote{Recall that $L_i[0] = 1$.}
The list $L_i$ is ideally convex if the sequence $\Delta_i$ is non-increasing, i.e., $\Delta_i[j+1] \le \Delta_i[j]$ for every $j$.
Equivalently, the piecewise linear function % $f_i: \mathbb{R} \mapsto \mathbb{R}$
passing through the points $\{(j, f_i(L_i[j]))\}_{0 \leq j \leq |L_i|}$ is convex for each $i$.
A database $\db$ is ideally convex if every $L_i$ is ideally convex.
\end{assumption}
\vspace{-1.5mm}

%We define the \emph{similarity-score condition}, denoted as $\varphi_{\mathsf{SS}}(\cdot)$, to be the boolean function
%$\varphi_{\mathsf{SS}}(\mb) = (\mq \cdot L[\mb] < \theta)$ (Equation \ref{equ:ss}).
An example of an inverted list satisfying the above assumption is shown in Figure \ref{fig:convex}(a).
%Then, we state in Theorem~\ref{thm:optimal} the optimality result.
The max-reduction strategy $\calT_{\mathsf{MR}}$ is optimal under the ideal convexity assumption:
\begin{theorem}[Ideal Optimality]\label{thm:optimal}
Given a decomposable function $F$, for every ideally convex database $\db$ and every threshold $\theta$,
the access cost of $\calT_{\mathsf{MR}}$ is exactly $\opt$.
\end{theorem}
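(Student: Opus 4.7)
The plan is to view the gathering phase as a sequential selection problem on a multiset of reduction values and to use the ideal convexity assumption to reduce greedy-max to a global greedy over that multiset. Concretely, after any sequence of moves ending at position $\mb$ with $|\mb|=k$, the total reduction from the starting score satisfies
\[
F(L[\mathbf{0}]) - F(L[\mb]) \;=\; \sum_{i=1}^{d}\sum_{j=0}^{b_i-1} \Delta_i[j],
\]
so the stopping condition $\varphi_F(\mb)$ is first triggered at the smallest $k$ for which we can accumulate enough reduction to push $F(L[\mb])$ below $\theta$. Minimizing $|\mb|$ is therefore equivalent to maximizing the accumulated reduction for each $k$, under the precedence constraint that a value $\Delta_i[j]$ may only be selected after $\Delta_i[0],\dots,\Delta_i[j-1]$.

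\textbf{Key steps.} First I would formalize the above selection interpretation and let $\mb^{(k)}$ denote the position reached by $\calT_{\MR}$ after $k$ steps. Next I would prove the following invariant by induction on $k$: among all position vectors $\mb$ with $|\mb|=k$, the value $F(L[\mb])$ is minimized by $\mb^{(k)}$. The inductive step uses the ideal convexity assumption in a crucial way: at step $t$ the set of candidate next moves for $\calT_{\MR}$ is $\{\Delta_i[b^{(t-1)}_i]\}_{i\in[d]}$, and by the non-increasing property of each $\Delta_i$, any not-yet-selected value $\Delta_i[j]$ with $j\ge b^{(t-1)}_i$ satisfies $\Delta_i[j]\le\Delta_i[b^{(t-1)}_i]$. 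Hence the value picked by $\calT_{\MR}$ at step $t$ is in fact the global maximum of all remaining values in the multiset $\mathcal{M}=\{\Delta_i[j]:1\le i\le d,\,0\le j<|L_i|\}$; iterating, $\calT_{\MR}$ collects the $k$ largest elements of $\mathcal{M}$ and its selection trivially respects the precedence constraint (since larger values precede smaller ones within each list). Any other feasible $\mb$ with $|\mb|=k$ selects $k$ precedence-respecting values of $\mathcal{M}$, whose sum cannot exceed the top-$k$ sum. This yields $F(L[\mb^{(k)}])\le F(L[\mb])$ for every $\mb$ with $|\mb|=k$.

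\textbf{Finishing the argument.} Let $\mb^*$ be an optimal complete position with $|\mb^*|=\opt$, so $F(L[\mb^*])<\theta$. Applying the invariant with $k=\opt$ gives $F(L[\mb^{(\opt)}])\le F(L[\mb^*])<\theta$, hence $\calT_{\MR}$ already satisfies $\varphi_F$ by step $\opt$ and stops with $\cost(\calT_{\MR})\le\opt$. The reverse inequality is immediate from the definition of $\opt$ (no smaller $|\mb|$ can satisfy $\varphi_F$), so $\cost(\calT_{\MR})=\opt$.

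\textbf{Main obstacle.} The only subtle point is the interaction between the per-list precedence constraint and the global greedy: in general, taking the globally largest available reduction at each step is not the same as taking the globally largest $k$ values, because precedence could force early selection of small values. The role of Assumption~\ref{ass:delta} is precisely to eliminate this mismatch — under ideal convexity the $\Delta$'s along each list are already sorted in the direction consistent with the global order, so the greedy-max choice of $\calT_{\MR}$ and the unconstrained top-$k$ choice coincide. Once this point is made carefully, the rest of the proof is a clean inductive/exchange argument.
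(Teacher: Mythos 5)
Your proof is correct and takes essentially the same route as the paper's: both observe that under ideal convexity the greedy choice of $\calT_{\MR}$ collects the $k$ largest values of the multiset $\{\Delta_i[j]\}$ (the precedence constraint being automatically satisfied because each $\Delta_i$ is non-increasing), so $F(L[\mb^{(k)}])$ is minimal among all positions reachable in $k$ steps, and optimality follows. Your write-up is if anything slightly more careful about the telescoping indices and the precedence issue than the paper's own argument.
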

We prove Theorem \ref{thm:optimal} with a simple greedy argument (Appendix \ref{thm:proof-optimal}):
each move of $\calT_{\mathsf{MR}}$ always results in the globally maximal reduction in the scoring function
as guaranteed by the convexity condition. \yuliang{moved the proof to the appendix and replace with the above paragraph. Please check.}

% \begin{proof}
% %We next outline a simple proof.
% Let $\{\mb_t\}_{1 \leq t \leq k}$ be the sequence of position vectors produced by the strategy $\calT_{\mathsf{MR}}$.
% %Let $\delta_{i,j}$ be the value $q_i \cdot \Delta_i[j]$ for $1 \leq i \leq d$ and $0 \leq j < |L_i|$.

% Since each $\Delta_i$ is non-increasing and the strategy $\calT_{\mathsf{MR}}$ chooses the dimension $i$
% with the maximal $\Delta_i[b_i]$, then at each step $t$, the multiset
% $\{ \Delta_i[j] | 1 \leq i \leq d, 0 \leq j \leq \mb_t[i] \}$
% contains the first $t$ largest values of all the $\Delta_i[j]$'s from the multiset
% $\{ \Delta_i[j] | 1 \leq i \leq d, 0 \leq j < |L_i| \}$.
% Since the score $F(L[\mb_t])$ equals
% $$\sum_{i=1}^d f_i(L_i[0]) - \sum_{i=1}^d \sum_{j=1}^{\mb_t[i]} \Delta_i[j] \ , $$
% it follows that for each $\mb_t$ of $\calT_{\mathsf{MR}}$,
% the score $F(L[\mb_t])$ is the lowest score possible for any position vector reachable in $t$ steps.
% Thus, if the optimal access cost $\opt$ is $t$ with an optimal stopping position $\mb_{\opt}$,
% then $\mb_t$, the $t$-th position of $\calT_{\MR}$, satisfies that
% $F(L[\mb_t]) \leq F(L[\mb_{\opt}]) < \theta$. So $\calT_{\mathsf{MR}}$ is optimal.
% \end{proof}

\vspace{-1mm}
\subsection{The hull-based traversal strategy}
\label{sec:near-optimal}

Theorem~\ref{thm:optimal} provides a strong performance guarantee but the ideal convexity assumption is usually not true on real datasets.
Without the ideal convexity assumption, the strategy suffers from the drawback of making locally optimal but globally suboptimal decisions.
The pointer $b_i$ to an inverted list $L_i$ might never be moved if choosing the current $b_i$ only results in a small
decrease in the score $F(L[\mb])$, but there is a much larger decrease several steps ahead.
As a result, the $\calT_{\mathsf{MR}}$ strategy has no performance guarantee in general.

%In most practical scenarios, although the assumption does not hold,
%the values in an inverted list are indeed decreasing quickly at the beginning of the list and the decrease slows down
%as the pointer moves to the tail of the list.
%This is observed in spectrum datasets as illustrated in Figure \ref{fig:inverted-list}.
%when the data strictly follows this distribution, it can be shown that
%the locally optimal choice is also globally optimal.
\begin{figure}[!t]
\centering
\includegraphics[width=.7\columnwidth]{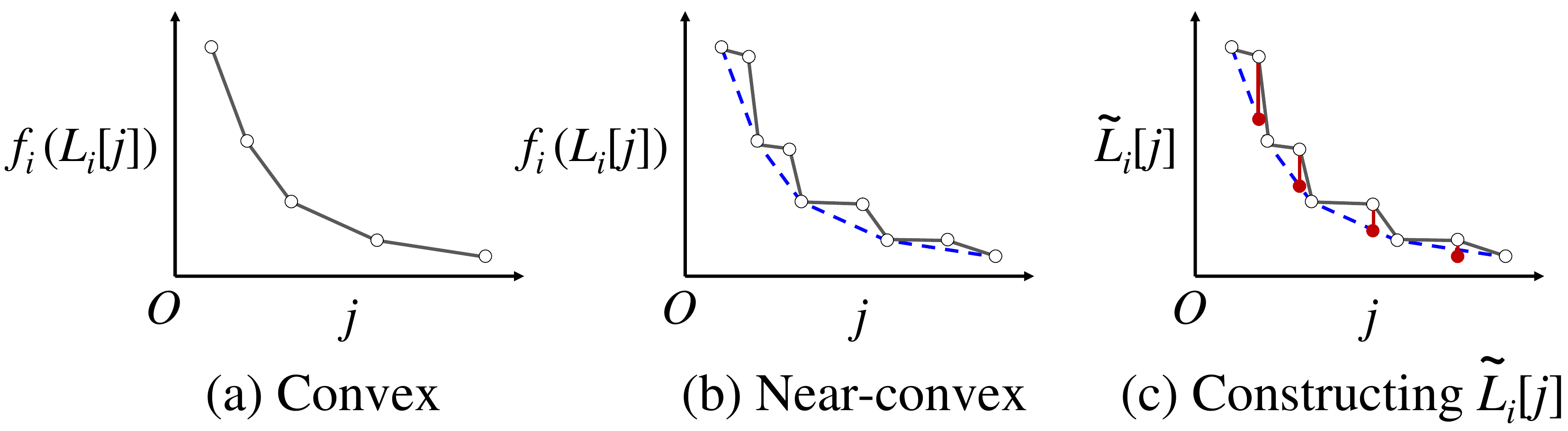}
%\vspace*{-3mm}
\caption{{Convexity and near-convexity}}\label{fig:convex}
\end{figure}

\begin{figure}[!t]
%\centering
\renewcommand{\tabcolsep}{0.1mm}
\begin{tabular}{ccccc}
\includegraphics[width=0.25\columnwidth]{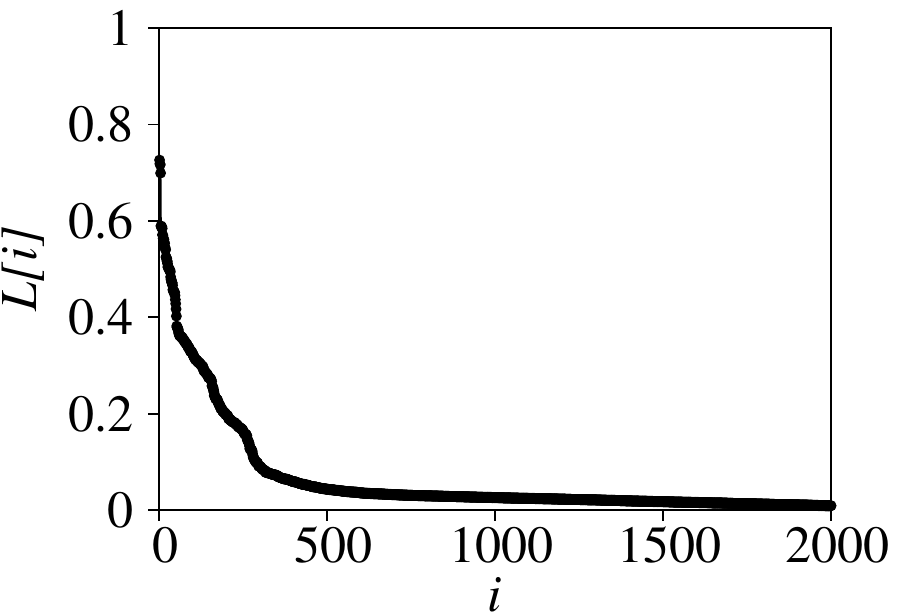}&\includegraphics[width=0.25\columnwidth]{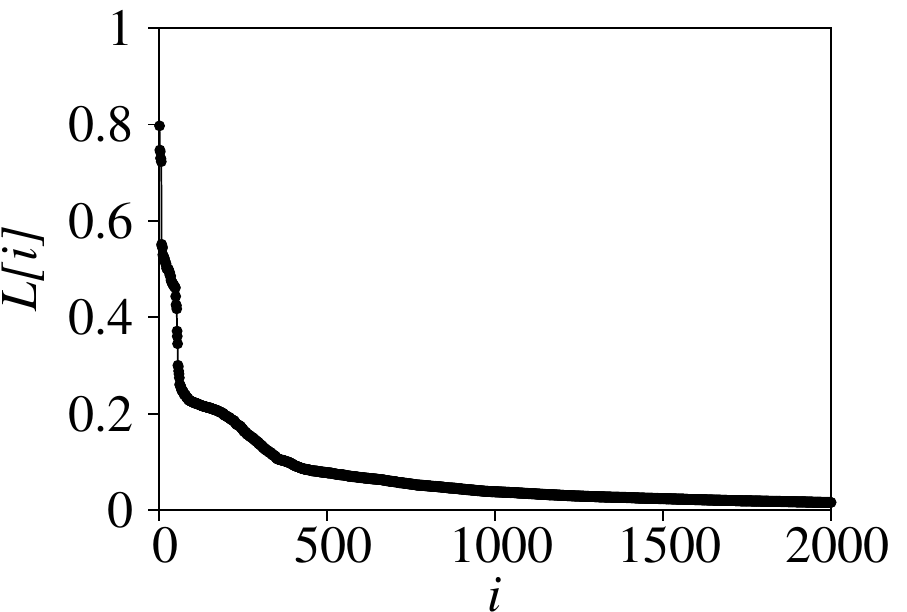}&
\includegraphics[width=0.25\columnwidth]{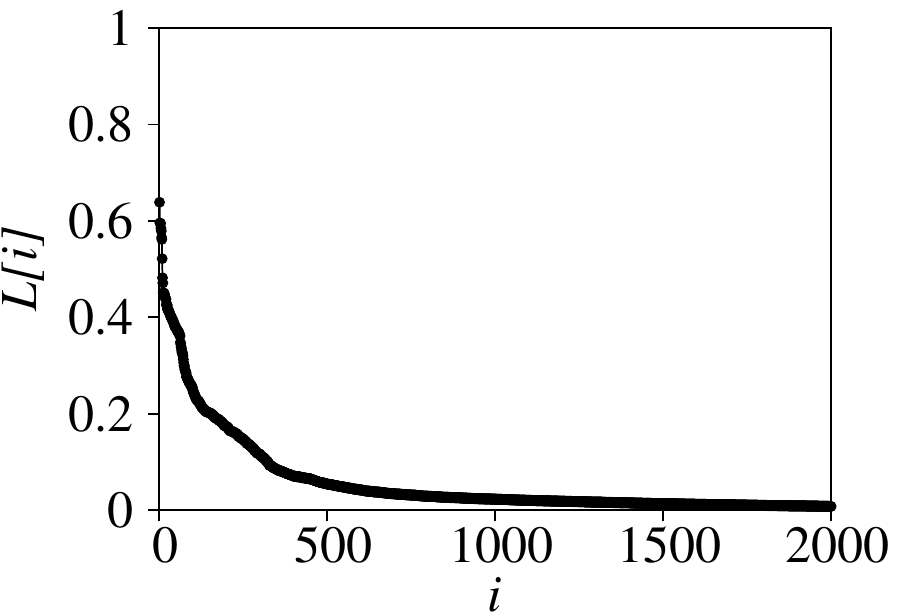}&\includegraphics[width=0.25\columnwidth]{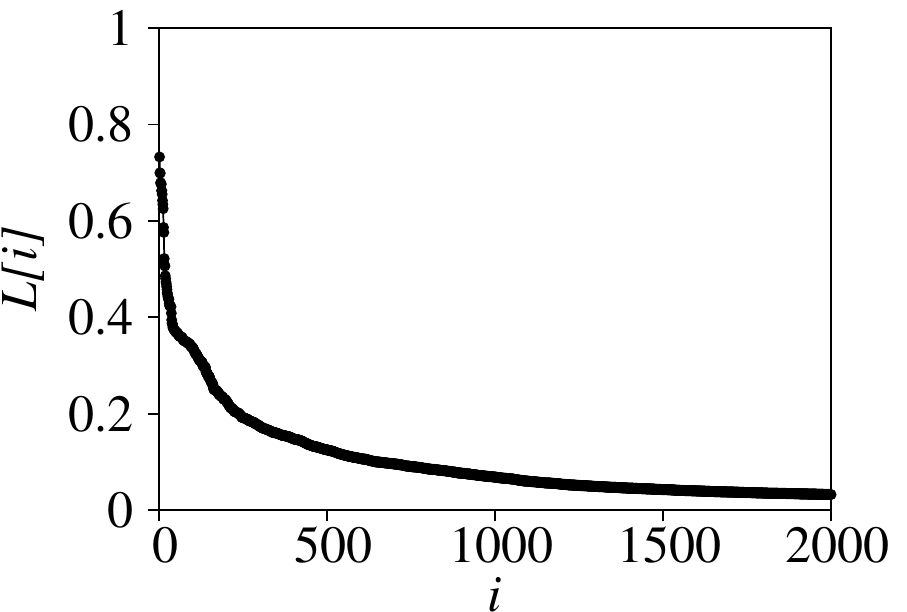}\\
(a) list $L_0$ & (b) list $L_1$ & (c) list $L_2$ & (d) list $L_3$
\end{tabular}
%\vspace*{-3mm}
\caption{The skewed inverted lists in mass spectrometry (See Figure \ref{fig:near-convex-doc} and \ref{fig:near-convex-img}
for similar patterns observed in document and image datasets)}\label{fig:inverted-list}
\vspace{-2mm}
\end{figure}

In most practical scenarios that we have seen, 
we can bring the traversal strategy $\calT_{\MR}$ to practicality by considering a relaxed version of Assumption \ref{ass:delta}.
Informally, instead of assuming that each list $f_i(L_i)$ forms a convex piecewise linear function,
we assume that $f_i(L_i)$ is ``mostly'' convex, meaning that if we compute the \emph{lower convex hull} \cite{de2008computational} of $f_i(L_i)$,
the gap between any two consecutive vertices on the convex hull is small.\footnote{We denote by $f_i(L_i)$ the list $[f_i(L_i[0]), f_i(L_i[1]), \dots]$ for every $L_i$.}
Intuitively, the relaxed assumption implies that the values at each list are decreasing at ``approximately'' decelerating speed.
It allows list segments that do not follow the overall deceleration trend, as long as their lengths are bounded by a constant.
We verified this property in the mass spectrometry dataset as illustrated in Figure \ref{fig:inverted-list},
a document dataset, and an image dataset (Figure \ref{fig:near-convex-doc} and \ref{fig:near-convex-img} in Appendix \ref{sec:more-experiment}).
%We formally state the assumption next.
\vspace{-1mm}
\begin{assumption}[Near-Convexity]\label{ass:delta-approx}
For every inverted list $L_i$, let $\hull_i$ be the lower convex hull of the set
of 2-D points $\{(j, f_i(L_i[j]))\}_{0 \leq j \leq |L_i|}$ represented by a set of indices
$\hull_i = \{j_1, \dots, j_n\}$ % called the \emph{milestones}
where for each $1 \leq k \leq n$, $(j_k, f_i(L_i[j_k]))$ is a vertex of the convex hull.
The list $L_i$ is near-convex if for every $k$, $j_{k + 1} - j_k$ is upper-bounded by some constant $c$.
A database $\db$ is near-convex if every inverted list $L_i$ is near-convex with the same constant $c$,
which we refer to as the convexity constant.
\end{assumption}

\vspace{-1mm}
\begin{example}\label{exm:convex}
Intuitively, the near-convexity assumption captures the case where each $f_i(L_i)$
is decreasing with \emph{approximately} decelerating speed, so the number of points between two convex hull vertices should be small.
For example, when $f_i$ is a linear function, the list $L_i$ shown in Figure \ref{fig:convex}(b)
is near-convex with convexity constant 2 since there is at most 1 point between each pair of consecutive vertices of the convex hull (dotted line).
In the ideal case shown in Figure \ref{fig:convex}(a), the constant is 1
when the decrease between successive values is strictly decelerating. %, it is near-convex with constant equal to 1.
%\textcolor{red}{Jianguo: need to mention Fig \ref{fig:convex}(a) after the definition of ideal convexity.}
%The list $L_1$ is convex which is a special case of near-convexity with constant equal to 1.
%The near-convexity assumption can always be satisfied by increasing the convexity constant.
\end{example}
\vspace{-1mm}

Imitating the max-reduction strategy, for every pair of consecutive indices $j_k, j_{k+1}$ in $\hull_i$ and
for every index $j \in [j_k, j_{k+1})$, let $\tilde{\Delta}_i[j] = \dfrac{f_i(L_i[j_k]) - f_i(L_i[j_{k + 1}])}{j_{k + 1} - j_k}$.
Since the $(j_k, f_i(L_i[j_k]))$'s are vertices of a lower convex hull,
each sequence $\tilde{\Delta}_i$ is non-decreasing.
Then the \emph{hull-based} traversal strategy is simply defined as
% \vspace{-1.5mm}
\begin{equation}
\calT_{\HL}(\mb) = \argmax_{1 \leq i \leq d}(\tilde{\Delta}_i[b_i]).
% \vspace{-2mm}
\end{equation}
% \smallskip
\noindent
\textbf{Remark on data structures}.
In a practical implementation, to answer queries with scoring function $F$ using the hull-based strategy,
the lower convex hulls need to be ready before the traversal starts.
If $F$ is a general function unknown a priori, the convex hulls need to be computed online which is not practical.
Fortunately, when $F$ is the inner product $F(\ms) = \mq \cdot \ms$ parameterized by the query $\mq$,
each convex hull $\hull_i$ is exactly the convex hull for the points $\{(j, L_i[j])\}_{0 \leq i \leq |L_i|}$ from $L_i$.
This is because the slope from any two points $(j, f_i(L_i[j]))$ and $(k, f_i(L_i[k]))$
is $\dfrac{q_i L_i[j] - q_i L_i[k]}{j - k} $, which is exactly the slope from $(j, L_i[j])$ and $(k, L_i[k])$
multiplied by $q_i$.
%\yuliang{I fixed the argument above. Please check.}
So by using the standard convex hull algorithm \cite{de2008computational},
$\hull_i$ can be pre-computed in $\calO(|L_i|)$ time.
Then the set of the convex hull vertices $\hull_i$ can be stored as inverted lists
and accessed for computing the $\tilde{\Delta}_i$'s during query processing.
In the ideal case, $\hull_i$ can be as large as $|L_i|$ but is much smaller in practice.

Moreover, during the traversal using the strategy $\calT_{\HL}$,
choosing the maximum $\tilde{\Delta}_i[b_i] $ at each step can be done in $\calO(\log d)$ time using a max heap.
This satisfies the requirement that the traversal strategy is efficiently computable.

%\textcolor{red}{Jianguo: the most interesting part is hull-based max-score-reduction. But current subsection title is the naive max-score-reduction.}
%
%\textcolor{red}{============Jianguo: stopped reading here....}

\smallskip
\noindent
\textbf{Near-optimality results. }% \label{sec:near-optimal}
We show that the hull-based strategy $\calT_{\HL}$ is 
near-optimal under the near-convexity assumption. 
\vspace{-1.5mm}
\begin{theorem}\label{thm:near-optimal}
Given a decomposable function $F$, for every near-convex database $\db$ and every threshold $\theta$,
the access cost of $\calT_{\HL}$ is strictly less than
$\opt + c$ where $c$ is the convexity constant.
\vspace{-1.5mm}
\end{theorem}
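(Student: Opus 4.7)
My plan is to reduce the analysis to a ``hull-convexified'' decomposable surrogate $\hat F$ on which $\calT_{\HL}$ coincides with the max-reduction strategy $\calT_{\MR}$ of Theorem~\ref{thm:optimal}, and then pay at most an additional $c-1$ steps to translate the stopping criterion back from $\hat F$ to $F$.

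First I would introduce, for each dimension $i$, the piecewise-linear envelope $\hat g_i:\{0,1,\dots,|L_i|\}\to\mathbb{R}$ that linearly interpolates $f_i(L_i[\cdot])$ between consecutive hull vertices in $\hull_i$, and set $\hat F(\mb)=\sum_i \hat g_i(b_i)$. By the defining property of the lower convex hull, $\hat F(\mb)\le F(L[\mb])$ with equality whenever $\mb$ is a \emph{hull position} (every coordinate lies in the corresponding $\hull_i$), and the unit-step decrement $\hat g_i(b)-\hat g_i(b+1)$ equals $\tilde\Delta_i[b]$, which is non-increasing in $b$. Thus $\hat F$ is decomposable and satisfies the ideal convexity of Assumption~\ref{ass:delta}, and the hull rule is literally $\calT_{\MR}$ applied to $\hat F$. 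Theorem~\ref{thm:optimal} then says that $\calT_{\HL}$ is optimal for the stopping condition $\hat F(\mb)<\theta$.

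Let $\mb_{\opt}$ be an optimal position for the original problem, so $|\mb_{\opt}|=\opt$ and $F(L[\mb_{\opt}])<\theta$. Since $\hat F(\mb_{\opt})\le F(L[\mb_{\opt}])<\theta$, the optimum access cost for the surrogate condition is at most $\opt$, and because $\hat F$ is non-increasing along the trajectory we obtain $\hat F(\mb^{(\opt)})<\theta$ at the position $\mb^{(\opt)}$ reached by $\calT_{\HL}$ after exactly $\opt$ steps. The core structural claim, which I expect to be the main technical obstacle, is that at every step at most one coordinate of the current position is strictly inside a hull segment while all other coordinates sit at vertices of the respective $\hull_i$ (under a fixed tie-breaking rule). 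The reason is that while $b_i$ is strictly inside its current segment, $\tilde\Delta_i[b_i]$ is constant and the slopes in the other dimensions have not changed, so the greedy argmax keeps selecting $i$ until $b_i$ reaches the next hull vertex; the delicate case handling concerns ties between dimensions and the boundary instant when a segment ends.

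To finish, I split on $\mb^{(\opt)}$. If it is already a hull position, then $F(L[\mb^{(\opt)}])=\hat F(\mb^{(\opt)})<\theta$ and $\calT_{\HL}$ has already terminated, so the access cost is $\opt<\opt+c$. Otherwise a single coordinate $i^*$ is strictly inside its current hull segment $[j_k,j_{k+1})$; Assumption~\ref{ass:delta-approx} bounds $j_{k+1}-b^{(\opt)}_{i^*}\le c-1$. By the structural claim, the next moves of $\calT_{\HL}$ all advance $b_{i^*}$, so after at most $c-1$ additional steps the strategy reaches a hull position $\mb^+$ with $F(L[\mb^+])=\hat F(\mb^+)\le\hat F(\mb^{(\opt)})<\theta$, at which point the stopping condition fires. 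Therefore the access cost of $\calT_{\HL}$ is strictly less than $\opt+c$.
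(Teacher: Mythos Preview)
Your proof is correct and follows essentially the same approach as the paper: construct the piecewise-linear lower convex hull surrogate $\hat F$, observe that $\calT_{\HL}$ coincides with $\calT_{\MR}$ on $\hat F$ (which is ideally convex), use $\hat F\le F$ with equality at hull positions, and then bridge back to $F$ paying at most one segment of length $\le c$. The paper packages your structural claim as the notion of \emph{boundary positions} and \emph{segments} and argues \emph{backward} from the actual stopping position $\mb_{\stp}$ to the last boundary position $\mb_l$ (showing $|\mb_l|\le\opt$ via the key Lemma~\ref{lem:bstar}, then $|\mb_{\stp}|-|\mb_l|<c$), whereas you argue \emph{forward} from step $\opt$ to the next hull position; the two directions are dual and yield the same bound.
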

When the assumption holds with a small convexity constant,
this near-optimality result provides a much tighter bound 
compared to the $d \cdot \opt$ bound in the $\TA$-inspired baseline.
This is achieved under data assumption and by keeping the convex hulls
as auxiliary data structure, so it does not contradict 
the lower bound results on the approximation ratio \cite{Fagin2001OAA}.

\vspace{-1mm}
\begin{proof}
Let $\calB = \{\mb_i\}_{i \geq 0}$ be the sequence of position vectors generated by $\calT_{\HL}$.
We call a position vector $\mb$ a \emph{boundary position} if every $b_i$ is the index of a vertex of the convex hull $\hull_i$.
Namely, $b_i \in \hull_i$ for every $i \in [d]$.
Notice that if we break ties consistently during the traversal of $\calT_{\HL}$,
then in between every pair of consecutive boundary positions $\mb$ and $\mb'$ in $\calB$,
$\calT_{\HL}(\mb)$ will always be the same index.
We call the subsequence positions $\{\mb_i\}_{l \leq i < r}$ of $\calB$ where $\mb_l = \mb$ and $\mb_r = \mb'$
a \emph{segment} with boundaries $(\mb_l, \mb_r)$.
We show the following lemma.
\vspace{-1mm}
\begin{lemma}\label{lem:bstar}
For every boundary position vector $\mb$ generated by $\calT_{\HL}$,
we have $F(L[\mb]) \leq F(L[\mb^*])$ for every position vector $\mb^*$ where $|\mb^*| = |\mb|$.
\vspace{-1mm}
\end{lemma}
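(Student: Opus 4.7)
The plan is to reduce the claim to a separable concave maximization on the hull envelope of the lists, and then invoke the standard optimality of the greedy rule for such problems. For each list $i$, let $g_i$ denote the piecewise linear interpolation through the hull vertices $\{(j_k, f_i(L_i[j_k])) : j_k \in \hull_i\}$. Because $\hull_i$ is a lower convex hull of the non-increasing sequence $\{(j, f_i(L_i[j]))\}_j$, the function $g_i$ is convex and non-increasing; it lower-bounds the actual sequence, $g_i(b) \leq f_i(L_i[b])$, with equality precisely when $b \in \hull_i$. On each interval $[j_k, j_{k+1})$ the slope of $g_i$ equals $-\tilde{\Delta}_i[b]$, and convexity of $g_i$ translates into $\tilde{\Delta}_i[b]$ being non-increasing in $b$ (the direction of monotonicity I would use here is opposite to how the paragraph above the lemma reads, but the rest of the argument is insensitive to this).

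Since $\mb$ is a boundary position, $b_i \in \hull_i$ for every $i$, so $F(L[\mb]) = \sum_i f_i(L_i[b_i]) = \sum_i g_i(b_i)$. For an arbitrary $\mb^*$ with $|\mb^*| = |\mb|$, the dominance $g_i(b) \leq f_i(L_i[b])$ gives $F(L[\mb^*]) \geq \sum_i g_i(b_i^*)$. It therefore suffices to establish $\sum_i g_i(b_i) \leq \sum_i g_i(b_i^*)$, which, setting $\tilde{R}_i(b) = g_i(0) - g_i(b)$, is the statement that $(b_1,\ldots,b_d)$ maximizes the separable concave sum $\sum_i \tilde{R}_i(b_i^*)$ subject to $\sum_i b_i^* = |\mb|$ over non-negative integers.

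For the final step, I would observe that $\calT_{\HL}$ is precisely the greedy algorithm for this maximization. The marginal gain of advancing the $i$-th coordinate by one unit from $b$ is $\tilde{R}_i(b+1) - \tilde{R}_i(b) = \tilde{\Delta}_i[b]$, and by definition $\calT_{\HL}(\mb) = \argmax_i \tilde{\Delta}_i[b_i]$ picks the coordinate with the largest current marginal. Since each marginal sequence is monotone in $b$, a standard exchange argument on separable concave objectives shows greedy is globally optimal: if some $\mb^*$ strictly outperformed $\mb$, there would be a pair $(i,j)$ with $b_j^* > b_j$ and $b_i^* < b_i$, and moving one unit from $j$ to $i$ would (weakly) improve the objective by concavity; iterating transforms $\mb^*$ into $\mb$ without decreasing the value, which rules out a strict improvement. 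Combined with the reduction above, this yields $F(L[\mb]) \leq F(L[\mb^*])$.

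The delicate point I expect to spend the most time on is that $\calT_{\HL}$ is described locally (one unit of one list at a time), while the greedy schedule above commits to whole hull segments. The two agree because $\tilde{\Delta}_i[b_i]$ is constant inside a hull segment of list $i$ and drops only when a hull vertex is crossed; thus, once $\calT_{\HL}$ enters a segment it keeps advancing the same list until the segment is exhausted, and with a consistent tie-breaking rule the boundary positions it visits are exactly those obtained by consuming whole hull segments in non-increasing order of their average reduction rates $\tilde{\Delta}_i$. Once this bookkeeping is in place, the exchange argument applied to the sequence of chosen segments closes the proof.
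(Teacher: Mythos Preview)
Your proposal is correct and follows essentially the same route as the paper: define the piecewise linear hull envelope $g_i$ (the paper calls it $\tilde{L}_i$), use that it lower-bounds $f_i(L_i[\cdot])$ with equality at hull vertices, and then argue that $\calT_{\HL}$ on the original lists coincides with the max-reduction greedy on the ideally convex envelope, which is optimal. The only cosmetic difference is that the paper packages the last step by invoking its earlier ideal-convexity theorem rather than spelling out the exchange argument, and you are right that the paper's ``non-decreasing'' for $\tilde{\Delta}_i$ is a slip---it should be non-increasing.
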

Intuitively, the above lemma says that if the traversal of $\calT_{\HL}$ reaches a boundary position $\mb$,
then the score $F(L[\mb])$ is the minimal possible score obtained by any traversal sequence of at most $|\mb|$ steps.
We prove Lemma \ref{lem:bstar} by generalizing the greedy argument in the proof of Theorem \ref{thm:optimal}.
More details can be found in Appendix \ref{sec:proof-near-optimal}.

Lemma \ref{lem:bstar} is sufficient for Theorem \ref{thm:near-optimal} because of the following.
Suppose $\mb_{\stp}$ is the stopping position in $\calB$, which means that
$\mb_{\stp}$ is the first position in $\calB$ that satisfies $\varphi_F$ and the access cost is $|\mb_{\stp}|$.
Let $\{\mb_i\}_{l \leq i < r}$ be the segment that contains $\mb_{\stp}$.
Given Lemma \ref{lem:bstar}, Theorem \ref{thm:near-optimal} holds trivially if $\mb_{\stp} = \mb_l$.
It remains to consider the case $\mb_{\stp} \neq \mb_l$. Since the traversal does not stop at $\mb_l$, we have $F(L[\mb_l]) \geq \theta$.
By Lemma \ref{lem:bstar}, $\mb_l$ is the position with minimal $F(L[\cdot])$ obtained in $|\mb_l|$ steps
so $|\mb_l| \leq \opt$.
Since $|\mb_\stp| - |\mb_l| < |\mb_r| - |\mb_l| \leq c$, we have that $|\mb_\stp| < \opt + c$.
We illustrate this in Figure \ref{fig:traversalExample}.
\end{proof}
\vspace{-1mm}

\begin{figure}[!t]
\centering
\renewcommand{\tabcolsep}{0.1mm}
\includegraphics[width=0.48\columnwidth]{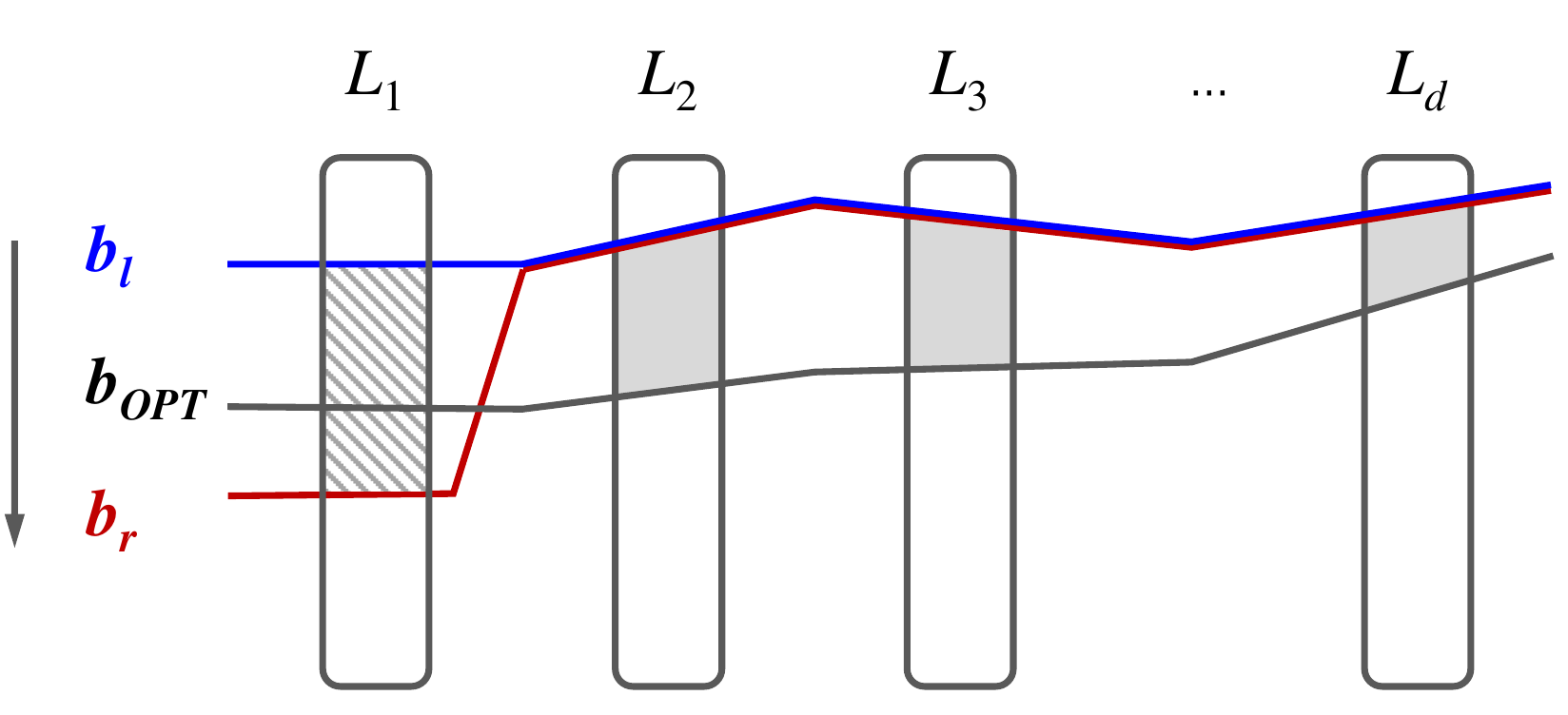}
\caption{($\mb_l$, $\mb_r$): the two boundary positions surrounding the stopping position $\mb_\stp$ of $\calT_{\HL}$; $\mb_{\opt}$: the optimal stopping position; It is guaranteed that
(1) $|\mb_\stp| - |\mb_l| < |\mb_r| - |\mb_l| \leq c$ and
(2) $|\mb_l| < |\mb_\opt|$.}\label{fig:traversalExample}
\vspace{-1mm}
\end{figure}

Since the baseline stopping condition $\varphi_{\BL}$ is tight and complete for inner product queries,
one immediate implication of Theorem \ref{thm:near-optimal} is that
\vspace{-1mm}
\begin{corollary}\label{cor:near-optimal}
(Informal) The hull-based strategy $\calT_{\HL}$ for inner product queries is near-optimal.
\end{corollary}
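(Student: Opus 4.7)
The plan is to derive the corollary as a direct instantiation of Theorem~\ref{thm:near-optimal}. The first step is to verify that the inner product scoring function fits the decomposable framework. For a fixed query vector $\mq$, define $F(\ms) = \mq \cdot \ms = \sum_{i=1}^d q_i s_i$ and set $f_i(s_i) = q_i \cdot s_i$. Because the database and query vectors are non-negative by assumption, each $f_i$ is non-decreasing in $s_i$, so $F$ satisfies Definition~\ref{def:linear-score}. This means $\calT_{\HL}$ and the decomposable stopping condition $\varphi_F(\mb) = \bigl(F(L[\mb]) < \theta\bigr)$ are well-defined and exactly coincide with the hull-based strategy and the baseline stopping condition $\varphi_{\BL}$ used for inner product queries.

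The second step is to reconcile the two notions of optimality. The Corollary is about the optimal access cost for inner product queries, which by Definition~\ref{def:optimal} is measured with respect to the tight and complete stopping condition. The excerpt notes (just before the Corollary) that for inner product queries $\varphi_{\BL}$ is already tight and complete; equivalently, when the unit-vector constraint is dropped, $\varphi_{\TC}$ collapses to $\varphi_{\BL}$. Thus the quantity $\opt$ in Theorem~\ref{thm:near-optimal} (the minimum $|\mb|$ making $\varphi_F$ true) coincides with the $\opt$ for inner product queries in the sense of the Corollary.

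With these two alignments, I would simply invoke Theorem~\ref{thm:near-optimal}: under the near-convexity assumption on $\db$ with convexity constant $c$, the access cost of $\calT_{\HL}$ on input $(\db,\mq,\theta)$ is strictly less than $\opt + c$. Since $c$ is small in practice (as discussed in the paper and verified experimentally in Figure~\ref{fig:inverted-list}), this establishes near-optimality of $\calT_{\HL}$ for inner product queries.

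I do not anticipate a serious technical obstacle here; the corollary is a bookkeeping statement. The only subtlety worth making explicit is the distinction between $\varphi_{\TC}$ and $\varphi_{\BL}$, and the observation that for inner product queries they agree, so that applying Theorem~\ref{thm:near-optimal} indeed yields a bound against the ``true'' optimal cost used in Definition~\ref{def:optimal} rather than a weaker baseline. If one wanted to make the corollary fully formal (hence the ``Informal'' label), one would also state the near-convexity hypothesis explicitly and quantify over $(\db,\mq,\theta)$.
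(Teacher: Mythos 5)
Your proposal is correct and matches the paper's own derivation exactly: the paper likewise obtains the corollary by noting that the inner product is decomposable with non-decreasing components $f_i(s_i)=q_i s_i$ and that $\varphi_{\BL}$ is tight and complete for inner product queries (so the $\opt$ of Theorem~\ref{thm:near-optimal} is the true optimal access cost), then invoking the $\opt+c$ bound. Your explicit reconciliation of the two notions of optimality is the one non-trivial point, and it is the same point the paper makes in the sentence preceding the corollary.
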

\vspace{-1mm}

%\yuliang{The following needs to be changed.}
\smallskip
\noindent
\textbf{Verifying the assumption}.
We demonstrate the practical impact of the near-optimality result in real mass spectrometry datasets. 
The same experiment is repeated on a document and an image dataset in Appendix \ref{sec:more-experiment}.
The near-convexity assumption requires that the gap between any two consecutive convex hull
vertices has bounded size, which is hard to achieve in general.
According to the proof of Theorem \ref{thm:near-optimal}, for a given query,
the difference from the optimal access cost is at most the size of the gap
between the two consecutive convex hull vertices containing the last move of the strategy (the $\mb_l$ and $\mb_r$ in Figure \ref{fig:traversalExample}).
The size of this gap can be much smaller than the global convexity constant $c$,
so the overall precision can be much better in practice.
We verify this by running a set of 1,000 real queries on the dataset\footnote{\url{https://proteomics2.ucsd.edu/ProteoSAFe/index.jsp}}.
\yuliang{To Jianguo: can we move the link to the reference? I think it helps saving some space.}
The gap size is 163.04 in average, which takes only 1.3\% of the overall access cost of traversing the indices.
This indicates that the near-optimality guarantee holds in the mass spectrometry dataset.
Similar results are obtained in a document and an image dataset, where the
gap size takes only 7.9\% and 0.4\% of the overall access cost respectively.
%The distribution of the gap sizes is shown in Figure~\ref{fig:lastGap}.
%We can see that the absolute gap size is no more than 200, which is insignificant (<0.08\%) compared to
%the total size of lists traversed.

\vspace{-1mm}
\subsection{The traversal strategy for cosine}\label{sec:global-optimal}
\vspace{-1mm}

Next, we consider traversal strategies which take into account the unit vector constraint posed by the cosine function,
which means that the tight and complete stopping condition is $\varphi_{\TC}$ introduced in Section \ref{sec:stop}.
However, since the scoring function $\score$ in $\varphi_{\TC}$ is not decomposable,
the hull-based technique cannot be directly applied.
We adapt the technique by approximating the original $\score$ with a decomposable function $\tilde{F}$.
Without changing the stopping condition $\varphi_{\TC}$,
the hull-based strategy can then be applied with the convex hull indices constructed with the approximation $\tilde{F}$.
In the rest of this section, we first generalize the result in Theorem \ref{thm:near-optimal} to scoring functions having
decomposable approximations and show how the hull-based traversal strategy can be adapted.
Next, we show a natural choice of the approximation for $\score$ with practically tight near-optimal bounds.
Finally, we discuss data structures to support fast query processing using the traversal strategy.

We start with some additional definitions.
\vspace{-1mm}
\begin{definition}%[Decomposable Approximation]
A $d$-dimensional function $F$ is decomposably approximable if there exists
a decomposable function $\tilde{F}$, called the \emph{decomposable approximation} of $F$,
and two non-negative constants $\epsilon_1$ and $\epsilon_2$
such that $\tilde{F}(\ms) - F(\ms) \in [-\epsilon_1, \epsilon_2]$ for every vector $\ms$.
\vspace{-1mm}
\end{definition}

When applied to a decomposably approximable function $F$,
the hull-based traversal strategy $\calT_{\HL}$ is adapted by constructing the
convex hull indices and the $\{\tilde{\Delta}_i\}_{1 \leq i \leq d}$
using the approximation $\tilde{F}$. The following can be obtained by generalizing Theorem \ref{thm:near-optimal}:
\vspace{-1mm}
\begin{theorem}\label{thm:near-optimal-approx}
Given a function $F$ approximable by a decomposable function $\tilde{F}$ with constants $(\epsilon_1, \epsilon_2)$,
for every near-convex database $\db$ wrt $\tilde{F}$ and every threshold $\theta$,
the access cost of $\calT_{\HL}$ is strictly less than
$\opt(\theta - \epsilon_1 - \epsilon_2) + c$ where $c$ is the convexity constant.
\vspace{-1mm}
\end{theorem}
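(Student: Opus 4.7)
The plan is to mirror the proof of Theorem \ref{thm:near-optimal}, but apply Lemma \ref{lem:bstar} to the decomposable approximation $\tilde{F}$ (which drives the traversal) while converting between $\tilde{F}$-values and $F$-values (which drive the stopping condition $\varphi_F$) using the sandwich $F(\ms) - \epsilon_1 \le \tilde{F}(\ms) \le F(\ms) + \epsilon_2$ implied by $\tilde{F}(\ms) - F(\ms) \in [-\epsilon_1, \epsilon_2]$.

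First I would observe that, since $\calT_{\HL}$ is defined from the convex hulls of $\{f_i(L_i)\}_{i}$ associated with $\tilde{F} = \sum_i f_i$ and since $\tilde{F}$ is decomposable, Lemma \ref{lem:bstar} applies verbatim to $\tilde{F}$: at every boundary position $\mb$ produced by $\calT_{\HL}$, we have $\tilde{F}(L[\mb]) \le \tilde{F}(L[\mb^{*}])$ for every $\mb^{*}$ with $|\mb^{*}| = |\mb|$. I would also note, for later use, that because each $f_i$ is non-decreasing while each $L_i$ is sorted in descending order, $\tilde{F}(L[\cdot])$ is monotonically non-increasing along componentwise-increasing position sequences; so any $\mb^{*}$ can be padded by advancing pointers to a position $\mb^{**}$ with $|\mb^{**}|$ equal to any prescribed larger value, without raising $\tilde{F}(L[\cdot])$.

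Next, let $\mb_\stp$ be the stopping position of $\calT_{\HL}$, i.e., the first position in the traversal with $F(L[\mb_\stp]) < \theta$, and let $(\mb_l, \mb_r)$ be the consecutive boundary positions (with respect to the $\tilde{F}$-hulls) satisfying $\mb_\stp \in [\mb_l, \mb_r)$. The case $\mb_\stp = \mb_l$ is handled by applying the same argument to the boundary preceding $\mb_l$, exactly as in Theorem \ref{thm:near-optimal}. Otherwise, $\calT_{\HL}$ has not stopped at $\mb_l$, so $F(L[\mb_l]) \ge \theta$; the sandwich then gives $\tilde{F}(L[\mb_l]) \ge F(L[\mb_l]) - \epsilon_1 \ge \theta - \epsilon_1$. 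Choosing $\mb_\opt$ to attain $\opt(\theta - \epsilon_1 - \epsilon_2)$ yields $F(L[\mb_\opt]) < \theta - \epsilon_1 - \epsilon_2$ and therefore $\tilde{F}(L[\mb_\opt]) \le F(L[\mb_\opt]) + \epsilon_2 < \theta - \epsilon_1$. Combining these two estimates produces the key strict inequality $\tilde{F}(L[\mb_\opt]) < \tilde{F}(L[\mb_l])$.

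Finally I would close the argument by contradiction: assume $|\mb_\opt| \le |\mb_l|$. Using the monotone extension observed above, produce $\mb^{**}$ with $|\mb^{**}| = |\mb_l|$ and $\tilde{F}(L[\mb^{**}]) \le \tilde{F}(L[\mb_\opt])$; Lemma \ref{lem:bstar} at the boundary $\mb_l$ then forces $\tilde{F}(L[\mb_l]) \le \tilde{F}(L[\mb^{**}]) \le \tilde{F}(L[\mb_\opt])$, contradicting the strict inequality above. Hence $|\mb_l| < |\mb_\opt| = \opt(\theta - \epsilon_1 - \epsilon_2)$, and combining with the near-convexity gap $|\mb_r| - |\mb_l| \le c$ and $|\mb_\stp| < |\mb_r|$ gives $|\mb_\stp| < \opt(\theta - \epsilon_1 - \epsilon_2) + c$, as required. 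I expect the main obstacle to be the $\epsilon$-accounting: the sandwich must be applied in opposite directions at $\mb_l$ (lower-bounding $\tilde{F}$) and at $\mb_\opt$ (upper-bounding $\tilde{F}$), and it is precisely this asymmetry that forces a threshold shift of $\epsilon_1 + \epsilon_2$ rather than a single $\epsilon$.
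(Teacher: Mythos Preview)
Your proposal is correct and follows essentially the same approach as the paper: apply Lemma~\ref{lem:bstar} to the decomposable approximation $\tilde{F}$, then convert between $F$ and $\tilde{F}$ via the sandwich $\tilde{F}-F\in[-\epsilon_1,\epsilon_2]$ to conclude $|\mb_l|<\opt(\theta-\epsilon_1-\epsilon_2)$. The only cosmetic difference is that the paper proves the universal statement ``for every $\mb^*$ with $|\mb^*|=|\mb_l|$, $F(L[\mb^*])\ge\theta-\epsilon_1-\epsilon_2$'' directly via the chain $F(L[\mb^*])\ge \tilde{F}(L[\mb^*])-\epsilon_2\ge \tilde{F}(L[\mb_l])-\epsilon_2\ge F(L[\mb_l])-\epsilon_1-\epsilon_2\ge\theta-\epsilon_1-\epsilon_2$, whereas you reach the same conclusion by contradiction against a specific $\mb_\opt$ (together with the monotone-extension padding, which the paper leaves implicit); the $\epsilon$-accounting and the use of the convexity gap are identical.
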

\begin{proof}
Recall that $\mb_l$ is the last boundary position generated by $\calT_{\HL}$ that does not satisfy the tight stopping condition for $F$
(which is $\varphi_{\TC}$ when $F$ is $\score$) so $F(L[\mb_l]) \geq \theta$.
It is sufficient to show that for every vector $\mb^*$ where $|\mb^*| = |\mb_l|$, $F(L[\mb^*]) \geq \theta - \epsilon_1 - \epsilon_2$
so no traversal can stop within $|\mb_l|$ steps, implying that the final access cost is no more than $|\mb_l| + c$ which is bounded by
$\opt(\theta - \epsilon_1 - \epsilon_2) + c$.

By Lemma \ref{lem:bstar}, we know that for every such $\mb^*$, $\tilde{F}(L[\mb^*]) \geq \tilde{F}(L[\mb_l])$.
By definition of the approximation $\tilde{F}$, we know that $F(L[\mb^*]) \geq \tilde{F}(L[\mb^*]) - \epsilon_1 $ and
$\tilde{F}(L[\mb_l]) \geq F(L[\mb_l]) - \epsilon_2$.
Combined together, for every $\mb^*$ where $|\mb^*| = |\mb_l|$, we have
\vspace{-1mm}
$$
F(L[\mb^*]) \geq \tilde{F}(L[\mb^*]) - \epsilon_1 \geq \tilde{F}(L[\mb_l]) - \epsilon_1 
            \geq F(L[\mb_l]) - \epsilon_1 - \epsilon_2 \geq \theta - \epsilon_1 -
            \epsilon_2.
\vspace{-1mm}
$$
This completes the proof of Theorem \ref{thm:near-optimal-approx}.
\end{proof}
% \vspace{-2mm}
\smallskip
\noindent
\textbf{Choosing the decomposable approximation. } By Theorem \ref{thm:near-optimal-approx},
it is important to choose an approximation $\tilde{F}$ of $\score$ with small $\epsilon_1$ and $\epsilon_2$
for a tight near-optimality result. By inspecting the formula (\ref{equ:score}) of $\score$, one reasonable choice of $\tilde{F}$
can be obtained by replacing the term $\tau$ with a fixed constant $\tilde{\tau}$. Formally, let
\vspace{-1mm}
\begin{equation}
\tilde{F}(L[\mb]) = \sum_{i=1}^{d} \min\{q_i \cdot \tilde{\tau}, L_i[b_i]\}  \cdot q_i
\vspace{-1mm}
\end{equation}
be the decomposable approximation of $\score$ where each component is a non-decreasing function
$f_i(x) = \min\{q_i \cdot \tilde{\tau}, x\} \cdot q_i$ for $i \in [d]$.

Ideally, the approximation is tight if the constant $\tilde{\tau}$ is close to the final value of $\tau$
which is unknown in advance.
We argue that when $\tilde{\tau}$ is properly chosen,
the approximation parameter $\epsilon_1 + \epsilon_2$ is very small.
With a detailed analysis in Appendix \ref{app:estimation},
we obtain the following upper bound of $\epsilon$:
\vspace{-1mm}
\begin{equation}\label{equ:epsilon_main}
 \epsilon \leq \max\{0, \tilde{\tau} - 1 / \score(L[\mb_l])\} + \score(L[\mb_l]) - \tilde{F}(L[\mb_l]) .
\vspace{-1mm}
\end{equation}

\smallskip
\noindent
\textbf{Verifying the near-optimality. } Next, we verify that the above upper bound of $\epsilon$ is small in practice.
We ran the same set of queries as in Section \ref{sec:near-optimal}
and show the distribution of $\epsilon$'s upper bounds in Figure \ref{fig:epsilon}.
We set $\tilde{\tau} = 1 / \theta$ for all queries so the first term of (\ref{equ:epsilon_main}) becomes zero.
Note that more aggressive pruning can yield better $\epsilon$, but it is not done here for simplicity.
Overall, the fraction of queries with an upper bound $<$0.12 (the sum of the first 3 bars for all $\theta$) is 82.5\%
and the fraction of queries with $\epsilon>0.16$ is $0.5\%$. Similar to the case with inner product queries,
the average of the convexity constant $c$ is 193.39, which is only 4.8\% of the overall access cost.
%\yuliang{Maybe we can say more here.}

%\begin{figure}[tbp]
%\centering
%\includegraphics[width=.48\columnwidth]{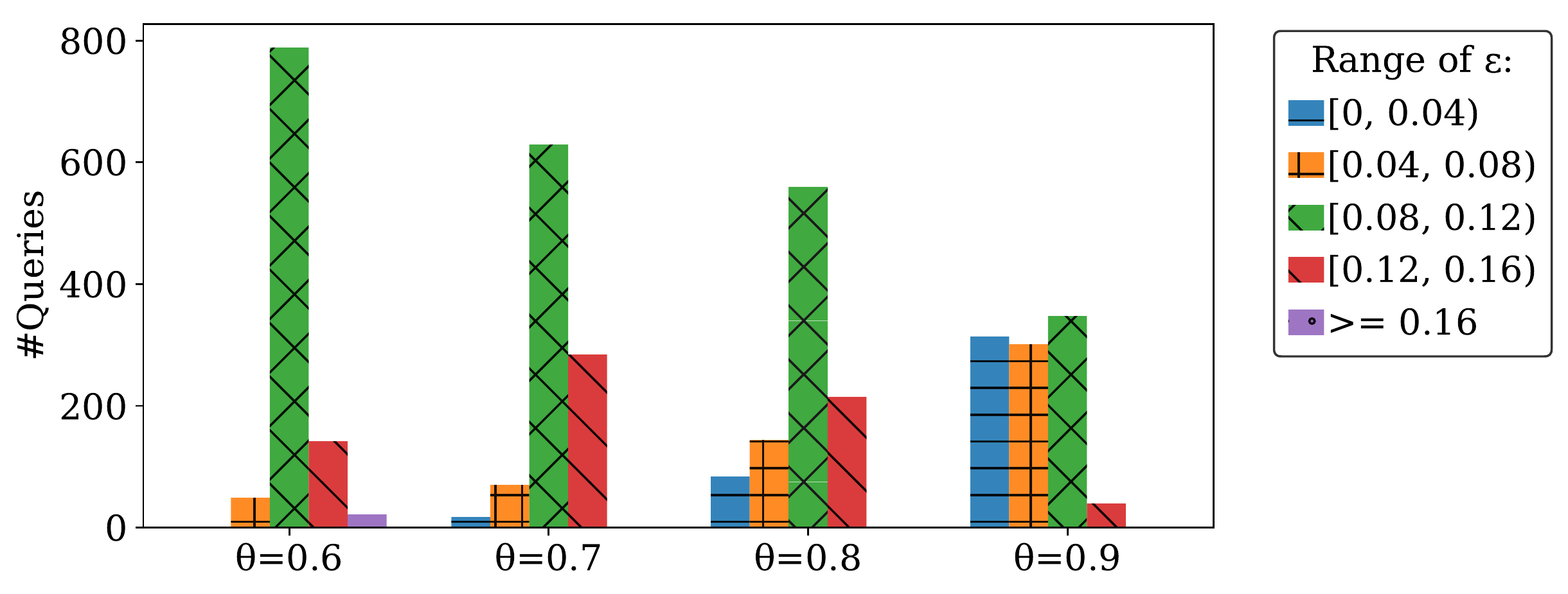}
%\vspace*{-5mm}
%\caption{\textbf{The distribution of $\epsilon$}}\label{fig:epsilon}
%\end{figure}

\begin{figure}[!t]
\centering
\begin{minipage}[t]{0.48\textwidth}
\centering
\includegraphics[width=1.0\columnwidth]{figures/epsilon.pdf}
\caption{{The distribution of $\epsilon$}}\label{fig:epsilon}
\end{minipage}
\begin{minipage}[t]{0.48\textwidth}
\centering
\includegraphics[width=1.0\columnwidth]{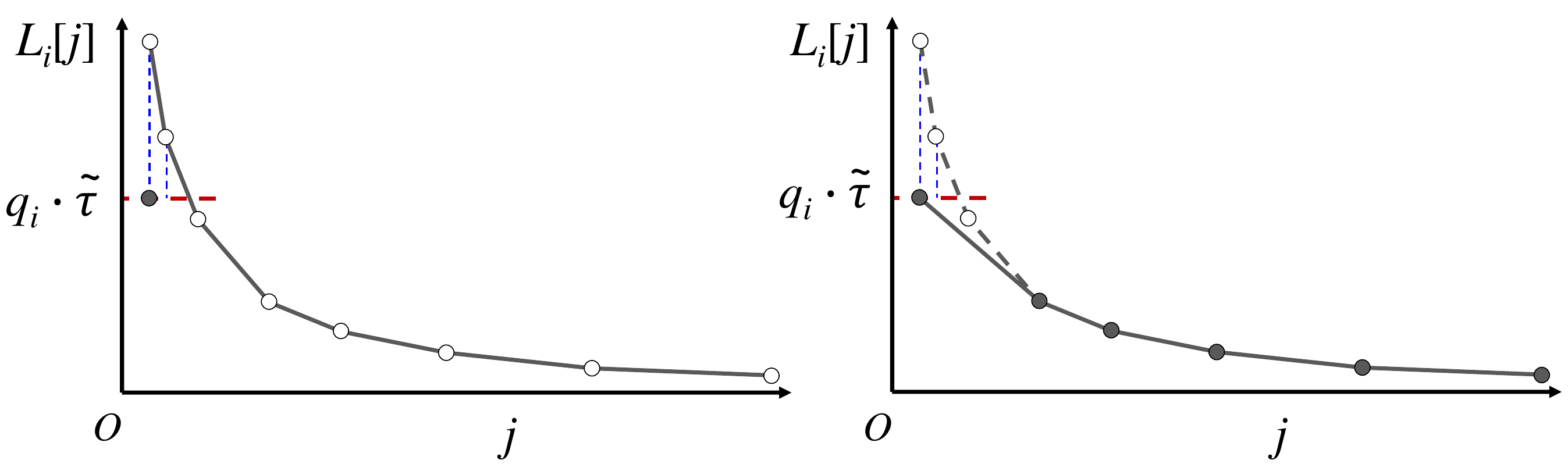}
\caption{{The construction of convex hull $\tilde{\hull}_i$} \yuliang{To Jianguo: can you help enlarge the font size in this figure? THX!}}\label{fig:projection}
\end{minipage}
\end{figure}

\vspace*{1mm}
\noindent
\textbf{Remark on data structures. } Similar to the inner product case,
it is necessary that the convex hulls for $\calT_{\HL}$ can be efficiently obtained without a full computation when a query comes in.
For every $i \in [d]$, we let $\tilde{\hull}_i$ be the convex hull for the $i$-th component $f_i$ of $\tilde{F}$
and $\hull_i$ be the convex hull constructed directly from the original inverted list $L_i$.
Next, we show that each $\tilde{\hull}_i$ can be efficiently obtained from $\hull_i$ during query time
so we only need to pre-compute the $\hull_i$'s.

We observe that when $L_i[b_i] \geq q_i \cdot \tilde{\tau}$, $f_i(L_i[b_i])$ equals a fixed value $q_i^2 \cdot \tilde{\tau}$
otherwise is proportional to $L_i[b_i]$.
As illustrated in Figure \ref{fig:projection} (left), the list of values $\{f_i(L_i[j])\}_{j \geq 0}$ is essentially obtained by
replacing the $L_i[j]$'s greater than $q_i \cdot \tilde{\tau}$ with $q_i \cdot \tilde{\tau}$.

%\vspace*{-1mm}
%\begin{figure}[htbp]
%\centering
%\includegraphics[width=.48\columnwidth]{figures/projection.pdf}
%\vspace*{-5mm}
%\caption{\textbf{The construction of the convex hull %$\tilde{\hull}_i$}}\label{fig:projection}
%\vspace*{1mm}
%\end{figure}

The following can be shown using properties of convex hulls:
\vspace{-1mm}
\begin{lemma}\label{lem:projection}
For every $i \in [d]$, the convex hull $\tilde{\hull}_i$ is a subset of $\hull_i$
where an index $j_k$ of $\hull_i$ is in $\tilde{\hull}_i$ iff $k = 1$ or
\vspace{-1mm}
\begin{equation}\label{equ:projection}
\big(q_i \cdot \tilde{\tau} - L_i[j_k] \big) \ / \ j_k \geq \big(L_i[j_k] - L_i[j_{k+1}] \big) \ / \ (j_{k+1} - j_k) .
\vspace{-1mm}
\end{equation}
\end{lemma}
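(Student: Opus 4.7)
The key observation is that $f_i(x) = \min\{q_i \tilde{\tau}, x\} \cdot q_i$ is a concave, non-decreasing function of $x$. My plan is to separately establish the two halves of the claim: first, the inclusion $\tilde{\hull}_i \subseteq \hull_i$ at the level of index sets, and second, the slope characterization \eqref{equ:projection} of which vertices of $\hull_i$ actually survive into $\tilde{\hull}_i$.

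For the inclusion, I would prove the following general fact: if $g$ is concave and non-decreasing and a point $(j, y_j)$ with $j_k < j < j_{k+1}$ lies on or above the chord through $(j_k, y_{j_k})$ and $(j_{k+1}, y_{j_{k+1}})$, then the image $(j, g(y_j))$ lies on or above the chord through the transformed endpoints. This is immediate: with $\lambda = (j_{k+1}-j)/(j_{k+1}-j_k)$ we have $y_j \geq \lambda y_{j_k} + (1-\lambda) y_{j_{k+1}}$, and applying $g$ yields $g(y_j) \geq g(\lambda y_{j_k} + (1-\lambda) y_{j_{k+1}}) \geq \lambda g(y_{j_k}) + (1-\lambda) g(y_{j_{k+1}})$ by monotonicity and then concavity. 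Since every $j \notin \hull_i$ satisfies the premise with $y_j = L_i[j]$ by definition of the lower convex hull, the transformed point $(j, f_i(L_i[j]))$ lies on or above the transformed $\hull_i$ chain and cannot be a vertex of $\tilde{\hull}_i$.

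To characterize which vertices of $\hull_i$ remain, I would partition $\hull_i$ into the plateau prefix $A = \{j_1,\dots,j_m\}$ where $L_i[j_k] \geq q_i\tilde{\tau}$ (all transformed to the common height $q_i^2\tilde{\tau}$, since $L_i$ is sorted in descending order) and the suffix $B = \{j_{m+1},\dots,j_n\}$ (transformed to $q_i L_i[j_k]$ and still in convex position, because scaling by $q_i > 0$ preserves the non-decreasing-slope property of the $\hull_i$ chain). The leftmost vertex $j_1 = 0$ is always in $\tilde{\hull}_i$, which handles the $k=1$ case. For $k \geq 2$ in $A$, the point $(j_k, q_i^2\tilde{\tau})$ lies strictly above the negative-slope chord from $(0, q_i^2\tilde{\tau})$ to any subsequent $B$-vertex and is hence not in $\tilde{\hull}_i$; direct inspection also shows \eqref{equ:projection} fails because its LHS is $\leq 0$ while its RHS is $\geq 0$. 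For $B$-vertices I claim that $\tilde{\hull}_i = \{j_1\} \cup \{j_{k^*},\dots,j_n\}$ where $k^*$ minimizes, over $k \in B$, the transformed slope from $(0, q_i^2\tilde{\tau})$ to $(j_k, q_i L_i[j_k])$. Multiplying \eqref{equ:projection} by $-q_i$ rewrites it exactly as ``slope$(j_1 \to j_k)$ $\leq$ slope$(j_k \to j_{k+1})$'' in the transformed space, and a short averaging argument (slope$(j_1 \to j_{k+1})$ is a convex combination of slope$(j_1 \to j_k)$ and slope$(j_k \to j_{k+1})$) shows this inequality holds for $k \geq k^*$ and fails for $m+1 \leq k < k^*$.

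The most delicate step will be verifying that this single local inequality, referencing only $j_1$ and $j_{k+1}$ rather than the actual preceding vertex of $\tilde{\hull}_i$, correctly captures membership. The argument rides on the fact that the transformed $B$-chain is already convex, so $\tilde{\hull}_i$ departs from $j_1$ at a unique transition $k^*$ and then follows every subsequent $B$-vertex; the ``decreasing-then-increasing'' behavior of $k \mapsto \mathrm{slope}(j_1, j_k)$ on $B$ forces \eqref{equ:projection} to flip from failing to holding exactly at $k^*$. Boundary cases (no plateau at all, equality $L_i[j_k] = q_i\tilde{\tau}$, or $j_k$ being the last vertex) should be checked separately to align with the non-strict $\geq$ in \eqref{equ:projection} but do not affect the core argument.
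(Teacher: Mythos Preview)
The paper does not actually provide a proof of this lemma; it only states that it ``can be shown using properties of convex hulls'' and then moves directly to the algorithmic consequence (binary search over $\hull_i$). So there is no paper proof to compare against, and your proposal is filling in a gap the authors left.

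Your argument is correct. The concavity/monotonicity observation for $f_i(x)=q_i\min\{q_i\tilde\tau,x\}$ cleanly yields $\tilde\hull_i\subseteq\hull_i$: a point above a chord stays above the transformed chord, and the transformed $\hull_i$ chain dominates the lower hull of the transformed $\hull_i$ vertices, so non-$\hull_i$ indices cannot become vertices of $\tilde\hull_i$. For the characterization, your rewriting of \eqref{equ:projection} as $\sigma_k\le\rho_k$ in the transformed picture (slope from $j_1$ to $j_k$ versus slope from $j_k$ to $j_{k+1}$) together with the averaging identity $\sigma_{k+1}=\tfrac{j_k}{j_{k+1}}\sigma_k+\tfrac{j_{k+1}-j_k}{j_{k+1}}\rho_k$ and the monotonicity of $\rho_k$ gives exactly the ``once it holds, it persists'' behavior you need; this pins down a unique transition index $k^*$, and convexity of the $B$-chain then forces $\tilde\hull_i=\{j_1\}\cup\{j_{k^*},\dots,j_n\}$. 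The plateau case $k\in A$ is handled by the sign analysis you describe. The only places to be explicit in a final write-up are the ones you already flagged: the degenerate equality $L_i[j_k]=q_i\tilde\tau$, and the convention that \eqref{equ:projection} is vacuous at $k=n$ (the rightmost point is always a hull vertex).
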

%A simple proof is provided in Appendix \ref{app:projection}.
Lemma \ref{lem:projection} provides an efficient way to obtain each convex hull $\tilde{\hull}_i$
from the pre-computed $\hull_i$'s.
When a query $\mq$ is given, we perform a binary search on each $\hull_i$ to find the first $j_k \in \hull_i$
that satisfies (\ref{equ:projection}). Then $\tilde{\hull}_i$ is the set of indices $\{0, j_k, j_{k+1} \dots \}$.
We illustrate the construction in Figure \ref{fig:projection} (right).

Suppose that the maximum size of all $\hull_i$ is $h$.
The computation of the $\tilde{\hull}_i$'s adds an extra $\calO(d \log h)$ of overhead to the query processing time,
which is insignificant in practice since $h$ is likely to be much smaller than the size of the database. %\yuliang{Maybe an exact value is better.}

\vspace*{-2mm}
\section{Related work}\label{sec:related}
\vspace*{-2mm}

%In this section, we present the related work on high-dimensional similarity search (Section~\ref{sec:NNS}), cosine similarity search (Section~\ref{sec:cosinejoin}), inner product search (Section~\ref{sec:innerProd}), top-k ranked query processing (Section~\ref{sec:TAtopk}), mass spectrometry search (Section~\ref{sec:mass}). We present keyword search in the appendix (Section~\ref{sec:kwdsearch}).

In this section, we present the main related work and defer the additional related work (e.g., dimensionality reduction, mass spectrometry search, inner product queries) to the appendix (Section~\ref{sec:addRelatedWork}).
%In particular, we summarize the related techniques of cosine threshold querying in Section~\ref{sec:relatedtechniques} and related problems in Section~\ref{sec:relatedproblems}.

%In the literature, cosine threshold querying was studied as a variant of the
%\emph{cosine similarity search} (CSS) problem \cite{Bayardo2007SUP,AnastasiuK14,Anastasiu2015PFP}.
%The cosine threshold queries considered in this work are \emph{exact} and \emph{threshold} CSS queries.
%In other applications, the setting of CSS might be top-k (i.e. finding the $k$ most similar vectors to the query vector)
%or approximate (i.e. not requiring the exact set of answers). \yuliang{To Jianguo: are there other variants?}
%Next, we summarize techniques and optimizations for CSS.

\vspace*{-2mm}
\subsection{Cosine similarity search}\label{sec:relatedtechniques}

The cosine threshold querying studied in this work is a special case of the
\emph{cosine similarity search} (CSS) problem~\cite{Bayardo2007SUP,AnastasiuK14,Anastasiu2015PFP} mentioned in Section~\ref{sec:intro}.
We first survey the techniques developed for CSS.

%The cosine threshold queries considered in this work are \emph{exact} and \emph{threshold} CSS queries.
%In other applications, the setting of CSS might be top-k (i.e. finding the $k$ most similar vectors to the query vector)
%or approximate (i.e. not requiring the exact set of answers).% \yuliang{Added back the above description of CSS.}

%\subsubsection{Cosine similarity search}

%\subsection{Locality-sensitive hashing}
\smallskip
\noindent
\textbf{LSH}.
A widely used technique for cosine similarity search is locality-sensitive hash (LSH)~\cite{Rajaraman2011,Andoni2015,Hu2017OPA,Indyk1998ANN,Tao2009QEH}.
The main idea of LSH is to partition the whole database into buckets using
a series of hash functions such that similar vectors have high probability to be in the same bucket.
However, LSH is designed for \emph{approximate} query processing, meaning that it is not guaranteed to return all the true results. In contrast, this work focuses on exact query processing which returns all the results.

\smallskip
\noindent \textbf{\textsf{TA}-family algorithms}.
Another technique for cosine similarity search is the family of $\TA$-like algorithms.
Those algorithms were originally designed for processing top-k ranking queries that find the top $k$ objects ranked according to an aggregation function
(see \cite{Ilyas2008STK} for a survey).
We have summarized the classic $\TA$ algorithm~\cite{Fagin2001OAA}, presented a baseline algorithm inspired by it,
and explained its shortcomings in Section~\ref{sec:intro}.
The Gathering-Verification framework introduced in Section \ref{sec:framework} captures
the typical structure of the $\TA$-family when applied to our setting.

The variants of \textsf{TA} (e.g., \cite{Guntzer2000OMQ,Bast2006IIO,Deshpande2008,Bruno02})
can have poor or no performance guarantee for cosine threshold queries
since they do not fully leverage the data skewness and the unit vector condition.
For example, G\"{u}ntzer et al. developed \textsf{Quick-Combine}~\cite{Guntzer2000OMQ}.
Instead of accessing all the lists in a lockstep strategy, it relies on a heuristic traversal strategy to
access the list with the highest rate of changes to the ranking function in a fixed number of steps ahead.
It was shown in \cite{fagin2003optimal} that the algorithm is not instance optimal.
Although the hull-based traversal strategy proposed in this paper roughly follows the same idea,
the number of steps to look ahead is variable and determined by the next convex hull vertex.
Thus, for decomposable functions, the hull-based strategy makes globally optimal decisions
and is near-optimal under the near-convexity assumption, while \textsf{Quick-Combine}
has no performance guarantee because of the fixed step size even when the data is near-convex.
Other \textsf{TA} variants are discussed in Appendix~\ref{sec:TAfamily}.

\smallskip
\noindent \textbf{COORD}.
Teflioudi et al. proposed the \textsf{COORD} algorithm based on inverted lists for CSS~\cite{Teflioudi2015LFR,Teflioudi2016}.
The main idea is to scan the whole lists but with an optimization to prune irrelevant entries using upper/lower bounds of the cosine similarity with the query.
%That algorithm finds a feasible range for each vector coordinate using upper/lower bounds of the cosine similarity with the query.
Thus, instead of traversing the whole lists starting from the top, it scans only those entries within a feasible range.
We can also apply such a pruning strategy to the Gathering-Verification framework
by starting the gathering phase at the top of the feasible range.
However, there is no optimality guarantee of the algorithm. Also the optimization only works for high thresholds (e.g., 0.95), which are not always the requirement. For example, a common and well-accepted threshold in mass spectrometry search is 0.6, which is a medium-sized threshold, making the effect of the pruning negligible.

\smallskip
\noindent \textbf{Partial verification}.
Anastasiu and Karypis proposed a technique for fast verification of $\theta$-similarity between two vectors~\cite{AnastasiuK14}
without a full scan of the two vectors.
%It determines $\theta$-similarity by performing a partial scan of the vector entries, using upper/lower bounds of the cosine similarity computed on partial observations of the two vectors.
We can apply the same optimization to the verification phase of the Gathering-Verification framework.
Additionally, we prove that it has a novel near-constant performance guarantee in the presence of data skewness. See Appendix~\ref{sec:verification}.

%Although \textsf{LEMP} proposed an optimization for CSS that does not need to traverse the lists from the top, it does not help much on skewed lists (as concerned in this work) because the top values of a list decrease quickly with a few steps. Moreover, the optimization only works for high thresholds, which are not always the requirement. For example, a common and well-accepted threshold in mass spectrometry search is 0.6, which is a medium-sized threshold.

\smallskip
\noindent \textbf{Other variants}.
%Besides relevant techniques, there are a couple of related problems to cosine similarity querying.
There are several studies focusing on cosine similarity join to find out all pairs of vectors from the database such that their similarity exceeds a given threshold~\cite{Bayardo2007SUP,AnastasiuK14,Anastasiu2015PFP}.
%The main idea is to build inverted lists and prune the vectors that are impossible to be similar to any of the vectors in the database in order to limit the search space.
However, this work is different since the focus is comparing to a given query vector \textbf{q} rather than join.
As a result, the techniques in \cite{Bayardo2007SUP,AnastasiuK14,Anastasiu2015PFP} are not directly applicable: (1) The inverted index is built online instead of offline, meaning that at least one full scan of the whole data is required, which is inefficient for search.%\footnote{The linear scan is feasible for join because the naive approach of join is quadratic.} 
\yuliang{I removed the footnote. Please check whether it is okay.}
(2) The index in ~\cite{Bayardo2007SUP,AnastasiuK14,Anastasiu2015PFP} is built for a fixed query threshold, meaning that the index cannot be used for answering arbitrary query thresholds as concerned in this work.
The theoretical aspects of similarity join were discussed recently in \cite{Ahle2016CIP,Hu2017OPA}.

\vspace*{-2mm}
\subsection{Euclidean distance threshold queries}\label{sec:NNS}
\vspace*{-2mm}
% This work is also related to range query in high-dimensional Euclidean space,

The cosine threshold queries can also be answered by techniques for distance threshold queries (the threshold variant of nearest neighbor search)
in Euclidean space. This is because there is a one-to-one mapping between the cosine similarity $\theta$ and the Euclidean distance $r$ for unit vectors, i.e., $r = 2\sin(\arccos(\theta) / 2)$. Thus, finding vectors that are $\theta$-similar to a query vector is equivalent to finding the vectors whose Euclidean distance is within $r$.
%High dimensional similarity search (or nearest neighbor search) has been extensively studied in the past~\cite{Indyk1998ANN,Bohm2001SHS,Chen2017PMI,Samet2005FMM,Mingjie2018,Liaw2010FEK}
%since it has a wide range of applications, e.g., document similarity search.
%For normalized vectors, there is a one-to-one mapping between the cosine similarity and the Euclidean distance,\footnote{For cosine similarity $\theta$ and Euclidean distance $r$ of two vectors, we have $r = 2\sin(\arccos(\theta) / 2)$.}
%thus the cosine threshold problem can be reduced to high-dimensional similarity search.\footnote{Note that we are aware of the discussion of the meaningfulness of high dimensional similarity search~\cite{BeyerGRS99}. It is meaningful in mass spectrometry search because vectors are correlated in the sense that they are unit vectors.}
Next, we review exact approaches for distance queries
while leaving the discussion of approximate approaches in the appendix (Section~\ref{sec:approSearch}).
%Depending on whether all the true results are returned or not, existing approaches can be broadly classified into exact approaches  and approximate approaches. Next, we review exact approaches where this work focuses on while leaving the discussion of approximate approaches (e.g., LSH~\cite{Indyk1998ANN}) in the appendix (Section~\ref{sec:approSearch}).
%\subsubsection{Exact approaches}\label{sec:exact}
There are four main types of techniques for exact approaches: tree-based indexing, pivot-based indexing, clustering, and dimensionality reduction (See Appendix \ref{app:dimensionality-reduction}).

\smallskip
\noindent\textbf{Tree-based indexing}.
Several tree-based indexing techniques (such as R-tree, KD-tree, Cover-tree~\cite{Beygelzimer2006}) were developed for range queries
(so they can also be applied to distance queries), see~\cite{Bohm2001SHS} for a survey.
However, they are not scalable to high dimensions (say thousands of dimensions as studied in this work)
due to the well known dimensionality curse issue~\cite{Weber1998QAP}.

\smallskip
\noindent\textbf{Pivot-based indexing}. The main idea is to
pre-compute the distances between data vectors and a set of selected pivot vectors. Then during query processing, use triangle inequalities to prune irrelevant vectors~\cite{Chen2017PMI,HristidisKP01}.
However, it does not scale in high-dimensional space as shown in \cite{Chen2017PMI} since it requires a large space to store the pre-computed distances.

%the common similarity threshold in mass spectrometry is medium (e.g., 0.6), thus, a large number of pivots is required that causes space overhead to be even higher.
%the space overhead is even higher when the required cosine similarity is medium (such as 0.6),
%which is common in mass spectrometry search, because of more pivots to be stored. %\textcolor{red}{note: need to mention $\theta\ge 0.6$ in problem definition.}

\smallskip
\noindent \textbf{Clustering-based (or partitioning-based) methods}. The main idea of clustering is to partition the database vectors
into smaller clusters of vectors during indexing. Then during query processing,
irrelevant clusters are pruned via the triangle inequality~\cite{Samet2005FMM,Ramaswamy2011ACD}.
%However, only a small portion of clusters can be pruned especially when the required cosine similarity is medium.
Clustering is an optimization orthogonal to the proposed techniques, as they can be used to process vectors
within each cluster to speed up the overall performance.

\vspace*{-2mm}
\section{Conclusion}\label{sec:conclusion}
\vspace*{-2mm}
%\yuliang{work in progress ...}
%\textcolor{red}{to change...}
In this work, we proposed optimizations to the index-based, \textsf{TA}-like algorithms for answering
the cosine threshold queries, which lie at the core of numerous applications.
% which arises in important applications like Mass Spectrometry.
The novel techniques include a complete and tight stopping condition computable incrementally in $\calO(\log d)$ time and
a family of convex hull-based traversal strategies with near-optimality guarantees
for a larger class of decomposable functions beyond cosine.
With these techniques, we show near-optimality first for inner-product threshold queries, 
then extend the result to the full cosine threshold queries using approximation.
These results are significant improvements over a baseline approach inspired by the classic $\TA$ algorithm.
In addition, we have verified with experiments on real data
the assumptions required by the near-optimality results. %, bringing the proposed techniques much closer to practicality.

\vspace*{-2mm}
\bibliographystyle{plainurl}
\bibliography{paper.bib}

\begin{thebibliography}{10}

\bibitem{Aebersold16}
Ruedi Aebersold and Matthias Mann.
\newblock Mass-spectrometric exploration of proteome structure and function.
\newblock {\em Nature}, 537:347--355, 2016.

\bibitem{Ahle2016CIP}
Thomas~Dybdahl Ahle, Rasmus Pagh, Ilya Razenshteyn, and Francesco Silvestri.
\newblock On the complexity of inner product similarity join.
\newblock In {\em PODS}, pages 151--164, 2016.

\bibitem{Akbarinia2007BPA}
Reza Akbarinia, Esther Pacitti, and Patrick Valduriez.
\newblock Best position algorithms for top-k queries.
\newblock In {\em VLDB}, pages 495--506, 2007.

\bibitem{AnastasiuK14}
David~C. Anastasiu and George Karypis.
\newblock {L2AP}: Fast cosine similarity search with prefix {L-2} norm bounds.
\newblock In {\em ICDE}, pages 784--795, 2014.

\bibitem{Anastasiu2015PFP}
David~C. Anastasiu and George Karypis.
\newblock {PL2AP}: Fast parallel cosine similarity search.
\newblock In {\em IA3}, pages 8:1--8:8, 2015.

\bibitem{Andoni2015}
Alexandr Andoni, Piotr Indyk, Thijs Laarhoven, Ilya Razenshteyn, and Ludwig
  Schmidt.
\newblock Practical and optimal lsh for angular distance.
\newblock In {\em NIPS}, pages 1225--1233, 2015.

\bibitem{Andre2015CLE}
Fabien Andr{\'e}, Anne-Marie Kermarrec, and Nicolas Le~Scouarnec.
\newblock Cache locality is not enough: High-performance nearest neighbor
  search with product quantization fast scan.
\newblock {\em PVLDB}, 9(4):288--299, 2015.

\bibitem{Arora2018}
Akhil Arora, Sakshi Sinha, Piyush Kumar, and Arnab Bhattacharya.
\newblock {HD-Index}: Pushing the scalability-accuracy boundary for approximate
  knn search in high-dimensional spaces.
\newblock {\em PVLDB}, 11(8):906--919, 2018.

\bibitem{Bast2006IIO}
Holger Bast, Debapriyo Majumdar, Ralf Schenkel, Martin Theobald, and Gerhard
  Weikum.
\newblock Io-top-k: Index-access optimized top-k query processing.
\newblock In {\em VLDB}, pages 475--486, 2006.

\bibitem{Bayardo2007SUP}
Roberto~J. Bayardo, Yiming Ma, and Ramakrishnan Srikant.
\newblock Scaling up all pairs similarity search.
\newblock In {\em WWW}, pages 131--140, 2007.

\bibitem{Beygelzimer2006}
Alina Beygelzimer, Sham Kakade, and John Langford.
\newblock Cover trees for nearest neighbor.
\newblock In {\em ICML}, pages 97--104, 2006.

\bibitem{Bohm2001SHS}
Christian B\"{o}hm, Stefan Berchtold, and Daniel~A. Keim.
\newblock Searching in high-dimensional spaces: Index structures for improving
  the performance of multimedia databases.
\newblock {\em CSUR}, 33(3):322--373, 2001.

\bibitem{convexopt}
Stephen Boyd and Lieven Vandenberghe.
\newblock {\em Convex optimization}.
\newblock Cambridge university press, 2004.

\bibitem{Broder2003EQE}
Andrei~Z. Broder, David Carmel, Michael Herscovici, Aya Soffer, and Jason Zien.
\newblock Efficient query evaluation using a two-level retrieval process.
\newblock In {\em CIKM}, pages 426--434, 2003.

\bibitem{Bruno02}
Nicolas Bruno, Luis Gravano, and Am{\'e}lie Marian.
\newblock Evaluating top-k queries over web-accessible databases.
\newblock In {\em ICDE}, pages 369--380, 2002.

\bibitem{Chakrabarti11}
Kaushik Chakrabarti, Surajit Chaudhuri, and Venkatesh Ganti.
\newblock Interval-based pruning for top-k processing over compressed lists.
\newblock In {\em ICDE}, pages 709--720, 2011.

\bibitem{Chen2017PMI}
Lu~Chen, Yunjun Gao, Baihua Zheng, Christian~S. Jensen, Hanyu Yang, and Keyu
  Yang.
\newblock Pivot-based metric indexing.
\newblock {\em PVLDB}, 10(10):1058--1069, 2017.

\bibitem{Craig2006}
R~Craig, J~C Cortens, D~Fenyo, and R~C Beavis.
\newblock {Using annotated peptide mass spectrum libraries for protein
  identification}.
\newblock {\em Journal of Proteome Research}, 5(8):1843--1849, 2006.

\bibitem{Cui2010INA}
Bin Cui, Jiakui Zhao, and Gao Cong.
\newblock {ISIS}: A new approach for efficient similarity search in sparse
  databases.
\newblock In {\em DASFAA}, pages 231--245, 2010.

\bibitem{CurtinGR13}
Ryan~R. Curtin, Alexander~G. Gray, and Parikshit Ram.
\newblock Fast exact max-kernel search.
\newblock In {\em SDM}, pages 1--9, 2013.

\bibitem{Dasari2012}
Surendra Dasari, Matthew~C Chambers, Misti~A Martinez, Kristin~L Carpenter,
  Amy-Joan~L Ham, Lorenzo~J Vega-Montoto, and David~L Tabb.
\newblock Pepitome: Evaluating improved spectral library search for
  identification complementarity and quality assessment.
\newblock {\em Journal of Proteome Research}, 11(3):1686--95, 2012.

\bibitem{de2008computational}
Mark De~Berg, Otfried Cheong, Marc Van~Kreveld, and Mark Overmars.
\newblock {\em Computational Geometry: Introduction}.
\newblock Springer, 2008.

\bibitem{Deshpande2008}
Prasad~M Deshpande, Deepak P, and Krishna Kummamuru.
\newblock Efficient online top-k retrieval with arbitrary similarity measures.
\newblock In {\em EDBT}, pages 356--367, 2008.

\bibitem{Ding2011FTD}
Shuai Ding and Torsten Suel.
\newblock Faster top-k document retrieval using block-max indexes.
\newblock In {\em SIGIR}, pages 993--1002, 2011.

\bibitem{Dong2011EKN}
Wei Dong, Charikar Moses, and Kai Li.
\newblock Efficient k-nearest neighbor graph construction for generic
  similarity measures.
\newblock In {\em WWW}, pages 577--586, 2011.

\bibitem{Dutta07}
Debojyoti Dutta and Ting Chen.
\newblock Speeding up tandem mass spectrometry database search: metric
  embeddings and fast near neighbor search.
\newblock {\em Bioinformatics}, 23(5):612--618, 2007.

\bibitem{EghbaliT16}
Sepehr Eghbali and Ladan Tahvildari.
\newblock Cosine similarity search with multi index hashing.
\newblock {\em CoRR}, abs/1610.00574, 2016.
\newblock URL: \url{http://arxiv.org/abs/1610.00574}.

\bibitem{Eng1994}
Jimmy~K. Eng, Ashley~L. McCormack, and John~R. Yates.
\newblock An approach to correlate tandem mass spectral data of peptides with
  amino acid sequences in a protein database.
\newblock {\em Journal of the American Society for Mass Spectrometry},
  5(11):976--989, 1994.

\bibitem{Fagin2001OAA}
Ronald Fagin, Amnon Lotem, and Moni Naor.
\newblock Optimal aggregation algorithms for middleware.
\newblock In {\em PODS}, pages 102--113, 2001.

\bibitem{fagin2003optimal}
Ronald Fagin, Amnon Lotem, and Moni Naor.
\newblock Optimal aggregation algorithms for middleware.
\newblock {\em JCSS}, 66(4):614--656, 2003.

\bibitem{Fraccaro2016IPM}
Marco Fraccaro, Ulrich Paquet, and Ole Winther.
\newblock Indexable probabilistic matrix factorization for maximum inner
  product search.
\newblock In {\em AAAI}, pages 1554--1560, 2016.

\bibitem{FuWC17}
Cong Fu, Changxu Wang, and Deng Cai.
\newblock Fast approximate nearest neighbor search with navigating
  spreading-out graphs.
\newblock {\em CoRR}, abs/1707.00143, 2017.

\bibitem{Guntzer2000OMQ}
Ulrich G\"{u}ntzer, Wolf-Tilo Balke, and Werner Kiebling.
\newblock Optimizing multi-feature queries for image databases.
\newblock In {\em VLDB}, pages 419--428, 2000.

\bibitem{Houle2015}
Michael~E. Houle and Michael Nett.
\newblock Rank-based similarity search: Reducing the dimensional dependence.
\newblock {\em PAMI}, 37(1):136--150, 2015.

\bibitem{HristidisKP01}
Vagelis Hristidis, Nick Koudas, and Yannis Papakonstantinou.
\newblock {PREFER:} {A} system for the efficient execution of multi-parametric
  ranked queries.
\newblock In {\em SIGMOD}, pages 259--270, 2001.

\bibitem{Hu2017OPA}
Xiao Hu, Yufei Tao, and Ke~Yi.
\newblock Output-optimal parallel algorithms for similarity joins.
\newblock In {\em PODS}, pages 79--90, 2017.

\bibitem{Ilyas2008STK}
Ihab~F. Ilyas, George Beskales, and Mohamed~A. Soliman.
\newblock A survey of top-k query processing techniques in relational database
  systems.
\newblock {\em CSUR}, 40(4):1--58, 2008.

\bibitem{Indyk1998ANN}
Piotr Indyk and Rajeev Motwani.
\newblock Approximate nearest neighbors: Towards removing the curse of
  dimensionality.
\newblock In {\em ICDT}, pages 604--613, 1998.

\bibitem{Jegou2011PQN}
Herve Jegou, Matthijs Douze, and Cordelia Schmid.
\newblock Product quantization for nearest neighbor search.
\newblock {\em PAMI}, 33(1):117--128, 2011.

\bibitem{Jin2011EGE}
Wen Jin and Jignesh~M. Patel.
\newblock Efficient and generic evaluation of ranked queries.
\newblock In {\em SIGMOD}, pages 601--612, 2011.

\bibitem{Johnson1986}
William~B. Johnson, Joram Lindenstrauss, and Gideon Schechtman.
\newblock Extensions of lipschitz maps into banach spaces.
\newblock {\em Israel Journal of Mathematics}, 54(2):129--138, 1986.

\bibitem{karp1972reducibility}
Richard~M Karp.
\newblock Reducibility among combinatorial problems.
\newblock In {\em Complexity of computer computations}, pages 85--103.
  Springer, 1972.

\bibitem{Keivani17}
Omid Keivani, Kaushik Sinha, and Parikshit Ram.
\newblock Improved maximum inner product search with better theoretical
  guarantees.
\newblock In {\em IJCNN}, pages 2927--2934, 2017.

\bibitem{MSFragger}
{Andy T.} Kong, {Felipe V.} Leprevost, {Dmitry M.} Avtonomov, Dattatreya
  Mellacheruvu, and {Alexey I.} Nesvizhskii.
\newblock Msfragger: Ultrafast and comprehensive peptide identification in mass
  spectrometry-based proteomics.
\newblock {\em Nature Methods}, 14:513--520, 2017.

\bibitem{kkt}
Harold~W Kuhn and Albert~W Tucker.
\newblock Nonlinear programming.
\newblock In {\em Traces and Emergence of Nonlinear Programming}, pages
  247--258. 2014.

\bibitem{Kulis09}
B.~Kulis and K.~Grauman.
\newblock Kernelized locality-sensitive hashing for scalable image search.
\newblock In {\em ICCV}, pages 2130--2137, 2009.

\bibitem{PMIC200600625}
Henry Lam, Eric~W. Deutsch, James~S. Eddes, Jimmy~K. Eng, Nichole King,
  Stephen~E. Stein, and Ruedi Aebersold.
\newblock Development and validation of a spectral library searching method for
  peptide identification from ms/ms.
\newblock {\em Proteomics}, 7(5), 2007.

\bibitem{Lee2012EDL}
Jongwuk Lee, Hyunsouk Cho, and Seung-won Hwang.
\newblock Efficient dual-resolution layer indexing for top-k queries.
\newblock In {\em ICDE}, pages 1084--1095, 2012.

\bibitem{Lempitsky2012}
Victor Lempitsky.
\newblock The inverted multi-index.
\newblock In {\em CVPR}, pages 3069--3076, 2012.

\bibitem{Li2002CAS}
Chen Li, Edward Chang, Hector Garcia-Molina, and Gio Wiederhold.
\newblock Clustering for approximate similarity search in high-dimensional
  spaces.
\newblock {\em TKDE}, 14(4):792--808, 2002.

\bibitem{Li2017FFE}
Hui Li, Tsz~Nam Chan, Man~Lung Yiu, and Nikos Mamoulis.
\newblock Fexipro: Fast and exact inner product retrieval in recommender
  systems.
\newblock In {\em SIGMOD}, pages 835--850, 2017.

\bibitem{Wenhai18}
Wenhai Li, Lingfeng Deng, Yang Li, and Chen Li.
\newblock Zigzag: Supporting similarity queries on vector space models.
\newblock In {\em SIGMOD}, 2018.

\bibitem{LianC09}
Xiang Lian and Lei Chen.
\newblock General cost models for evaluating dimensionality reduction in
  high-dimensional spaces.
\newblock {\em TKDE}, 21(10):1447--1460, 2009.

\bibitem{Liaw2010FEK}
Yi-Ching Liaw, Maw-Lin Leou, and Chien-Min Wu.
\newblock Fast exact k nearest neighbors search using an orthogonal search
  tree.
\newblock {\em PR}, 43(6):2351--2358, 2010.

\bibitem{M08}
Christopher~D. Manning, Prabhakar Raghavan, and Hinrich Schtze.
\newblock {\em Introduction to Information Retrieval}.
\newblock Cambridge University Press, 2008.

\bibitem{Muja14}
Marius Muja and David~G. Lowe.
\newblock Scalable nearest neighbor algorithms for high dimensional data.
\newblock {\em PAMI}, 36(11):2227--2240, 2014.

\bibitem{MussmannE16}
Stephen Mussmann and Stefano Ermon.
\newblock Learning and inference via maximum inner product search.
\newblock In {\em ICML}, pages 2587--2596, 2016.

\bibitem{Jianbin18}
Jianbin Qin, Yaoshu Wang, Chuan Xiao, Wei Wang, Xuemin Lin, and Yoshiharu
  Ishikawa.
\newblock {GPH}: Similarity search in hamming space.
\newblock In {\em ICDE}, 2018.

\bibitem{Rajaraman2011}
Anand Rajaraman and Jeffrey~David Ullman.
\newblock {\em Mining of Massive Datasets}.
\newblock Cambridge University Press, 2011.

\bibitem{Ram2012MIS}
Parikshit Ram and Alexander~G. Gray.
\newblock Maximum inner-product search using cone trees.
\newblock In {\em SIGKDD}, pages 931--939, 2012.

\bibitem{Ramaswamy2011ACD}
Sharadh Ramaswamy and Kenneth Rose.
\newblock Adaptive cluster distance bounding for high-dimensional indexing.
\newblock {\em TKDE}, 23(6):815--830, 2011.

\bibitem{Samet2005FMM}
Hanan Samet.
\newblock {\em Foundations of Multidimensional and Metric Data Structures}.
\newblock Morgan Kaufmann Publishers Inc., 2005.

\bibitem{Shen15}
Fumin Shen, Wei Liu, Shaoting Zhang, Yang Yang, and Heng~Tao Shen.
\newblock Learning binary codes for maximum inner product search.
\newblock In {\em ICCV}, pages 4148--4156, 2015.

\bibitem{Silpa08}
Chanop Silpa-Anan and Richard Hartley.
\newblock Optimised kd-trees for fast image descriptor matching.
\newblock In {\em CVPR}, pages 1--8, 2008.

\bibitem{ac0481046}
Wilfred~H. Tang, Benjamin~R. Halpern, Ignat~V. Shilov, Sean~L. Seymour, Sean~P.
  Keating, Alex Loboda, Alpesh~A. Patel, Daniel~A. Schaeffer, and Lydia~M.
  Nuwaysir.
\newblock Discovering known and unanticipated protein modifications using ms/ms
  database searching.
\newblock {\em Analytical Chemistry}, 77(13):3931--3946, 2005.

\bibitem{Tao2009QEH}
Yufei Tao, Ke~Yi, Cheng Sheng, and Panos Kalnis.
\newblock Quality and efficiency in high dimensional nearest neighbor search.
\newblock In {\em SIGMOD}, pages 563--576, 2009.

\bibitem{Teflioudi2016}
Christina Teflioudi and Rainer Gemulla.
\newblock Exact and approximate maximum inner product search with lemp.
\newblock {\em TODS}, 42(1):5:1--5:49, 2016.

\bibitem{Teflioudi2015LFR}
Christina Teflioudi, Rainer Gemulla, and Olga Mykytiuk.
\newblock Lemp: Fast retrieval of large entries in a matrix product.
\newblock In {\em SIGMOD}, pages 107--122, 2015.

\bibitem{Wang2010}
Jian Wang, Josu{\'{e}} P{\'{e}}rez-Santiago, Jonathan~E Katz, Parag Mallick,
  and Nuno Bandeira.
\newblock {Peptide identification from mixture tandem mass spectra.}
\newblock {\em MCP}, 9(7):1476--1485, 2010.

\bibitem{Wang2017MIL}
Jianguo Wang, Chunbin Lin, Ruining He, Moojin Chae, Yannis Papakonstantinou,
  and Steven Swanson.
\newblock {MILC}: Inverted list compression in memory.
\newblock {\em PVLDB}, 10(8):853--864, 2017.

\bibitem{pr400230p}
Mingxun Wang and Nuno Bandeira.
\newblock Spectral library generating function for assessing spectrum-spectrum
  match significance.
\newblock {\em Journal of Proteome Research}, 12(9):3944--3951, 2013.

\bibitem{Mingxun2016}
Mingxun Wang, Jeremy~J. Carver, and Nuno Bandeira.
\newblock {Sharing and community curation of mass spectrometry data with global
  natural products social molecular networking}.
\newblock {\em Nature Biotechnology}, 34(8):828--837, 2016.

\bibitem{Yiqiu18}
Yiqiu Wang, Anshumali Shrivastava, Jonathan Wang, and Junghee Ryu.
\newblock Randomized algorithms accelerated over cpu-gpu for ultra-high
  dimensional similarity search.
\newblock In {\em SIGMOD}, 2018.

\bibitem{Weber1998QAP}
Roger Weber, Hans-J\"{o}rg Schek, and Stephen Blott.
\newblock A quantitative analysis and performance study for similarity-search
  methods in high-dimensional spaces.
\newblock In {\em VLDB}, pages 194--205, 1998.

\bibitem{Wu2014FUL}
Yubao Wu, Ruoming Jin, and Xiang Zhang.
\newblock Fast and unified local search for random walk based
  k-nearest-neighbor query in large graphs.
\newblock In {\em SIGMOD}, pages 1139--1150, 2014.

\bibitem{Xin2007PSM}
Dong Xin, Jiawei Han, and Kevin~C. Chang.
\newblock Progressive and selective merge: Computing top-k with ad-hoc ranking
  functions.
\newblock In {\em SIGMOD}, pages 103--114, 2007.

\bibitem{ac980122y}
John~R. Yates, Scott~F. Morgan, Christine~L. Gatlin, Patrick~R. Griffin, and
  Jimmy~K. Eng.
\newblock Method to compare collision-induced dissociation spectra of peptides:
  Potential for library searching and subtractive analysis.
\newblock {\em Analytical Chemistry}, 70(17):3557--3565, 1998.

\bibitem{Yu2014}
Albert Yu, Pankaj~K. Agarwal, and Jun Yang.
\newblock Top-k preferences in high dimensions.
\newblock In {\em ICDE}, pages 748--759, 2014.

\bibitem{Zhang2016LAT}
Shile Zhang, Chao Sun, and Zhenying He.
\newblock Listmerge: Accelerating top-k aggregation queries over large number
  of lists.
\newblock In {\em DASFAA}, pages 67--81, 2016.

\bibitem{Zhang2006BRQ}
Zhen Zhang, Seung-won Hwang, Kevin Chen-Chuan Chang, Min Wang, Christian~A.
  Lang, and Yuan-chi Chang.
\newblock Boolean + ranking: Querying a database by k-constrained optimization.
\newblock In {\em SIGMOD}, pages 359--370, 2006.

\end{thebibliography}

\appendix

\section*{Appendix}

\section{Review of $\TA$} \label{app:ta}

On a database $\db$ of $d$-dimensional vectors $\{\ms_1, \dots, \ms_n\}$, 
given a query monotonic scoring function $F : \mathbb{R}^d \mapsto \mathbb{R}$
and a query parameter $k$, the Threshold Algorithm ($\TA$) computes the $k$ database vectors with the highest score $F(\ms)$.
First, $\TA$ preprocesses the vector database by building an inverted index $\{L_i\}_{1 \leq i \leq d}$ 
where each $L_i$ is an inverted list contains pairs of $(\mathsf{ref}(\ms), \ms[i])$ where $\mathsf{ref}(\ms)$
is a reference of the vector $\ms$ and $\ms[i]$ is the $i$-th dimension $\ms$. Each $L_i$ is sorted in descending order of $\ms[i]$.
When a query $(F, k)$ arrives, $\TA$ proceeds as follows. It maintains a pointer $b$ starting from 0 to all the inverted lists
and increments $b$ iteratively. At each iteration:
\begin{itemize}
    \item Collect the set $\mathcal{C}$ of candidates of all references in $L_i$ upto position $b$ for all dimension $i$.
    Namely, $\mathcal{C} = \bigcup_{i=1}^d \{\mathsf{ref}(\ms) | (\mathsf{ref}(\ms), \ms[i]) \in L_i[0, \dots, b] \}$.
    \item Compute $F_k$ the $k$-th highest score $F(\ms)$ for all $\mathsf{ref}(\ms) \in \mathcal{C}$ by accessing $\ms$ in the database with the reference.
    \item If the score $F_k$ is no less than $F(L_1[b], \dots, L_d[b])$, return the top-$k$ highest score vectors in $\mathcal{C}$;
        otherwise continue to the next iteration with $b \leftarrow b + 1$.
\end{itemize}
By monotonicity of the function $F$, once the stopping condition is satisfied, it is guaranteed that no vector $\ms$ below the pointer $b$
can have $F(\ms)$ above the current $k$-th highest score. Thus the candidate set $\mathcal{C}$ contains the complete set of all 
the $k$ highest score vectors in the database.

\subsection{The $\TA$-inspired Baseline}

Next, we show the baseline stopping condition and traversal strategy inspired by the $\TA$ for cosine threshold queries.
%The inverted lists index structure $\{L_i\}_{1 \leq i \leq d}$ is reviewed in Section \ref{sec:framework}.
As already reviewed in Section \ref{sec:intro}, the gathering phase stops when the cosine function score at the current position $\mb$
is below the input threshold $\theta$. Formally, the baseline stopping condition $\varphi_{\BL}$ is
\begin{equation}\label{equ:baseline}
\varphi_{\BL}(\mb) = \left( \sum_{i=1}^d q_i \cdot L_i[b_i] < \theta \right) .
\end{equation}
To determine the order of accessing the inverted lists, $\TA$ advances all the pointers at equal rate.
An obvious optimization is to move only the pointers whose dimension has non-zero values in $\mq$.
Let $\mathsf{nz} = \{ i | i \in [d] \land q_i > 0\}$ be the list of all non-zero dimensions and let $m = | \mathsf{nz} |$.
The baseline traversal strategy $\calT_{\BL}$ is the following:
\begin{equation}
\calT_{\BL}(\mb) = \mathsf{nz}[ \ |\mb| \ \mathsf{ mod } \ m \ ] ,
\end{equation}
so that each inverted list of a non-zero dimension is advanced at equal speed.
By the classic result of \cite{Fagin2001OAA},
\begin{theorem}
For inner product threshold queries (without the normalization constraint), the access cost of the $\TA$-inspired baseline is at most $m \cdot \opt$
where $m$ is the query's number of non-zero dimensions and $\opt$ is the optimal access cost.
\end{theorem}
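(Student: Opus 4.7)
The plan is to adapt Fagin's classical instance-optimality argument for $\TA$ to the cosine-threshold setting with the baseline stopping condition $\varphi_{\BL}$ from equation (\ref{equ:baseline}). The key observation I would rely on is that, for inner product queries (no normalization constraint), $\varphi_{\BL}$ is both complete and tight: the scoring function $F(\ms)=\mq\cdot\ms$ is monotone in each coordinate and the constraint $s_i\le L_i[b_i]$ alone is enough to bound $F$ at unseen vectors. So I can reason purely about stopping positions of $\varphi_{\BL}$, without revisiting the KKT machinery used for $\varphi_{\TC}$.

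First I would fix an optimal stopping vector $\mb^*$, meaning $\varphi_{\BL}(\mb^*)=\mathtt{True}$ and $|\mb^*|=\sum_i b_i^* = \opt$. Since advancing a pointer in any dimension $i$ with $q_i=0$ does not change $\sum_i q_i L_i[b_i]$, I can assume without loss of generality that $b_i^*=0$ for every $i\notin\mathsf{nz}$, so in fact $\sum_{i\in\mathsf{nz}} b_i^* \le \opt$. Let $b^\star=\max_{i\in\mathsf{nz}} b_i^*$; trivially $b^\star\le\opt$.

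Next I would track the baseline's progress. Because $\calT_{\BL}$ cycles through $\mathsf{nz}$ in round-robin order, after exactly $m\cdot b^\star$ accesses the baseline position $\mb$ satisfies $b_i=b^\star$ for every $i\in\mathsf{nz}$ and $b_i=0$ elsewhere. In particular $b_i\ge b_i^*$ on every dimension. Since each $L_i$ is sorted in descending order and every $q_i\ge 0$, monotonicity gives
\[
\sum_{i=1}^d q_i\cdot L_i[b_i]\;\le\;\sum_{i=1}^d q_i\cdot L_i[b_i^*]\;<\;\theta,
\]
the last inequality being $\varphi_{\BL}(\mb^*)$. Thus the stopping condition fires no later than position $m\cdot b^\star$, so $\cost(\calT_{\BL})\le m\cdot b^\star\le m\cdot\opt$.

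The argument is short once one grants the two ingredients above, so I do not expect a real obstacle — the only step that needs care is the reduction to non-zero dimensions (to conclude $b^\star\le\opt$ rather than just $b^\star\le\opt\cdot\text{something}$), and the observation that the round-robin schedule dominates $\mb^*$ coordinatewise after $m\cdot b^\star$ steps. A minor remark worth adding is that the bound is tight: the standard $\TA$ lower-bound instances transfer directly, since they can be realized with non-negative entries and a query of unit weight on $m$ coordinates, showing the factor $m$ cannot be improved in the worst case by any deterministic $\TA$-like algorithm that respects the baseline stopping rule.
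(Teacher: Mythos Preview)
Your argument is correct and is precisely the standard Fagin instance-optimality argument adapted to the threshold setting: pick an optimal stopping position, observe that the round-robin traversal dominates it coordinatewise after at most $m\cdot\max_i b_i^*\le m\cdot\opt$ steps, and conclude by monotonicity of $\varphi_{\BL}$. The paper does not spell out a proof of this theorem at all---it simply attributes it to the classic result of Fagin et al.\ \cite{Fagin2001OAA}---so your write-up is exactly the kind of detail the citation is meant to cover.
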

For cosine threshold queries, since the baseline condition $\varphi_{\BL}$ is not tight (Theorem \ref{thm:TAnotTight}),
the same instance optimality bound does not hold.

\section{The Verification Phase}\label{sec:verification}

%\yuliang{Need to rewrite this section to reflect the contributions.}

Next, we discuss optimizations in the verification phase where
each gathered candidate is tested for $\theta$-similarity with the query.
The naive approach of verification is to fully access all the non-zero entries of each candidate $\textbf{s}$ to compute
the exact similarity score $\csim(\mathbf{q}, \ms)$ and compare it against $\theta$, which takes $\calO(d)$ time.
Various techniques have been proposed \cite{Teflioudi2016,Anastasiu2015PFP,Li2017FFE} to
decide $\theta$-similarity by leveraging only partial information about the candidate vector
so the exact computation of $\csim(\mathbf{q}, \ms)$ can be avoided.
In this section, we revisit these existing techniques which we call \emph{partial verification}.
In addition, as a novel contribution, we show that in the presence of data skewness,
partial verification can have a near-constant performance guarantee (Theorem \ref{thm:verification}).

Informally, while a vector $\ms$ is scanned, based on what has been observed in $\ms$ so far,
it is possible to infer that
\begin{itemize}\parskip=0pt
\item[(1)] the similarity score $\csim(\textbf{q}, \ms)$ is certainly at least $\theta$ or
\item[(2)] the similarity score is certainly below $\theta$.
\end{itemize}
In either case, we can stop without scanning the rest of $\ms$ and
return an accurate verification result.
The problem is formally defined as follows:
\begin{problem}[Partial Verification]\label{pbm:verification}
A partially observed vector $\tilde{\ms}$ is a $d$-dimensional vector in $(\mathbb{R}_{\geq 0} \cup \{\bot\})^d$.
Given a query $\mathbf{q}$ and a partially observed vector $\tilde{\ms}$,
compute whether for every vector $\ms$ where $\ms[i] = \tilde{\ms}[i]$ for every $\tilde{\ms}[i] \neq \bot$,
it is $\csim(\ms, \mq) \geq \theta$.
\end{problem}
% that can come from both the gathering phase and the verification phase
Intuitively, a partially observed vector $\tilde{\ms}$ contains the entries of a candidate $\ms$
either already observed during the gathering phase, or accessed for the actual values during the verification phase.
The unobserved dimensions are replaced with a null value $\bot$. We say that a vector $\ms$
is compatible with a partially observed one $\tilde{\ms}$ if $\ms[i] = \tilde{\ms}[i]$ for every dimension $i$ where $\tilde{\ms}[i] \neq \bot$.

The partial verification problem can be solved by computing an upper and a lower bound of the cosine similarity between $\ms$ and $\mathbf{q}$
when $\tilde{\ms}$ is observed. We denote by $ub(\tilde{\ms})$ and $lb(\tilde{\ms})$ the upper bound and lower bound, so
$ub(\tilde{\ms}) = \max_{\ms} \{\ms \cdot \mq\}$ and $lb(\tilde{\ms}) = \min_{\ms} \{ \ms \cdot \mq\}$
where the maximum/minimum are taken over all $\ms$ compatible with $\tilde{\ms}$.
By \cite{Teflioudi2016,Anastasiu2015PFP,Li2017FFE}, the upper/lower bounds can be computed as follows:
\begin{lemma}\label{lem:upper-lower}
Given a partially observed vector $\tilde{\ms}$ and a query vector $\mathbf{q}$,
\begin{align}
&ub(\tilde{\ms}) = \sum_{\tilde{\ms}[i] \neq \bot} \tilde{\ms}[i] \cdot \mathbf{q}[i] +
\sqrt{1 - \sum_{\tilde{\ms}[i] \neq \bot} \tilde{\ms}[i]^2} \cdot \sqrt{1 - \sum_{\tilde{\ms}[i] \neq \bot} \mathbf{q}[i]^2} \ , \\
&\text{and}  \nonumber \\
&lb(\tilde{\ms}) = \sum_{\tilde{\ms}[i] \neq \bot} \tilde{\ms}[i] \cdot \mathbf{q}[i] +
\sqrt{1 - \sum_{ \tilde{\ms}[i] \neq \bot} \tilde{\ms}[i]^2} \cdot \min_{\tilde{\ms}[i] = \bot} \mathbf{q}[i] \ .
\end{align}
\end{lemma}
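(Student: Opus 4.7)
\textbf{Proof proposal for Lemma \ref{lem:upper-lower}.} The plan is to decompose the inner product into an observed part, which is fixed by $\tilde{\ms}$, and an unobserved part, which varies over all unit non-negative completions; then maximize/minimize the unobserved part separately. Let $O = \{i : \tilde{\ms}[i] \neq \bot\}$ and $U = \{i : \tilde{\ms}[i] = \bot\}$, and write $A = \sum_{i \in O} \tilde{\ms}[i]^2$ and $B = \sum_{i \in O} \mq[i]^2$. Any $\ms$ compatible with $\tilde{\ms}$ must satisfy $\ms[i] \ge 0$ for all $i$, $\ms[i] = \tilde{\ms}[i]$ for $i \in O$, and (since $\ms$ is a unit vector) $\sum_{i \in U} \ms[i]^2 = 1 - A$. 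The dot product splits as
\begin{equation*}
\ms \cdot \mq \;=\; \sum_{i \in O} \tilde{\ms}[i] \cdot \mq[i] \;+\; \sum_{i \in U} \ms[i] \cdot \mq[i],
\end{equation*}
so it suffices to extremize $T := \sum_{i \in U} \ms[i] \cdot \mq[i]$ under the constraints $\ms[i] \ge 0$ for $i \in U$ and $\sum_{i \in U} \ms[i]^2 = 1 - A$.

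For the upper bound, I would apply the Cauchy--Schwarz inequality on the coordinates in $U$:
\begin{equation*}
T \;\le\; \sqrt{\sum_{i \in U} \ms[i]^2} \cdot \sqrt{\sum_{i \in U} \mq[i]^2} \;=\; \sqrt{1 - A} \cdot \sqrt{1 - B},
\end{equation*}
using that $\mq$ is a unit vector and hence $\sum_{i \in U} \mq[i]^2 = 1 - B$. This matches the claimed formula for $ub(\tilde{\ms})$. Tightness follows by exhibiting an explicit completion that attains equality: set $\ms[i] = \alpha \cdot \mq[i]$ for $i \in U$, with $\alpha = \sqrt{(1 - A)/(1 - B)}$ (interpreted as $0$ if $B = 1$). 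Since $\mq \ge 0$ componentwise, this $\ms$ is non-negative, has the correct squared-norm on $U$, and achieves the Cauchy--Schwarz bound.

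For the lower bound, the non-negativity of both $\ms$ and $\mq$ makes every summand in $T$ non-negative, so intuitively $T$ is minimized by concentrating all the remaining $L_2$ mass on whichever coordinate $i^\star \in U$ minimizes $\mq[i]$. Formally, let $\mu = \min_{i \in U} \mq[i]$. Then
\begin{equation*}
T \;=\; \sum_{i \in U} \ms[i] \cdot \mq[i] \;\ge\; \mu \sum_{i \in U} \ms[i] \;\ge\; \mu \sqrt{\sum_{i \in U} \ms[i]^2} \;=\; \mu \sqrt{1 - A},
\end{equation*}
where the first inequality uses $\mq[i] \ge \mu$ and $\ms[i] \ge 0$, and the second is the standard bound $\sum x_i \ge \sqrt{\sum x_i^2}$ when the $x_i$'s are non-negative and at least one is non-zero (or trivially when all are zero). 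Choosing $\ms[i^\star] = \sqrt{1-A}$ and $\ms[i] = 0$ for the other $i \in U$ attains equality, yielding the stated formula for $lb(\tilde{\ms})$.

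The only delicate point is the achievability of the two extrema, since we must verify that the constructed completions are genuine unit non-negative vectors compatible with $\tilde{\ms}$; the non-negativity of $\mq$ (which in turn comes from the paper's standing assumption that all database and query vectors are non-negative) is exactly what makes both constructions legal. Handling the degenerate corner cases where $1 - A = 0$ or $U = \emptyset$ is routine and just needs a brief sentence noting that both bounds then collapse to $\sum_{i \in O} \tilde{\ms}[i] \mq[i]$.
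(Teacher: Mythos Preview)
The paper does not actually supply a proof of this lemma: immediately after the statement it notes that the argument is available in prior work and was removed from the text. So there is no in-paper proof to compare against.

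Your argument is correct and is the standard one behind those references. The upper bound is exactly Cauchy--Schwarz on the unobserved block, with tightness via the proportional (non-negative) completion. The lower bound is also fine: you use $\mq[i]\ge\mu$ together with the $\ell_1$--$\ell_2$ inequality $\sum_{i\in U}\ms[i]\ge\sqrt{\sum_{i\in U}\ms[i]^2}$ for non-negative reals, and then exhibit the single-coordinate completion attaining the value. One cosmetic remark: in your $\ell_1$--$\ell_2$ step the parenthetical ``when at least one is non-zero'' is unnecessary, since the inequality $\bigl(\sum x_i\bigr)^2=\sum x_i^2+2\sum_{i<j}x_ix_j\ge\sum x_i^2$ holds for all non-negative tuples, including the zero vector; you already handle $1-A=0$ separately anyway.
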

\yuliang{I removed the proof of the Lemma since they have been provided in previous work.}

\begin{example}
Figure~\ref{fig:verifyExample} shows an example of computing the lower/upper bounds of a partially scanned vector $\ms$ with a query $\mq$.
Assume the first three dimensions have been scanned, then the lower bound and upper bound are computed as follows:
\begin{displaymath}
  \begin{aligned}
    lb &= 0.8 \times 0 + 0.4 \times 0.7 + 0.3 \times 0.5 = 0.43\\
    ub &= lb + \sqrt{1-0.8^2-0.4^2-0.3^2}\cdot\sqrt{1-0.7^2-0.5^2}=0.6
  \end{aligned}
\end{displaymath}
If the threshold $\theta$ is 0.7, then it is certain that $\ms$ is not $\theta$-similar to $\mq$ because $0.6 < 0.7$.
The verification algorithm can stop and avoid the rest of the scan.
Note that when $\mq$ is sparse, most of the time we have $lb(\tilde{\ms}) = \sum_{\tilde{\ms}[i] \neq \bot} \tilde{\ms}[i] \cdot \mathbf{q}[i]$
due to the existence of 0's.
\end{example}

\begin{figure}[htbp]
\centering
\renewcommand{\tabcolsep}{0.1mm}
\includegraphics[width=0.6\columnwidth]{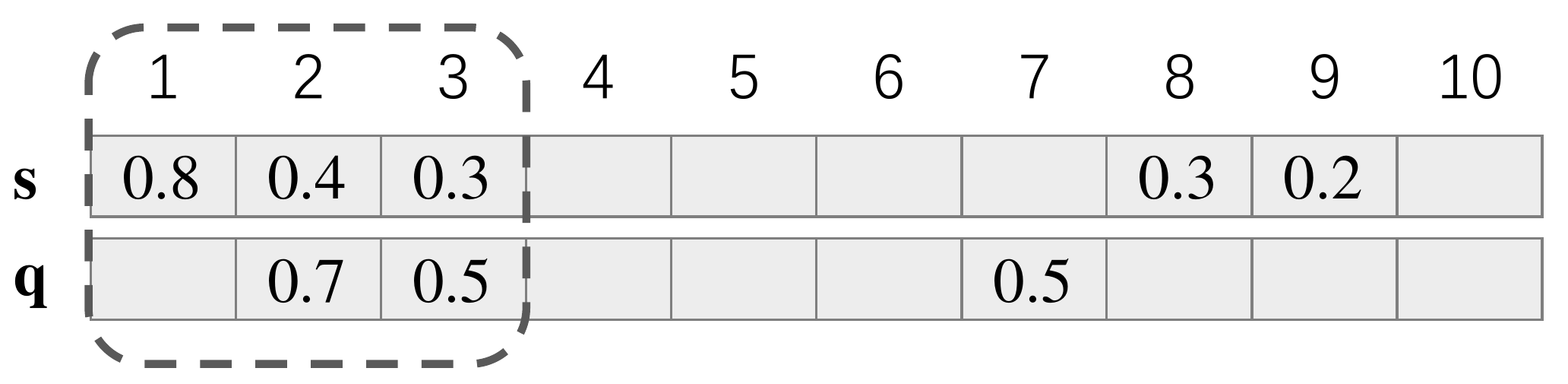}
\caption{An example of the lower and upper bounds}\label{fig:verifyExample}
\end{figure}

%\smallskip
%\noindent \textbf{Reusing information from the gathering phase.}
%The lower/upper bounds can be computed efficiently when the candidate $\ms$ is scanned by incrementally maintaining
%the values of $\sum_{\tilde{\ms}[i] \neq \bot} \tilde{\ms}[i] \cdot \mathbf{q}[i]$,
%$\sum_{\tilde{\ms}[i] \neq \bot} \tilde{\ms}[i]^2$ and $\sum_{\tilde{\ms}[i] \neq \bot} \mathbf{q}[i]^2$.
%Keeping track of $\min_{\tilde{\ms}[i] = \bot} \mathbf{q}[i]$ is usually not needed since
%it equals 0 as long as $\mathbf{q}[i] = 0$ for some dimension $i$.
%
%It is possible to reuse the information from the gathering phase for faster verification.
%This is because during the traversal, whenever there is an access on $L_i$ with vector $\ms$,
%we can accumulate (1) $s_i \cdot q_i$, (2) $q_i^2$ and (3) $s_i^2$ for the maintained values mentioned above.
%Then during the verification phase, whenever a dimension $i$ is probed, we check if $\ms[i] \ge L_i[b_i]$.
%If yes, we skip since the $i$-th dimension has been counted in the gathering phase;
%otherwise, we perform the incremental maintenance and update the lower/upper bounds.

\smallskip
\noindent \textbf{Performance Guarantee. }
Next, we show that when the data is skewed,  partial verification
achieves a strong performance guarantee.
We assume that each candidate $\ms$ is stored in the database such that
$\ms[1] \geq \ms[2] \geq \dots \geq \ms[d]$ and we scan $\ms$ sequentially starting from $\ms[1]$.
We notice that partial verification saves data accesses when there is a gap between the true similarity score
and the threshold $\theta$. Intuitively, when this gap is close to 0, both the upper and lower bounds
converge to $\theta$ as we scan $\ms$ so we might not stop until the last element.
When the gap is large and $\ms$ is skewed, meaning that the first few values account for most of the candidate's weight,
then the first few $\ms[i] \cdot \mathbf{q}[i]$ terms can provide large enough information
for the lower/upper bounds to decide $\theta$-similarity.
Formally,
\begin{theorem} \label{thm:verification}
Suppose that a vector $\ms$ is skewed: there exists an integer $k \le d$ and constant value $c$ such that
$\sum_{i=1}^k \ms[i]^2 \geq c$.
For every query $(\mathbf{q}, \theta)$,
if $|\csim(\ms, \mathbf{q}) - \theta| \geq \sqrt{1 - c}$,
then the number of accesses for verifying $\ms$ is at most $k$.
\end{theorem}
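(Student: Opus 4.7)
The plan is to bound the gap $ub(\tilde{\ms}) - lb(\tilde{\ms})$ produced by the formulas in Lemma \ref{lem:upper-lower} after exactly $k$ sequential accesses, and then show that the assumed score/threshold gap of $\sqrt{1-c}$ is enough to push one of the two bounds across $\theta$, so partial verification can halt.

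\textbf{Step 1: set up the bounds after $k$ accesses.} Since $\ms$ is stored in descending order, the partially observed vector $\tilde{\ms}$ obtained after $k$ accesses has $\tilde{\ms}[i] = \ms[i]$ for $i\le k$ and $\tilde{\ms}[i] = \bot$ otherwise. Write $P = \sum_{i=1}^{k} \ms[i]\cdot \mq[i]$, $A = \sqrt{1-\sum_{i=1}^{k}\ms[i]^2}$, $B_u = \sqrt{1-\sum_{i=1}^{k}\mq[i]^2}$, and $B_l = \min_{i>k}\mq[i]\ge 0$. Then Lemma \ref{lem:upper-lower} gives $ub(\tilde{\ms}) = P + A\cdot B_u$ and $lb(\tilde{\ms}) = P + A\cdot B_l$, so
\[
ub(\tilde{\ms})-lb(\tilde{\ms}) \;=\; A\cdot (B_u-B_l)\;\le\; A\cdot B_u\;\le\; A.
\]
By the skewness hypothesis $\sum_{i=1}^{k}\ms[i]^2\ge c$, we immediately get $A\le\sqrt{1-c}$, and therefore $ub(\tilde{\ms}) - lb(\tilde{\ms}) \le \sqrt{1-c}$.

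\textbf{Step 2: case analysis using the assumed gap.} Since $lb(\tilde{\ms}) \le \csim(\ms,\mq) \le ub(\tilde{\ms})$ for every vector compatible with $\tilde{\ms}$, there are two cases. If $\csim(\ms,\mq)\ge \theta+\sqrt{1-c}$, then
\[
lb(\tilde{\ms}) \;\ge\; ub(\tilde{\ms}) - \sqrt{1-c} \;\ge\; \csim(\ms,\mq)-\sqrt{1-c}\;\ge\;\theta,
\]
so the partial verifier confirms $\theta$-similarity and stops at the $k$-th access. Symmetrically, if $\csim(\ms,\mq)\le \theta-\sqrt{1-c}$, then
\[
ub(\tilde{\ms})\;\le\; lb(\tilde{\ms}) + \sqrt{1-c} \;\le\; \csim(\ms,\mq) + \sqrt{1-c} \;\le\; \theta,
\]
so the verifier can safely reject $\ms$ after the $k$-th access.

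\textbf{Main obstacle and comment.} The calculations are routine; the only subtle step is asserting $B_u \le 1$ and $B_l \ge 0$, which relies on $\mq$ being a unit, non-negative vector (a standing assumption of the paper). A minor strictness issue arises in the rejection case, where we only obtain $ub(\tilde{\ms}) \le \theta$ rather than a strict inequality; this can be handled either by adopting the standard convention that the verifier halts as soon as its bounds pin down the answer (including equality), or by a trivial $\varepsilon$-perturbation argument, and does not affect the $k$-access guarantee stated in the theorem.
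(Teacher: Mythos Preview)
Your argument is correct and essentially the same as the paper's: both rest on the single estimate $\sqrt{1-\sum_{i\le k}\ms[i]^2}\le\sqrt{1-c}$, which (via Cauchy--Schwarz) controls the tail contribution and hence the distance between the partial bounds and the true score. The only difference is cosmetic: the paper treats the two cases directly (showing $P\ge\theta$ in the first case and $ub\le\theta$ in the second), while you package the same computation as a single gap bound $ub-lb\le\sqrt{1-c}$ and then sandwich; neither route adds anything the other lacks.
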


Equivalently, Theorem \ref{thm:verification} says that if the true similarity is at least $\delta$ off the threshold $\theta$
(i.e. $|\csim(\ms, \mq) - \theta| \geq \delta$ for $\delta > 0$), then it is only necessary to access $k$ entries of $\ms$
with the smallest $k$ that satisfies $\sum_{i=1}^{k} \ms[i]^2 \ge 1 - \delta^2$.
For example, if $\delta = 0.1$ and the first 20 entries of a candidate $\ms$ account for >99\% of $\sum_{i=1}^d \ms[i]^2$,
then it takes at most 20 accesses for verifying $\ms$.

%\textcolor{red}{Jianguo: can we say the following: for any spectrum s, our algorithm guarantees to only take at most $k$ accesses where $k$ is defined as the smallest number that sanctifies: $\sum_{i=1}^{k} \ms[i]^2 \ge 1 - (\csim(\ms, \mq) - \theta)^2$? If so, I would suggest to use this to replace current Theorem~\ref{thm:verification}. It says not just for those vectors that have the true similarities at least $\delta$ off $\theta$, but for any spectrum. Is this near-constant cost or bounded num of accesses? The implications of this theorem is two-fold: (1) the skew the better; (2) the large the gap the better.}

\begin{proof}
\textbf{Case one:} We first consider the case where
$\csim(\ms, \mathbf{q}) - \theta \geq \sqrt{1 - c}$.
In this case, we need to show  $\sum_{i=1}^{k} \ms[i] \cdot \mathbf{q}[i]\ge\theta$.
Since $\csim(\ms, \mathbf{q}) - \theta \geq \sqrt{1 - c}$,
which means $$\sum_{i=1}^k \ms[i] \cdot \mathbf{q}[i] + \sum_{i=k+1}^d \ms[i] \cdot \mathbf{q}[i] \ge \theta + \sqrt{1 - c} \ , $$
it suffices to show that $\sum_{i=k+1}^d \ms[i] \cdot \mathbf{q}[i] \le \sqrt{1 - c}$.
This can be obtained by
\begin{align*}
  \sum_{i=k+1}^d \ms[i] \cdot \mathbf{q}[i] &\le \sqrt{\sum_{i=k+1}^d \ms[i]^2}\cdot \sqrt{\sum_{i=k+1}^d \mathbf{q}[i]^2}  \\
  &=\sqrt{1-\sum_{i=1}^k \ms[i]^2}\cdot\sqrt{1-\sum_{i=1}^k \mq[i]^2}  \\
  &\le \sqrt{1-c} \ .
\end{align*}

%We first consider the case where $\csim(\ms, \mathbf{q}) - \theta \geq \sqrt{1 - c}$.
%It is sufficient to show that $\sum_{i=1}^k \ms[i] \cdot \mathbf{q}[i] \geq \theta$, which is equivalent to show
%$\sum_{i=k+1}^d \ms[i] \cdot \mathbf{q}[i] =
%\sum_{i=1}^d \ms[i] \cdot \mathbf{q}[i] - \sum_{i=1}^k \ms[i] \cdot \mathbf{q}[i] \leq \sqrt{1 - c} $.

%\textcolor{red}{Jianguo: check whether the following proof is correct} \yuliang{Good.}

%\textcolor{red}{Jianguo: can we make it even tighter by considering $q[i]^2$?} \yuliang{I have tried but no.}

%Since $\sum_{i=1}^k \ms[i]^2 \geq c$, we have $\ms[i] \leq \sqrt{\dfrac{1 - c}{d - k}}$ for every $i > k$,
%so
%\begin{displaymath}
%\sum_{i=k+1}^d \ms[i] \cdot \mathbf{q}[i] \leq \sqrt{\dfrac{1 - c}{d - k}} \sum_{i=k+1}^d \mathbf{q}[i].
%\end{displaymath}
%
%By the inequality of means, $\dfrac{\sum_{i=k+1}^d \mathbf{q}[i]}{d - k} \leq \sqrt{\dfrac{\sum_{i=k+1}^d \mathbf{q}[i]^2}{d - k}}$
%so $\sum_{i=k+1}^d \mathbf{q}[i] \leq \sqrt{(\sum_{i=k+1}^d \mathbf{q}[i]^2)(d - k)} \leq \sqrt{d - k} $.
%We have
%\begin{displaymath}
%\sum_{i=k+1}^d \ms[i] \cdot \mathbf{q}[i] \leq \sqrt{\dfrac{(1 - c)(d - k)}{d - k}} = \sqrt{1 - c}.
%\end{displaymath}
\smallskip
\noindent
\textbf{Case two:}
We then consider the case where
$ \csim(\ms, \mathbf{q}) - \theta \leq -\sqrt{1 - c}$.
%\begin{displaymath}
%  \begin{aligned}
%    \Longleftrightarrow \sum_{i=1}^k \ms[i] \cdot \mathbf{q}[i] + \sum_{i=k+1}^d \ms[i] \cdot \mathbf{q}[i] &\le \theta - \sqrt{1 - c} \\
%  \end{aligned}
%\end{displaymath}
In this case, we need to show %$\sum_{i=1}^k \ms[i] \cdot \mathbf{q}[i] + \sqrt{1 - \sum_{i=1}^k \ms[i]^2}  \sqrt{1 - \sum_{i=1}^k \mathbf{q}[i]^2} \leq \theta$.
$$
  \sum_{i=1}^k \ms[i] \cdot \mathbf{q}[i] + \sqrt{1 - \sum_{i=1}^k \ms[i]^2}  \cdot \sqrt{1 - \sum_{i=1}^k \mathbf{q}[i]^2} \leq \theta \ .
$$
The first term of the LHS is bounded by $\sum_{i=1}^d \ms[i] \cdot \mathbf{q}[i] \leq \theta - \sqrt{1 - c}$.
The second term of the LHS is bounded by $\sqrt{1 - \sum_{i=1}^k \ms[i]^2} \leq \sqrt{1 - c}$.
Adding together, the LHS is bounded by $\theta$.
%For $\csim(\ms, \mathbf{q}) - \theta \leq -\sqrt{1 - c}$,
%it is sufficient to show that $\sum_{i=1}^k \ms[i] \cdot \mathbf{q}[i] + \sqrt{1 - \sum_{i=1}^k \ms[i]^2}  \sqrt{1 - \sum_{i=1}^k \mathbf{q}[i]^2} \leq \theta$.
%The first term of the LHS is bounded by $\sum_{i=1}^d \ms[i] \cdot \mathbf{q}[i] \leq \theta - \sqrt{1 - c}$.
%The second term of the LHS is bounded by $\sqrt{1 - \sum_{i=1}^k \ms[i]^2} \leq \sqrt{1 - c}$.
%Adding together, the LHS is bounded by $\theta$.
\end{proof}
%\yuliang{stopped here.}

%\textcolor{red}{Jianguo: make an example for this theorem for understanding.}

%\yuliang{Need some analysis on the actual distribution of the similarity scores of the candidates.}

\begin{example}
Figure~\ref{fig:verification}a shows the number of accesses for each candidate during the verification phase of
a real query with a vector having 100 non-zero values%\footnote{Figure~\ref{fig:query420} in appendix shows the details of the query.}
and $\theta = 0.6$. The result shows that for most candidates, the number of accesses is much smaller than 100.
In particular, 55.9\% candidates need less than five accesses and 93.1\% candidates need less than 30 accesses.
This is because as shown in Figure \ref{fig:verification}b,
only 0.23\% of candidates have true similarity within $\pm$0.2 compared to $\theta$ (the range [0.4, 0.8]).
The rest of the candidates, according to Theorem \ref{thm:verification}, can be verified in a small number of steps.
%

%
%as shown in Figure~\ref{fig:verification}(b).
%
%98.9\% of the candidates have true similarities in the range of [0:0.4] according to  that are easy to be verified in practice considering the query threshold $\theta$ is 0.6.
\end{example}

\begin{figure}[tbp]
\centering
\renewcommand{\tabcolsep}{0.1mm}
\begin{tabular}{cccc}
\includegraphics[width=0.4\columnwidth]{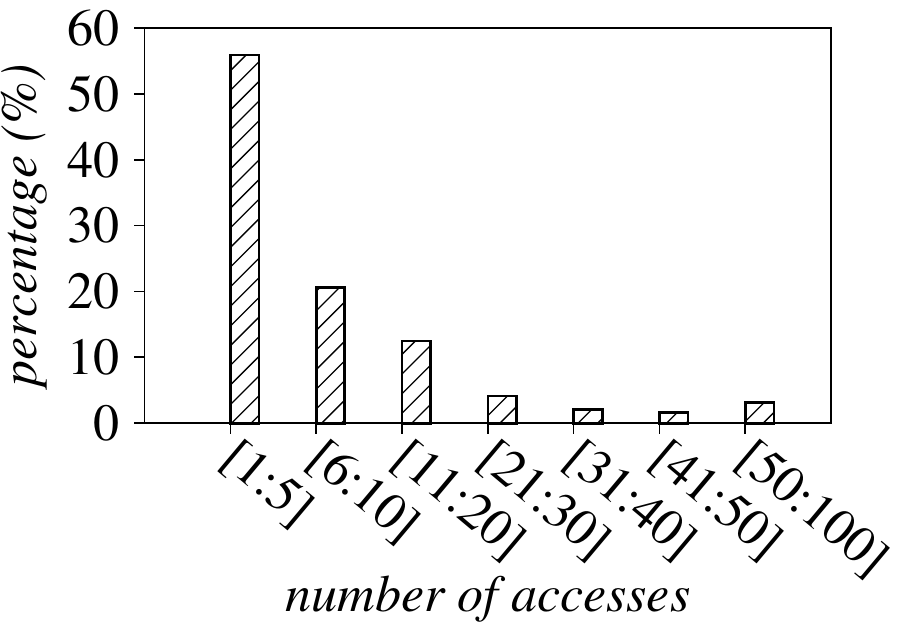}&
\includegraphics[width=0.4\columnwidth]{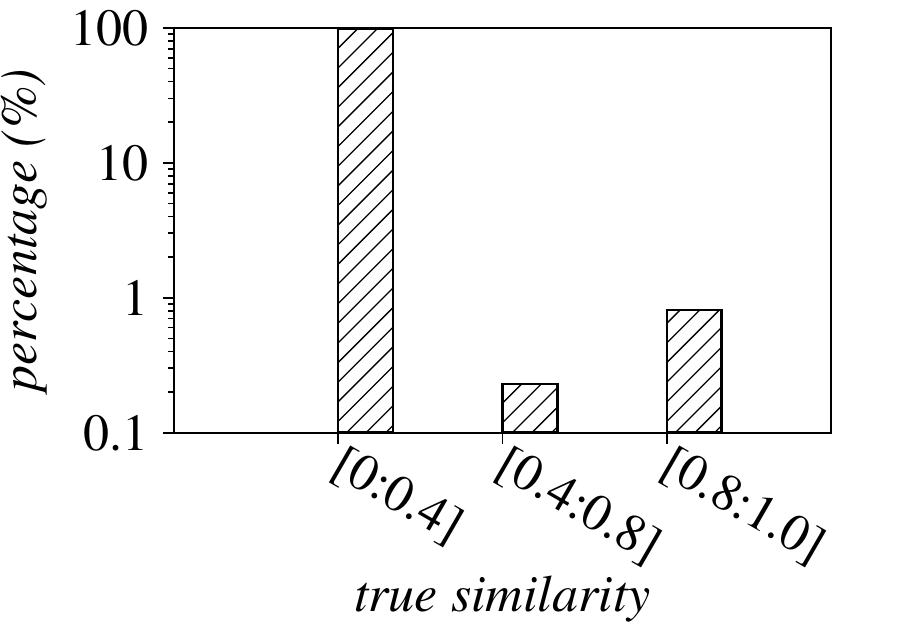}\\
(a) number of accesses & (b)  true similarities
\end{tabular}
\caption{{The distribution of number of accesses and true similarities}}\label{fig:verification}
\end{figure}

\section{The baseline stopping condition is not tight}\label{sec:TANotTight}

%Although \textsf{TA} is designed for top-$k$ query, it is naturally extendable to threshold query.
%In the case of cosine similarity given the query vector $\mathbf{q}$ and threshold $\theta$,
%the baseline stopping condition $\varphi_{\BL}$ is defined as:
%
%Next, we show that this stopping condition is complete but not tight.
%It is trivial to show show that the stopping condition is complete, but it is not tight as we show in Theorem~\ref{thm:TAnotTight}.

\begin{theorem}\label{thm:TAnotTight}
The \emph{baseline} stopping condition% (reviewed in Section \ref{sec:intro})
\begin{equation}
 \varphi_{\BL}(\mb) = \left( \sum_{i=1}^{d} q_i \cdot L_i[b_i] < \theta \right)
\end{equation}
is complete but not tight.
\end{theorem}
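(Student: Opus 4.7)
The plan is to split the statement into its two halves and treat the interesting direction by counterexample. For completeness, I would argue directly from the definitions: if $\varphi_{\BL}(\mb) = \mathtt{True}$, then $\sum_{i=1}^{d} q_i L_i[b_i] < \theta$, so for any unit vector $\ms$ with $\ms < L[\mb]$ we have $s_i \leq L_i[b_i]$ for every $i$, and using $q_i \geq 0$ this yields $\ms \cdot \mq = \sum_{i=1}^{d} q_i s_i \leq \sum_{i=1}^{d} q_i L_i[b_i] < \theta$. This is exactly the standard monotonicity argument behind $\TA$; crucially it never invokes the unit-norm constraint on $\ms$, which is both why completeness is easy and the reason tightness will fail.

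For the failure of tightness, my plan is to exhibit a single small instance in which some position $\mb$ is complete but $\varphi_{\BL}(\mb) = \mathtt{False}$. The underlying intuition is that $\varphi_{\BL}$ compares the raw envelope value $\mq \cdot L[\mb]$ against $\theta$ while forgetting that every surviving candidate must have $\|\ms\|_2 = 1$; whenever the unit-norm constraint excludes all $\ms \leq L[\mb]$ that would beat $\theta$, $\varphi_{\BL}$ nevertheless reads $\mathtt{False}$ and therefore cannot be tight.

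Concretely, I would take $d = 2$, $\mq = (0.8,\, 0.6)$ (already a unit vector), $\theta = 0.95$, and a position $\mb$ with $L_1[b_1] = 1.0$ and $L_2[b_2] = 0.3$. The envelope check gives $\mq \cdot L[\mb] = 0.8 + 0.18 = 0.98 \geq \theta$, so $\varphi_{\BL}(\mb) = \mathtt{False}$. On the other hand, the constrained problem
\begin{equation*}
\max\bigl\{\, q_1 s_1 + q_2 s_2 \;:\; s_1^2 + s_2^2 = 1,\; 0 \leq s_1 \leq 1,\; 0 \leq s_2 \leq 0.3 \,\bigr\}
\end{equation*}
forces the cap on $s_2$ to bind, so $s_2 = 0.3$ and $s_1 = \sqrt{0.91}$, with optimal value $0.8\sqrt{0.91} + 0.18 \approx 0.943 < \theta$. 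Hence no unit vector below $L[\mb]$ is $\theta$-similar to $\mq$, so $\mb$ is complete, yet the baseline condition fails to fire; this witnesses non-tightness and finishes the proof.

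The only real ``obstacle'' is bookkeeping: one must choose $\mq$, $\theta$, and $L[\mb]$ so that the envelope dot product exceeds $\theta$ while the unit-normalized optimum stays strictly below. A two-dimensional snapshot with a heavily skewed $L[\mb]$ already produces this gap, as verified above, and no machinery from later sections is needed.
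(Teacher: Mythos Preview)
Your proposal is correct and follows essentially the same approach as the paper: completeness via the standard monotonicity argument (nonnegative $q_i$ and $s_i \le L_i[b_i]$), and non-tightness via a two-dimensional counterexample exploiting the unit-norm constraint that $\varphi_{\BL}$ ignores. The paper argues the non-tightness geometrically (locating $L[\mb]$ in a region above the hyperplane $\mq\cdot\ms=\theta$ whose lower rectangle misses the feasible arc of the unit circle), whereas you instantiate concrete numbers and verify the constrained maximum directly; both arguments are equivalent, and your explicit computation $0.8\sqrt{0.91}+0.18\approx 0.943<0.95<0.98=\mq\cdot L[\mb]$ is correct.
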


% \yuliang{moved the proof of untight condition to the appendix.}

\begin{proof}
For every position vector $\mb$, $\varphi_{\BL}(\mb) = \mathtt{True}$ implies
$\mq \cdot L[\mb] < \theta$. So for every $\ms < L[\mb]$, we also have $\mq \cdot \ms < \theta$ so
$\varphi_{\BL}$ is complete. %\yuliang{This paragraph is trivial and can be removed to save space.}

To show the non-tightness, it is sufficient to show that for some position vector $\mb$ where $\mb$ is complete,
$\varphi_{\BL}(\mb)$ is $\mathtt{False}$ so the traversal continues.

We illustrate a counterexample in Figure \ref{fig:exampleTight} with two dimensions (i.e., $d = 2$).
Given a query $\textbf{q}$, all possible $\theta$-similar vectors form a hyper-surface defining the set
$\mathsf{ans} = \{\ms | \ \|\ms\| = 1, \sum_{i=1}^{d} q_i \cdot s_i \ge \theta\}$.
In Figure~\ref{fig:exampleTight}, $\|\ms\| = 1$ is the circular surface and $\sum_{i=1}^{d} q_i \cdot s_i \ge \theta$
is a half-plane so the set of points $\mathsf{ans}$ is the arc $\widehat{AB}$.

By definition, a position vector $\mb$ is complete if the set $\{\ms | \ms < L[\mb] \}$ contains no point in $\mathsf{ans}$.
A position vector $\mb$ satisfies $\varphi_{\BL}$ iff the point $L[\mb]$ is above the plane $\sum_{i=1}^{d} q_i \cdot s_i = \theta$.
It is clear from Figure \ref{fig:exampleTight} that if the point $L[\mb]$ locates at the region BCD, then
$\{\ms | \ms < L[\mb] \}$ contains no point in $\widehat{AB}$ and is above the half-plane $\sum_{i=1}^{d} q_i \cdot s_i = \theta$.
There exists a database of 2-d vectors such that $L[\mb]$ resides in the BCD region for some position $\mb$, 
so the stopping condition $\varphi_{\BL}$ is not tight.
\end{proof}

\vspace*{-4mm}
\smallskip
\noindent \textbf{Remark.} The baseline stopping condition $\varphi_{\BL}$ is not tight because it does not 
take into account that all vectors in the database are unit vectors. 
In fact, one can show that $\varphi_{\BL}$ is tight and complete for inner product queries where the unit vector assumption is lifted.
In addition, since $\varphi_{\BL}$ is not tight, any traversal strategy that works with $\varphi_{\BL}$
has no optimality guarantee in general since there can be a gap of arbitrary size 
between the stopping position by $\varphi_{\BL}$ and the one that is tight (i.e. there can be arbitrarily many points in the region $BCD$).

\begin{figure}[tbp]
  \centering
  \includegraphics[width=0.5\textwidth]{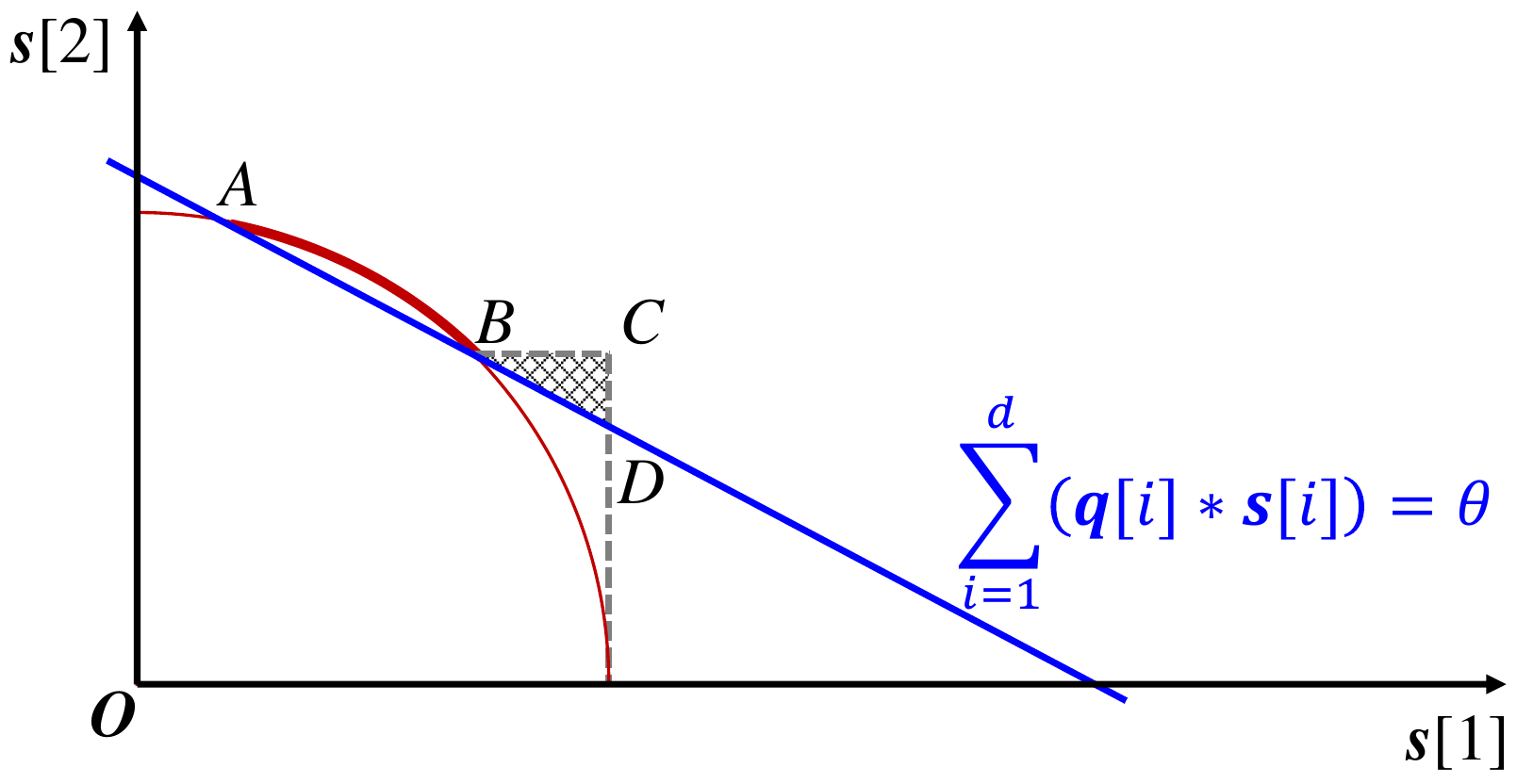}
  \caption{A 2-d example of $\varphi_{\BL}$'s non-tightness}\label{fig:exampleTight}
\vspace*{-2mm}
\end{figure}

\section{Efficient computation of $\varphi_{\TC}$ with incremental maintenance}\label{sec:incremental}

%Next, we show that $\varphi_{\TC}$ can be efficiently computed.
Testing the tight and complete condition $\varphi_{\TC}$ requires solving $\tau$ in Theorem \ref{thm:qiLibi}, 
for which a direct application of the bisection method takes $\calO(d)$ time.
Next, we provide a more efficient algorithm based on incremental maintenance which takes only
$\calO(\log d)$ time for each test of $\varphi_{\TC}$.

%\begin{theorem}\label{thm:incremental}
%The stopping condition $\varphi_{\TC}(\mb)$ can be incrementally computed in $\calO(\log d)$ time.
%\end{theorem}

%\begin{proof}
%  We begin by noting that both
%the values $\sum_{i=1}^{d}(q_i \cdot L_i[b_i])$ and $\sum_{i=1}^d (L_i[b_i])^2$ can be incrementally maintained
%so it takes constant time for updating them when $\mb$ is updated during the traversal.
%It remains to show that checking whether $\score(L[\mb])$, the distance from the origin $\textbf{O}$ to the simplex $\mS$,
%exceeds 1 can also be efficiently computed.
%\yuliang{Need to add some intuition here.}
According to the proof of Theorem~\ref{thm:qiLibi},
\begin{equation}
    s_i=
\begin{cases}
    L_i[b_i],  & \tau \geq \dfrac{L_i[b_i]}{q_i} ; \\
    q_i \cdot \tau, & \text{otherwise}.
\end{cases}
\end{equation}
Wlog, suppose $\frac{L_1[b_1]}{q_1} \le \dots \le \frac{L_d[b_d]}{q_d}$ and
$\tau$ is in the range $[\frac{L_k[b_k]}{q_k},  \frac{L_{k+1}[b_{k+1}]}{q_{k+1}}]$ for some $k$. 
We have $s_i = L_i[b_i]$ for every $1 \leq i \leq k$ and $s_i = q_i \cdot \tau$ for $k < i \leq d$. 
So if we let $\mathsf{eval}(k, \tau)$ be the function
\begin{equation}
\mathsf{eval}(k, \tau) = \sum_{i=1}^d s_i^2 = \sum_{i=1}^k L_i[b_i]^2 + \sum_{i=k+1}^d q_i^2 \cdot \tau^2 ,
\end{equation}
then for the largest $k$ such that $\mathsf{eval}(k, L_k[b_k] / q_k) \leq 1$, $\tau$ can be computed by solving
\begin{align}\label{equ:tau2}
 \sum_{i=1}^k L_i[b_i]^2 + \sum_{i=k+1}^d q_i^2 \cdot \tau^2 = 1 
\Rightarrow \quad  \tau = \left(\dfrac{1 - \sum_{i=1}^k L_i[b_i]^2}{1-\sum_{i=1}^k q_i^2} \right)^{1/2} .
\end{align}
Then, $\score(L[\mb])$ can be computed as follows:
\begin{equation}\label{equ:dms}
  \score(L[\mb]) = \sum_{i=1}^k L_i[b_i] \cdot q_i + (1-\sum_{i=1}^k q_i^2) \cdot \tau \ .
\end{equation}
%Thus, we can find the largest $k$ such that $\mathsf{eval}(L_k[b_k] / q_k) \leq 1$ then solve $\tau$ and compute $\score(L[\mb])$ using Equation (\ref{equ:dms}). 
Computing $\score(L[\mb])$ using the above approach directly requires that the $L_i[b_i]$'s 
are sorted in each step by $L_i[b_i]/q_i$, which requires $\calO(d \log d)$ time. 
However, this is still too expensive as the stopping condition is checked in every step.
Fortunately, we show that $\score(L[\mb])$ can be incrementally maintained in $\calO(\log d)$ time as we describe below.

We use a binary search tree (BST) to maintain an order of the $L_i$'s sorted by $L_i[b_i] / q_i$.
The BST supports the following two operations: 
%\yuliang{I think this is still needed. This is not an explanation of the algorithm, but what operations the BST supports.}
\begin{itemize}\parskip=0in\itemsep=0in
\item $\mathsf{update(i)}$: update $L_i[b_i] \rightarrow L_i[b_i + 1]$
\item $\mathsf{compute()}$: return the value of $\score(L[\mb])$
\end{itemize}
The $\mathsf{compute()}$ operation essentially performs the binary search of finding the largest $k$ mentioned above.
To ensure $\calO(\log d)$ running time, %each node in the BST stores additional information as follows.
we observe that from Equation (\ref{equ:tau2}) and (\ref{equ:dms}), for any $k$, $\score(L[\mb])$ can be computed if $\sum_{i=1}^{k}L_i[b_i]\cdot q_i$, $\sum_{i=1}^{k}L_i[b_i]^2$, and $\sum_{i=1}^{k}q_i^2$ are available. %Thus, we need to maintain additional information in each node. 
Let $\TT$ be the BST and each node in $\TT$ is denoted as an integer $\mathsf{n}$,
meaning that the node represents the list $L_{\mathsf{n}}$.
We denote by $\mathsf{subtree(n)}$ the subtree of $\TT$ rooted at node $\mathsf{n}$, and maintain the following information for each node $\mathsf{n}$:
\begin{itemize}\parskip=0in\itemsep=0in
\item $\mathsf{n.key}$: $L_n[b_n] / q_n$,
\item $\mathsf{n.LQ}$: $\sum_{i \in \mathsf{subtree(n)}} L_i[b_i] \cdot q_i$,
\item $\mathsf{n.Q2}$: $\sum_{i \in \mathsf{subtree(n)}} q_i^2$ and
\item $\mathsf{n.L2}$: $\sum_{i \in \mathsf{subtree(n)}} L_i[b_i]^2$ .
\end{itemize}

\begin{figure}[tbp]
\centering
\includegraphics[width=.6\columnwidth]{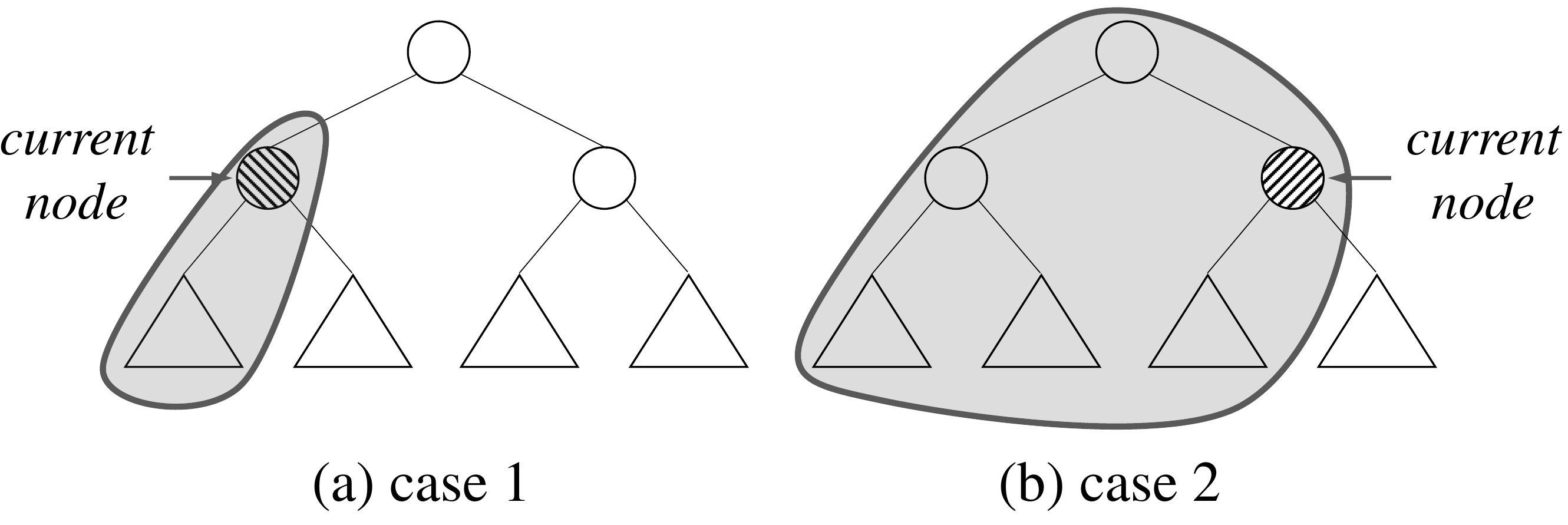}
\caption{{An example of incremental maintenance}}\label{fig:incremental}
\end{figure}

Thus, whenever there is a move on the list $L_i$, the key of the node $\mathsf{i}$ (i.e., $L_i[b_i]/q_i$) will be updated. Then we can remove the node  $\mathsf{i}$ from the tree and insert it again using the new key, which takes $\calO(\log d)$ time (with all the associated values being updated as well). To compute $\score(L[\mb])$, 
we need to handle two cases shown in Figure~\ref{fig:incremental}. In the first case, all the required information for computing  $\score(L[\mb])$ are stored in the current node and its left subtree, while in the second case, 
the required information needs to be passed from the parent node to the current node to compute $\score(L[\mb])$.
% See Algorithm~\ref{alg:compute} in the appendix for details.

Algorithm~\ref{alg:compute} shows the details to incrementally computes $\score(L[\mb])$ mentioned in Section~\ref{sec:incremental}.
For each node $\mathsf{n}$, we denote by $\mathsf{n.left}$ ($\mathsf{n.right}$) the left (right) child of $\mathsf{n}$.

\begin{algorithm}[h!]
$\mathsf{(LQ\_parent, Q2\_parent, L2\_parent)} \leftarrow (0, 0, 0)$\;
$\score(L[\mb]) \leftarrow 1$ \tcp*{$\score(L[\mb]) = 1$ if $\forall i \ \tau \leq L_i[b_i]/q_i$}
$\mathsf{n} \leftarrow \mathsf{root}(T)$\;

\While {$\mathsf{n} \neq \mathsf{null}$} {
$\mathsf{LQ} \leftarrow \mathsf{LQ\_parent + n.left.LQ} + L_{\mathsf{n}}[b_{\mathsf{n}}] \cdot q_{\mathsf{n}}$\;
$\QQ \leftarrow \mathsf{Q2\_parent + n.left.Q2} + q_{\mathsf{n}}^2$\;
$\LL \leftarrow \mathsf{L2\_parent + n.left.L2} + L_{\mathsf{n}}[b_{\mathsf{n}}]^2$\;
$\mathsf{f(n)} \leftarrow \mathsf{LQ} + (1 - \QQ) \cdot \mathsf{n.key}$\;

\If {$\mathsf{f(n)} \leq 1$} {
  $\tau \leftarrow \left((1 - \LL) / (1 - \QQ)\right)^{1/2} $\;
  $\score(L[\mb]) \leftarrow \mathsf{LQ} + (1 - \QQ) \cdot \tau $\;
  $\mathsf{n} \leftarrow \mathsf{n.left}$\;
}\Else {
$\mathsf{LQ\_parent} \leftarrow \mathsf{LQ\_parent + n.left.LQ} + L_{\mathsf{n}}[b_{\mathsf{n}}] \cdot q_{\mathsf{n}}$\;
$\mathsf{Q2\_parent} \leftarrow \mathsf{Q2\_parent + n.left.Q2} + q_{\mathsf{n}}^2$\;
$\mathsf{L2\_parent} \leftarrow \mathsf{L2\_parent + n.left.L2} + L_{\mathsf{n}}[b_{\mathsf{n}}]^2$\;
  $\mathsf{n} \leftarrow \mathsf{n.right}$\;
}
}
\Return $\score(L[\mb])$\;
 \caption{\textsf{compute(T)}}\label{alg:compute}
\end{algorithm}

%\section{Additional Proofs}\label{app:proofs}

\section{Traversal starting from the middle} \label{app:middle}
Next, we consider a variant of the Gathering-Verification algorithm starting the traversal 
from the middle of the inverted lists instead of starting from the top like the classic $\TA$.
In the most general setting, the starting position can be anywhere in the inverted lists, 
and two pointers (upper and lower) are used to indicate the traversed range for each inverted list. 
At each iteration, in addition to choosing the list to be traversed, the traversal strategy also decides
whether the upper or the lower pointer of the selected list needs to be moved.
We can show that in this setting, deciding a tight and complete stopping condition is {\sc np-hard} thus intractable.

%\begin{proof}
We start with some definitions.
A \emph{configuration} of the traversal is a pair of position vector $(\ma, \mb)$
where $\ma$ and $\mb$ are the upper and the lower position vector of the inverted lists $\{L_i\}_{i \in [d]}$ respectively.
At each configuration, it is guaranteed that all vectors $\ms$ satisfying $L_i[b_i] \leq s_i \leq L_i[a_i]$ for some dimension $i$
have been collected in the candidate set. 
So the tight stopping condition needs to test whether all the remaining vector $\ms$ can be $\theta$-similar to the query $\mq$.
For each dimension $i$, the condition that $s_i$ needs to satisfy is 
$$ s_i \geq L_i[a_i] \lor s_i \leq L_i[b_i] \ , $$
which can be written into an equivalent quadratic form
$$ (s_i - L_i[a_i]) \cdot (s_i - L_i[b_i]) \geq 0 $$
since $L_i[a_i] \geq L_i[b_i]$.

Thus, the tight stopping condition is equivalent to the following quadratic program:
\begin{equation} \label{equ:maximization2}
\begin{array}{ll@{}ll@{}ll}
\text{maximize}   & \displaystyle\sum_{i=1}^{d} s_i \cdot q_{i}    & \\
\text{subject to} & \displaystyle\sum_{i=1}^d s_i^2 = 1,           & \\
                  & (s_i - L_i[a_i]) \cdot (s_i - L_i[b_i]) \geq 0, & \quad \text{ for } i \in [d].
\end{array}
\end{equation}

Let 
$$ F(\ms, \mathbf{\mu}, \lambda) = \sum_{i=1}^d s_i q_i + \sum_{i=1}^d \mu_i(s_i - L_i[a_i]) \cdot (s_i - L_i[b_i]) - \lambda \left(\sum_{i=1}^d s_i^2 - 1\right) $$
be the Lagrangian of (\ref{equ:maximization2}) where 
$\lambda \in \mathbb{R}$ and $\mathbf{\mu} \in \mathbb{R}^d$ are the Lagrange multipliers.
Then, the KKT conditions of (\ref{equ:maximization2}) are
\begin{equation*}
\begin{array}{ll@{}ll}
\nabla_{\ms} F(\ms, \mu, \lambda) = 0 & \text{(Stationarity)} \\
\mu_i \geq 0 , \ \forall \ i \in [d] & \text{(Dual feasibility)} \\
\mu_i (s_i - L_i[a_i]) (s_i - L_i[b_i]) = 0, \ \forall \ i \in [d] & \text{(Complementary slackness)}
\end{array}
\end{equation*}
in addition to the Primal feasibility in (\ref{equ:maximization2}).

By the Complementary slackness condition, either $\mu_i = 0$ or $(s_i - L_i[a_i])(s_i - L_i[b_i]) = 0$.
If $\mu_i = 0$, from the Stationarity condition, we have
$$ q_i + \mu_i (s_i / 2 - L_i[a_i] - L_i[b_i]) - \lambda s_i = 0  $$
so $s_i = q_i / \lambda$.

If $\mu_i \neq 0$, then $(s_i - L_i[a_i])(s_i - L_i[b_i]) = 0$ so $s_i = L_i[a_i]$ or $s_i = L_i[b_i]$.
Since $\ms \cdot \mq$ is maximized when $s_i$ is proportional to $q_i$ and $s_i \geq L_i[a_i] \lor s_i \leq L_i[b_i]$,
the value of $s_i$ is determined as follows:
\begin{itemize} \parskip=0pt
\item if $q_i / \lambda \in [L_i[b_i], L_i[a_i]]$, then $s_i$ equals either the upper bound $L_i[a_i]$ or the lower bound $L_i[b_i]$;
\item otherwise, $s_i = q_i / \lambda$.
\end{itemize}
Here, we can see that solving the quadratic program (\ref{equ:maximization2})
is more difficult than the quadratic program (\ref{equ:maximization}) in Section \ref{sec:stop}
where the traversal starts from the top of the lists.
Intuitively, when $s_i \neq q_i / \lambda$, unlike the solution for (\ref{equ:maximization}), 
there are still two possible options: the upper bound $L_i[a_i]$ and the lower bound $L_i[b_i]$, 
which lead to exponentially many combinations. 
One can easily show that even when $\lambda$ is fixed, 
computing the optimal $\ms$ is {\sc np-hard} by reduction from the subset sum problem \cite{karp1972reducibility}.
This is because when $\lambda$ is fixed, the $s_i$'s that equal to $q_i / \lambda$ are fixed.
When $L_i[b_i] = 0$ for all $i$, checking whether $\sum_{i=1}^d s_i^2 = 1$ is essentially checking whether
there exists a subset of $\{L_i[a_i]^2\}_{i \in [d]}$ whose sum is some fixed constant depending on $\lambda$.
%\end{proof}

\section{Proof of Theorem \ref{thm:optimal}}\label{thm:proof-optimal}
\begin{proof}
%We next outline a simple proof.
Let $\{\mb_t\}_{1 \leq t \leq k}$ be the sequence of position vectors produced by the strategy $\calT_{\mathsf{MR}}$.
%Let $\delta_{i,j}$ be the value $q_i \cdot \Delta_i[j]$ for $1 \leq i \leq d$ and $0 \leq j < |L_i|$.

Since each $\Delta_i$ is non-increasing and the strategy $\calT_{\mathsf{MR}}$ chooses the dimension $i$
with the maximal $\Delta_i[b_i]$, then at each step $t$, the multiset
$\{ \Delta_i[j] | 1 \leq i \leq d, 0 \leq j \leq \mb_t[i] \}$
contains the first $t$ largest values of all the $\Delta_i[j]$'s from the multiset
$\{ \Delta_i[j] | 1 \leq i \leq d, 0 \leq j < |L_i| \}$.
Since the score $F(L[\mb_t])$ equals
$$\sum_{i=1}^d f_i(L_i[0]) - \sum_{i=1}^d \sum_{j=1}^{\mb_t[i]} \Delta_i[j] \ , $$
it follows that for each $\mb_t$ of $\calT_{\mathsf{MR}}$,
the score $F(L[\mb_t])$ is the lowest score possible for any position vector reachable in $t$ steps.
Thus, if the optimal access cost $\opt$ is $t$ with an optimal stopping position $\mb_{\opt}$,
then $\mb_t$, the $t$-th position of $\calT_{\MR}$, satisfies that
$F(L[\mb_t]) \leq F(L[\mb_{\opt}]) < \theta$. So $\calT_{\mathsf{MR}}$ is optimal.
\end{proof}

\section{Proof of Lemma \ref{lem:bstar}} 
\label{sec:proof-near-optimal}
\begin{proof}
We construct a new collection of inverted lists $\{\tilde{L}_i\}_{1 \leq i \leq d}$ from the original lists as follows.
For every $i$ and every pair of consecutive indices $j_k, j_{k+1}$ of $\hull_i$, we assign
$$ \tilde{L}_i[j] = f_i(L_i[j_k]) + (j - j_k) \cdot \tilde{\Delta}_i[j] \quad \text{ for every } j \in [j_k, j_{k+1}].$$
Intuitively, we construct each $\tilde{L}_i$ by projecting the set of 2D points $\{(i, f_i(L_i[j]))\}_{j \in [j_k, j_{k+1}]}$
onto the line passing through the two boundary points with index $j_k$ and $j_{k+1}$, which is
essentially projecting the set of points onto the piecewise linear function defined by the convex hull vertices in $\hull_i$
(See Figure \ref{fig:convex}(c) for an illustration).
The new $\{\tilde{L}_i\}_{1 \leq i \leq d}$ satisfies the following properties.
\begin{itemize}\parskip=0pt
\item[(i)] By the construction of each convex hull $\hull_i$, we have $\tilde{L}_i[j] \leq f_i(L_i[j])$ for every $i$ and $j$.
\item[(ii)] For every boundary position $\mb$, we have $\tilde{L}[\mb] = F(L[\mb])$ since for every index $j$ on a convex hull $\hull_i$,
$\tilde{L}[j] = f_i(L_i[j])$.
\item[(iii)] The collection $\{\tilde{L}_i\}_{1 \leq i \leq d}$ is ideally convex\footnote{where each $f_i$ of the decomposable function $F$ is the identity function}.
In addition, the strategy $\calT_{\HL}$ produces exactly the same sequence reduced by the max-reduction strategy $\calT_{\MR}$
when $\{\tilde{L}_i\}_{1 \leq i \leq d}$ is given as the input. By the same analysis for Theorem \ref{thm:optimal},
for every position vector $\mb$ generated by $\calT_{\HL}$, $\tilde{L}[\mb]$ is minimal among all position vectors reached within $|\mb|$ steps.
\end{itemize}
Combining (ii) and (iii), for every boundary position vector $\mb$ generated by $\calT_{\HL}$ and every $\mb^*$ where $|\mb^*| = \mb$,
we have $ F(L[\mb]) = \tilde{L}[\mb] \leq \tilde{L}[\mb^*]$.
Finally, by (i) and since $F$ is non-decreasing, we have $\tilde{L}[\mb^*] \leq F(L[\mb^*])$ so $F(L[\mb]) \leq F(L[\mb^*])$ for every $\mb^*$.
\end{proof}

\section{Estimation of $\epsilon_1 + \epsilon_2$} 
\label{app:estimation}

\smallskip
\noindent
\textbf{For $\epsilon_1$}: A trivial upper bound of $\epsilon_1$ is $\tilde{\tau} - 1$ since the initial value of $\tau$ is 1
and the gap $\tilde{F}(L[\mb]) - \score(L[\mb])$ is maximized when $\mb = \mathbf{0}$. %\yuliang{This is not that trivial. Maybe we should write a proof in the appendix.}
This upper bound can be improved as follows. We notice that in the proof of Theorem \ref{thm:near-optimal-approx},
given $|\mb^*| = |\mb_l|$, we need to have $F(L[\mb^*]) \geq \theta - \epsilon_1 - \epsilon_2$ for every such $\mb^*$,
which is equivalent to requiring that this property holds for the $\mb^*$ that minimizes $F(L[\mb^*])$ given $|\mb^*| = |\mb_l|$.
This $\mb^*$ satisfies that $F(L[\mb^*]) \leq F(L[\mb_l])$ and $F(L[\mb_l])$ is known when the query is executed.
This upper bound of $F(L[\mb^*])$ implies a lower bound of $\tau$ at position $\mb^*$, which also implies
the following lower bound of $\score(L[\mb^*]) - \tilde{F}(L[\mb^*])$:
% $$
% (\dag) \quad \score(L[\mb^*]) - \tilde{F}(L[\mb^*]) \geq \min\{0, 1 / \score(L[\mb_l]) - \tilde{\tau}\} .
% $$
% The complete proof of $(\dag)$ is provided in the Appendix (Lemma \ref{lem:esp}).

% First, we formally state the result for estimating the $\epsilon_1$-bound in Section \ref{sec:global-optimal}.
\begin{lemma}\label{lem:esp}
Let $\mb$ be an arbitrary position vector and let $\mb^*$ be the position vector such that
$$ \mb^* = \argmin_{\mb': |\mb| = |\mb'|} \{\score(L[\mb'])\} .$$
Then
$$ (\dag) \quad \score(L[\mb^*]) - \tilde{F}(L[\mb^*]) \geq \min\{0, 1 / \score(L[\mb]) - \tilde{\tau} \} . $$
where $\tilde{F}$ is the decomposable function where each component is
$f_i(x) = \min\{\tilde{\tau} \cdot q_i, x \} \cdot q_i$ for constant $\tilde{\tau}$ and every $1 \leq i \leq d$.
\end{lemma}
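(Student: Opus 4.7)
\medskip

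\noindent\textbf{Proof plan for Lemma \ref{lem:esp}.} The plan is to write both quantities as dot products of structured vectors against $\mq$, establish a key lower bound relating $\tau$ and $\score$, and then do a coordinate-wise case analysis.

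First, I would introduce notation. Let $\tau^*$ be the KKT multiplier associated with the position $\mb^*$, set $s^*_i = \min\{q_i\tau^*, L_i[b^*_i]\}$ and $\tilde{s}_i = \min\{q_i\tilde{\tau}, L_i[b^*_i]\}$. By Theorem \ref{thm:qiLibi} the vector $\mathbf{s}^*$ is a unit vector and satisfies $\score(L[\mb^*]) = \mathbf{s}^* \cdot \mq$, while by the definition of the decomposable approximation $\tilde{F}(L[\mb^*]) = \tilde{\mathbf{s}}\cdot\mq$. Therefore the quantity to bound is $(\mathbf{s}^* - \tilde{\mathbf{s}})\cdot \mq$.

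The next step is the key inequality $\tau \geq 1/\score(L[\mb])$ for any position $\mb$. Writing $s_i = \min\{q_i\tau, L_i[b_i]\}$, a coordinate-wise check shows $s_i q_i \geq s_i^2/\tau$: if $q_i\tau \leq L_i[b_i]$ then $s_i q_i = q_i^2\tau = s_i^2/\tau$, and otherwise $s_i = L_i[b_i] < q_i\tau$ gives $s_i q_i > s_i^2/\tau$. Summing and using $\|\mathbf{s}\|^2 = 1$ yields $\score(L[\mb]) \geq 1/\tau$, hence $\tau \geq 1/\score(L[\mb])$. Applied at $\mb^*$ and combined with $\score(L[\mb^*]) \leq \score(L[\mb])$ (by definition of $\mb^*$), I obtain $\tau^* \geq 1/\score(L[\mb^*]) \geq 1/\score(L[\mb])$.

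I then split into two cases. If $\tau^* \geq \tilde{\tau}$, then for every $i$ the monotonicity of $\min\{\cdot, L_i[b^*_i]\}$ in the first argument gives $s^*_i \geq \tilde{s}_i$, so $(\mathbf{s}^*-\tilde{\mathbf{s}})\cdot\mq \geq 0$, while the right-hand side of $(\dag)$ is $\min\{0, 1/\score(L[\mb])-\tilde{\tau}\} \leq 0$, and the inequality holds. In the harder case $\tau^* < \tilde{\tau}$, the right-hand side equals $1/\score(L[\mb]) - \tilde{\tau}$, which by the key inequality is $\leq \tau^* - \tilde{\tau}$. So it suffices to show $(\mathbf{s}^*-\tilde{\mathbf{s}})\cdot\mq \geq \tau^* - \tilde{\tau}$.

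For this final step I would do a per-coordinate analysis of $(s^*_i - \tilde{s}_i)q_i$ under $\tau^* < \tilde{\tau}$, splitting on the position of $L_i[b^*_i]$ relative to $q_i\tau^*$ and $q_i\tilde{\tau}$. The three cases yield contributions $0$, $q_i(q_i\tau^* - L_i[b^*_i])$ and $q_i^2(\tau^* - \tilde{\tau})$ respectively; in the middle case $L_i[b^*_i] \leq q_i\tilde{\tau}$ gives a lower bound of $q_i^2(\tau^*-\tilde{\tau})$. Summing, $(\mathbf{s}^*-\tilde{\mathbf{s}})\cdot\mq \geq (\tau^* - \tilde{\tau})\sum_{i \in B\cup C} q_i^2$, where $B,C$ index the latter two cases. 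Since $\tau^*-\tilde{\tau} < 0$ and $\sum_{i\in B\cup C} q_i^2 \leq \|\mq\|^2 = 1$, multiplication by a value in $[0,1]$ can only make the negative quantity larger, giving $(\mathbf{s}^*-\tilde{\mathbf{s}})\cdot\mq \geq \tau^* - \tilde{\tau}$, as desired.

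The main obstacle will be the bookkeeping in the per-coordinate case analysis and the sign-aware step that turns $\sum_{i \in B \cup C} q_i^2 \leq 1$ into a valid lower bound despite $\tau^* - \tilde{\tau}$ being negative; once the inequality $\tau \geq 1/\score(L[\mb])$ is in hand, the rest is a careful but routine comparison of the two $\min$ expressions.
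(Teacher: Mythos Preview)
Your proposal is correct and follows essentially the same route as the paper: the same two-case split on $\tau^*$ versus $\tilde{\tau}$, the same three-way coordinate partition in the hard case, and the same sign-aware step turning $\sum q_i^2 \leq 1$ into the final bound. Your derivation of the key inequality $\tau \geq 1/\score(L[\mb])$ via the coordinate-wise bound $s_i q_i \geq s_i^2/\tau$ is in fact a bit cleaner than the paper's version (its Lemma~\ref{lem:theta}), which reaches the same conclusion by algebraically manipulating the aggregate quantities $\LQ$, $\QQ$, $\LL$.
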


\begin{proof}
Let $\tau^*$ be the value of $\tau$ at $L[\mb^*]$. We consider two cases separately: $\tau^* \geq \tilde{\tau}$
and $\tau^* < \tilde{\tau} $.

\smallskip
\textbf{Case One: } When $\tau^* \geq \tilde{\tau}$, since each $f_i$ increases as $\tilde{\tau}$ increases,
we have $$f_i(x) = \min\{\tilde{\tau} \cdot q_i, x \} \cdot q_i \leq \min\{\tau^* \cdot q_i, x \} \cdot q_i$$ 
thus $\score(L[\mb^*]) \geq \tilde{F}(L[\mb^*])$.

\smallskip
\textbf{Case Two: } Suppose $\tau^* < \tilde{\tau}$. We show the following Lemma.
\begin{lemma}\label{lem:theta}
For every position $\mb$ and constant $c$, $\score(L[\mb]) < c$ implies that the $\tau$ at $L[\mb]$ is at least $1 / c$.
\end{lemma}

Recall the notations $\mathsf{LQ} = \sum_{i: L_i[b_i] < \tau \cdot q_i} L_i[b_i] \cdot q_i$,
$\QQ = \sum_{i: L_i[b_i] < \tau \cdot q_i} q_i^2$ and \\ $\LL = \sum_{i: L_i[b_i] < \tau \cdot q_i} L_i[b_i]^2$.
Then $\tau$ satisfies that
\begin{equation}\label{equ:LQ}
\LQ + (1 - \QQ) \cdot \tau < c
\end{equation} and
\begin{equation}\label{equ:LL}
\LL + (1 - \QQ) \cdot \tau^2 = 1 .
\end{equation}
By rewriting Equation (\ref{equ:LL}), we have $(1 - \QQ) = (1 - \LL) / \tau^2$. Plug this into (\ref{equ:LQ}), we have
$\LQ + (1 - \LL) / \tau < c$. Since $L_i[b_i] < \tau \cdot q_i$, we have $\LQ > \LL / \tau$ so
$1 / \tau < c$ which means $\tau > 1 / c$. This completes the proof of Lemma \ref{lem:theta}.

We know that since $\mb^*$ minimizes $\score(L[\mb'])$ among all $\mb'$ with $|\mb'| = |\mb|$,
we have $\score(L[\mb^*]) \leq \score(L[\mb])$. By Lemma \ref{lem:theta}, we have $\tau^* \geq 1 / \score(L[\mb])$.

Now consider $\score(L[\mb^*]) - \tilde{F}(L[\mb^*])$. Since $\tau^* < \tilde{\tau}$, $\score(L[\mb^*]) - \tilde{F}(L[\mb^*])$
can be written as the sum of the following 3 terms:
\begin{itemize} \parskip=0pt
\item $\sum_{i : L_i[b_i] / q_i < \tau^*} (L_i[b_i] \cdot q_i - L_i[b_i] \cdot q_i)$, which is always 0,
\item $\sum_{i : \tau^* \leq L_i[b_i] / q_i < \tilde{\tau}} (q_i^2 \cdot \tau^* - L_i[b_i] \cdot q_i)$ and
\item $\sum_{i : \tilde{\tau} \leq L_i[b_i] / q_i} (q_i^2 \cdot \tau^* - q_i^2 \cdot \tilde{\tau})$.
\end{itemize}
In the second term, since each $L_i[b_i]$ is at most $q_i \cdot \tilde{\tau}$, so this term is greater than or equal to
$\sum_{i : \tau^* \leq L_i[b_i] / q_i < \tilde{\tau}} (q_i^2 \cdot \tau^* - q_i^2 \cdot \tilde{\tau})$.
Adding them together, we have $\score(L[\mb^*]) - \tilde{F}(L[\mb^*])$ is lower bounded by
$$ \sum_{i : \tau^* \leq L_i[b_i] / q_i} q_i^2 \cdot (\tau^* - \tilde{\tau}) $$
which is greater than or equal to $\tau^* - \tilde{\tau}$. Since $\tau^* \geq 1 / \score(L[\mb])$, we have
$\score(L[\mb^*]) - \tilde{F}(L[\mb^*]) \geq 1 / \score(L[\mb]) - \tilde{\tau}$.
Finally, combining case one and two together completes the proof of Lemma \ref{lem:esp}.
\end{proof}
Thus, when the query is given, $\epsilon_1$ is at most $\max\{0, \tilde{\tau} - 1 / \score(L[\mb_l])\}$.

\smallskip
\noindent
\textbf{For $\epsilon_2$}: In general, there is no upper bound for $\tau$ since it can be as large as $L_i[b_i] / q_i$ for some $i$.
The gap $\score(L[\mb]) - \tilde{F}(L[\mb])$ can be close to 1.
However, the proof of Theorem \ref{thm:near-optimal-approx} only requires $\epsilon_2$ to be at least
the difference between $\score$ and $\tilde{F}$ at position $\mb_l$.
So $\epsilon_2$ is upper-bounded by $\score(L[\mb_l]) - \tilde{F}(L[\mb_l])$ for a given query.

Summarizing the above analysis, the approximation factor $\epsilon$ is determined by the following two factors:
\begin{enumerate}\parskip=0pt
\item how much the approximation $\tilde{F}$ is smaller than the scoring function $\score$ at $\mb_l$,
the starting position of the last segment chosen during the traversal, and
\item how much $\tilde{F}$ is bigger than $\score$ at the optimal position with exactly $|\mb_l|$ steps that
minimizes $\score$.
\end{enumerate}
This yields the following upper bound of $\epsilon$:
\begin{equation}\label{equ:epsilon}
 \epsilon \leq \max\{0, \tilde{\tau} - 1 / \score(L[\mb_l])\} + \score(L[\mb_l]) - \tilde{F}(L[\mb_l]) .
\end{equation}

\section{Experiments on Document and Image datasets}\label{sec:more-experiment}

In addition to the mass spectrometry dataset presented in Section \ref{sec:hull-based}, 
we also verify the near-convexity assumption in a document dataset\footnote{\url{https://www.kaggle.com/ycalisar/hotel-reviews-dataset-enriched}}.
The documents are 515k hotel reviews from the website \url{booking.com}.
We apply the doc2vec\footnote{\url{https://radimrehurek.com/gensim/models/doc2vec.html}} machine learning model to convert each hotel review to a vector representation
and build the inverted index and convex hulls from the generated vectors.
Similar to the mass spectrometry dataset, we observe that the values in the inverted lists have
the similar ``near-convex'' shape as shown in Figure \ref{fig:near-convex-doc}.

\begin{figure}[!ht]
%\centering
\renewcommand{\tabcolsep}{0.1mm}
\begin{tabular}{ccccc}
\includegraphics[width=0.25\columnwidth]{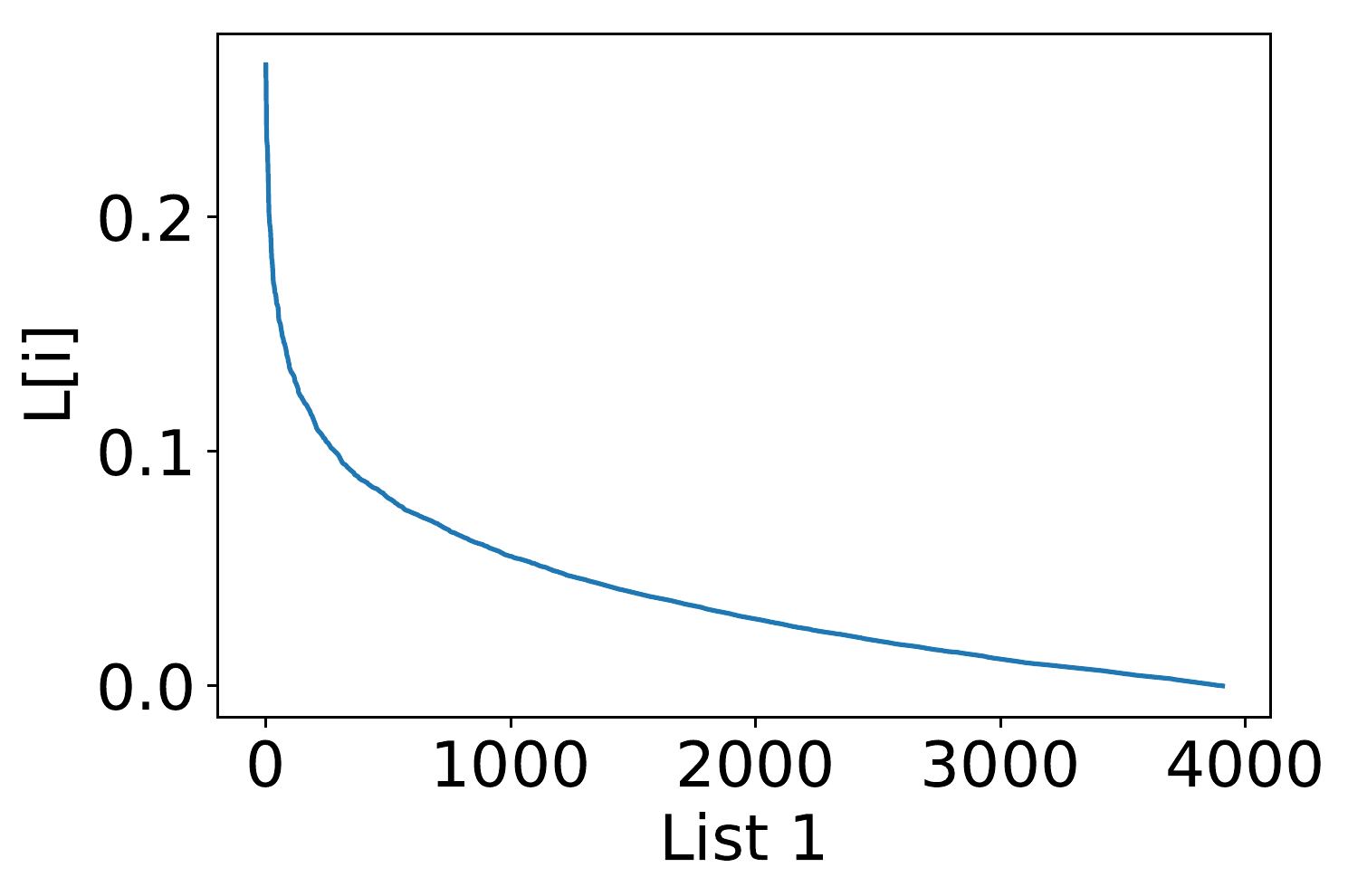}&\includegraphics[width=0.25\columnwidth]{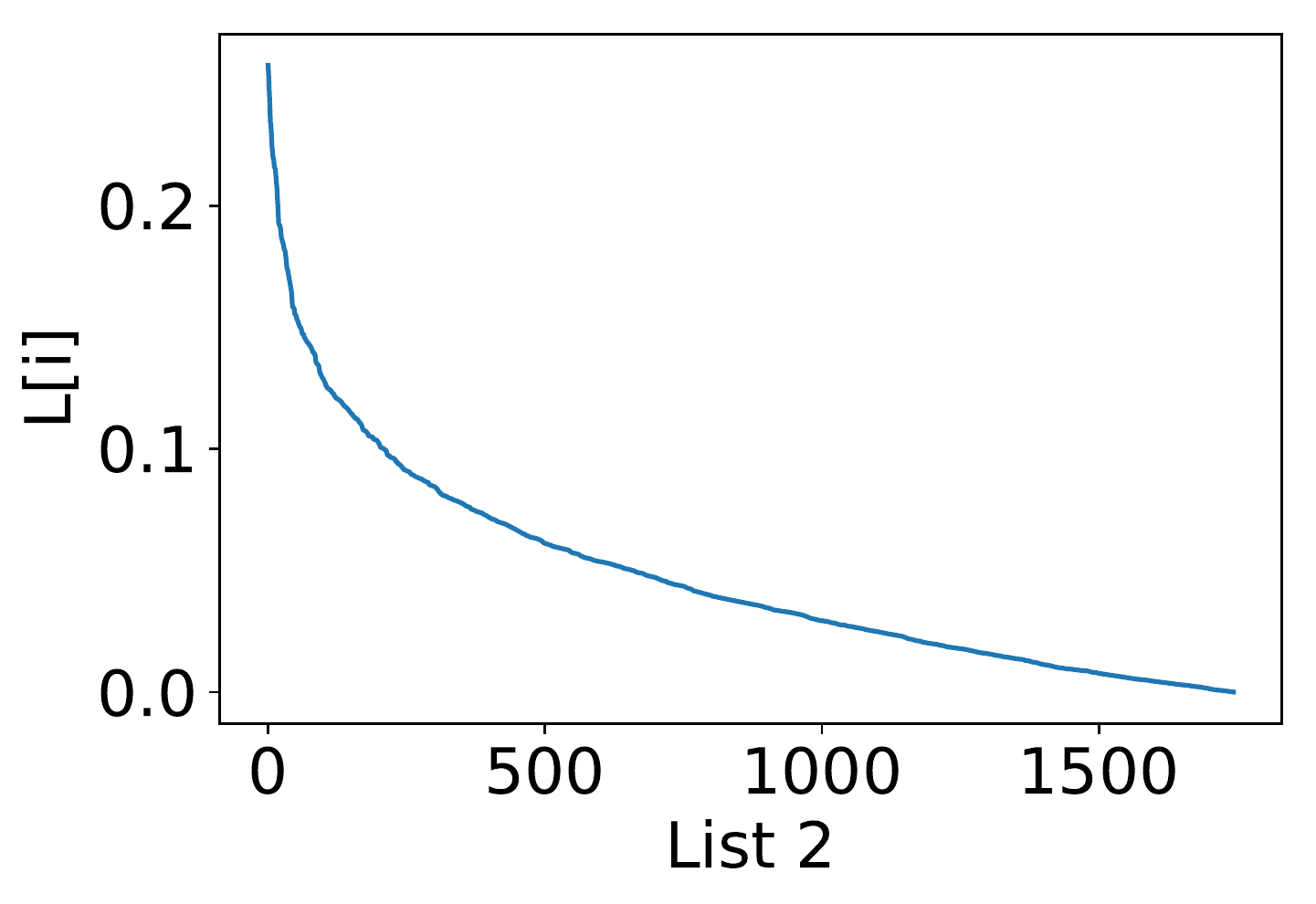}&
\includegraphics[width=0.25\columnwidth]{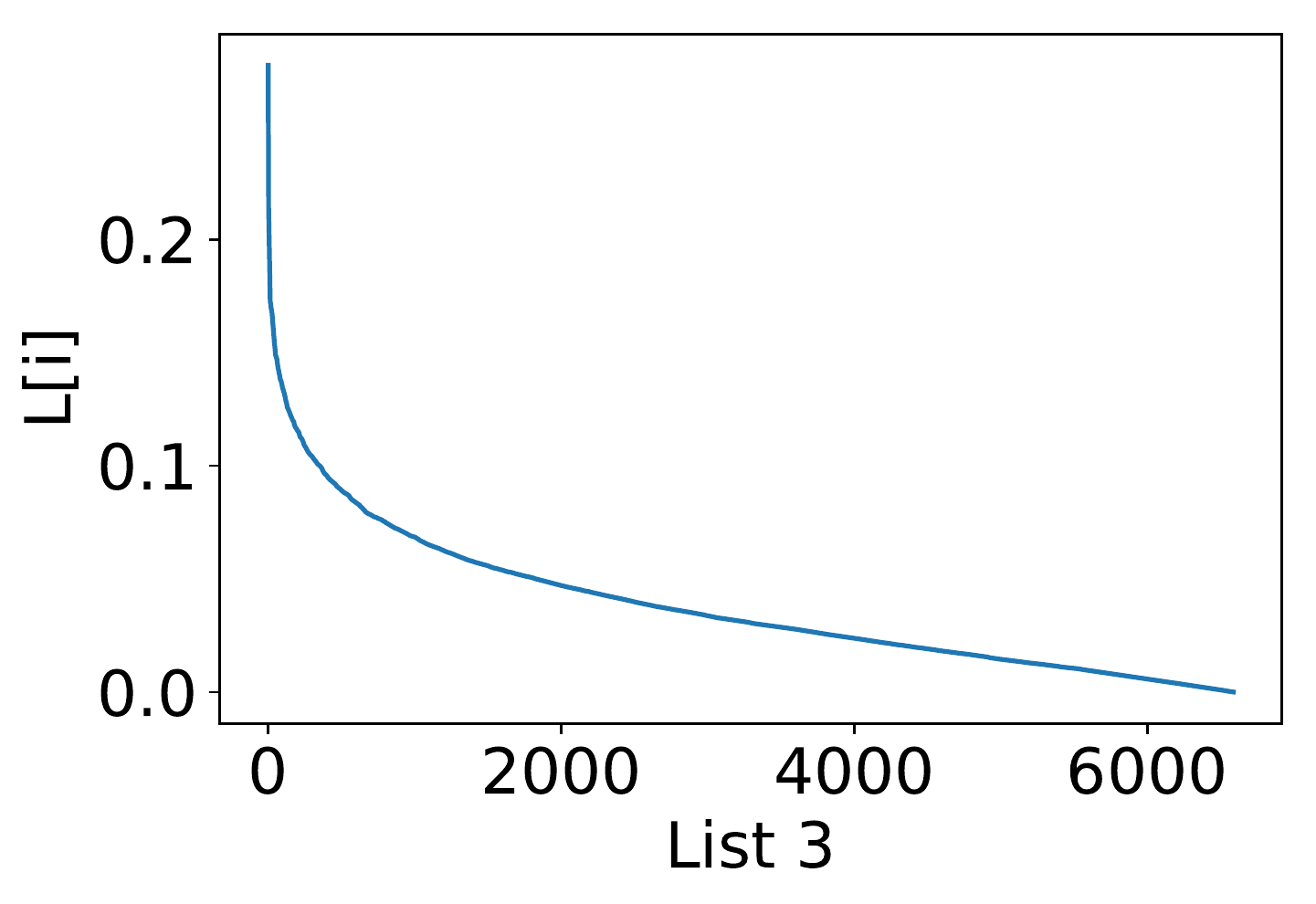}&\includegraphics[width=0.25\columnwidth]{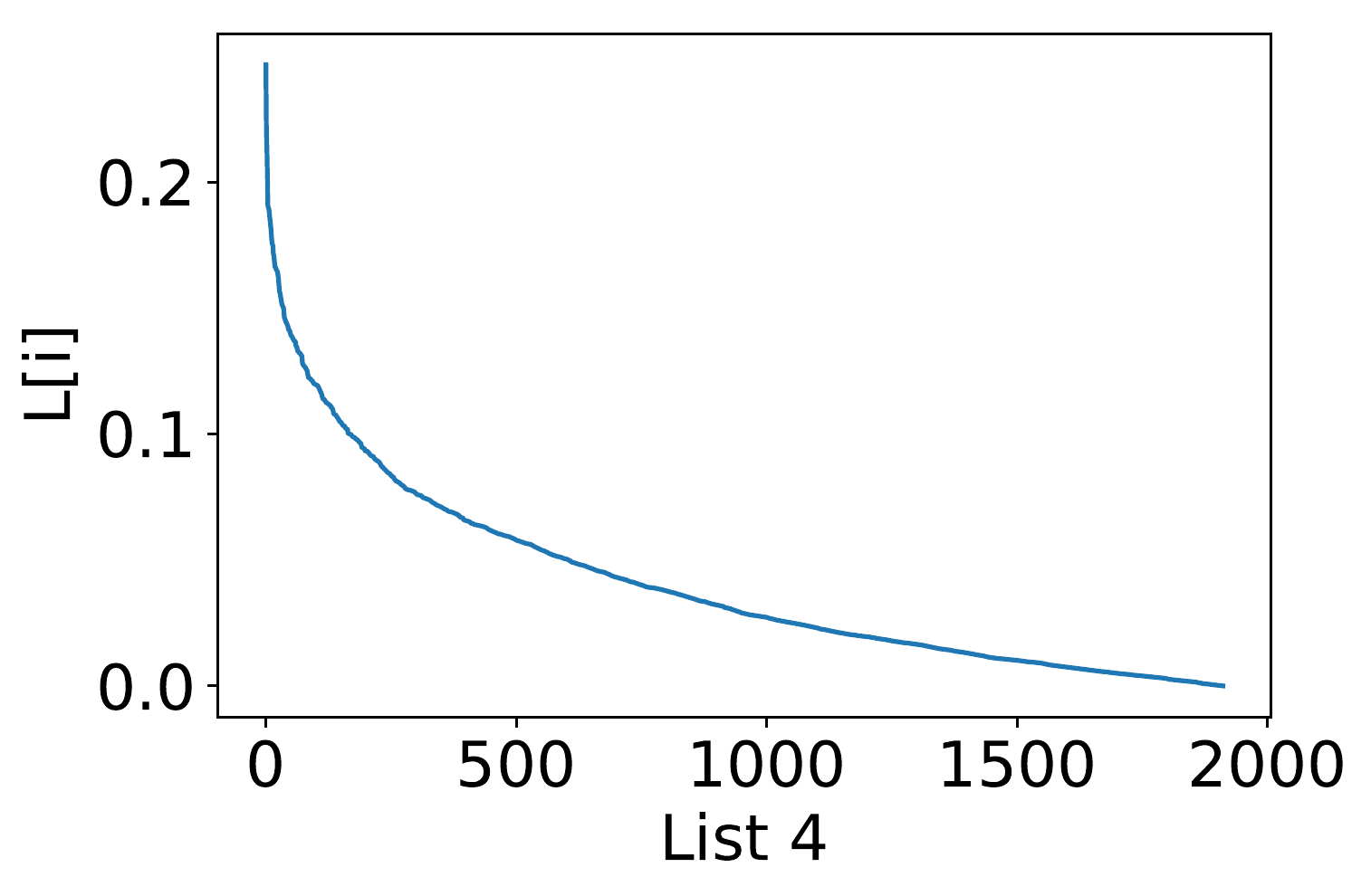}
\end{tabular}
%\vspace*{-3mm}
\caption{The inverted lists of the first 4 dimensions of a document dataset}\label{fig:near-convex-doc}
% \vspace{-2mm}
\end{figure}

In document datasets, a cosine threshold query can be interpreted as ``finding similar documents of a given one''.
We ran 100 randomly chosen queries on a subset of 10,000 reviews with threshold $\theta = 0.6$.
The total number of accesses is 4,762,040 and the total sizes of the last gap is 374,521 which
means that the cost additional to the optimal access is no more than 7.86\% 
of the overall access cost.

Next, we repeat the same experiment on an image dataset\footnote{\url{http://vis-www.cs.umass.edu/lfw/}}
containing 13,000 images of human faces collected from the web. We use
the img2vec model\footnote{\url{https://github.com/christiansafka/img2vec}} to 
convert each image to a vector. Once again, we observe the same near-convex shape patterns
in the inverted lists constructed from the vectors as shown in Figure \ref{fig:near-convex-img}.

\begin{figure}[!ht]
%\centering
\renewcommand{\tabcolsep}{0.1mm}
\begin{tabular}{ccccc}
\includegraphics[width=0.25\columnwidth]{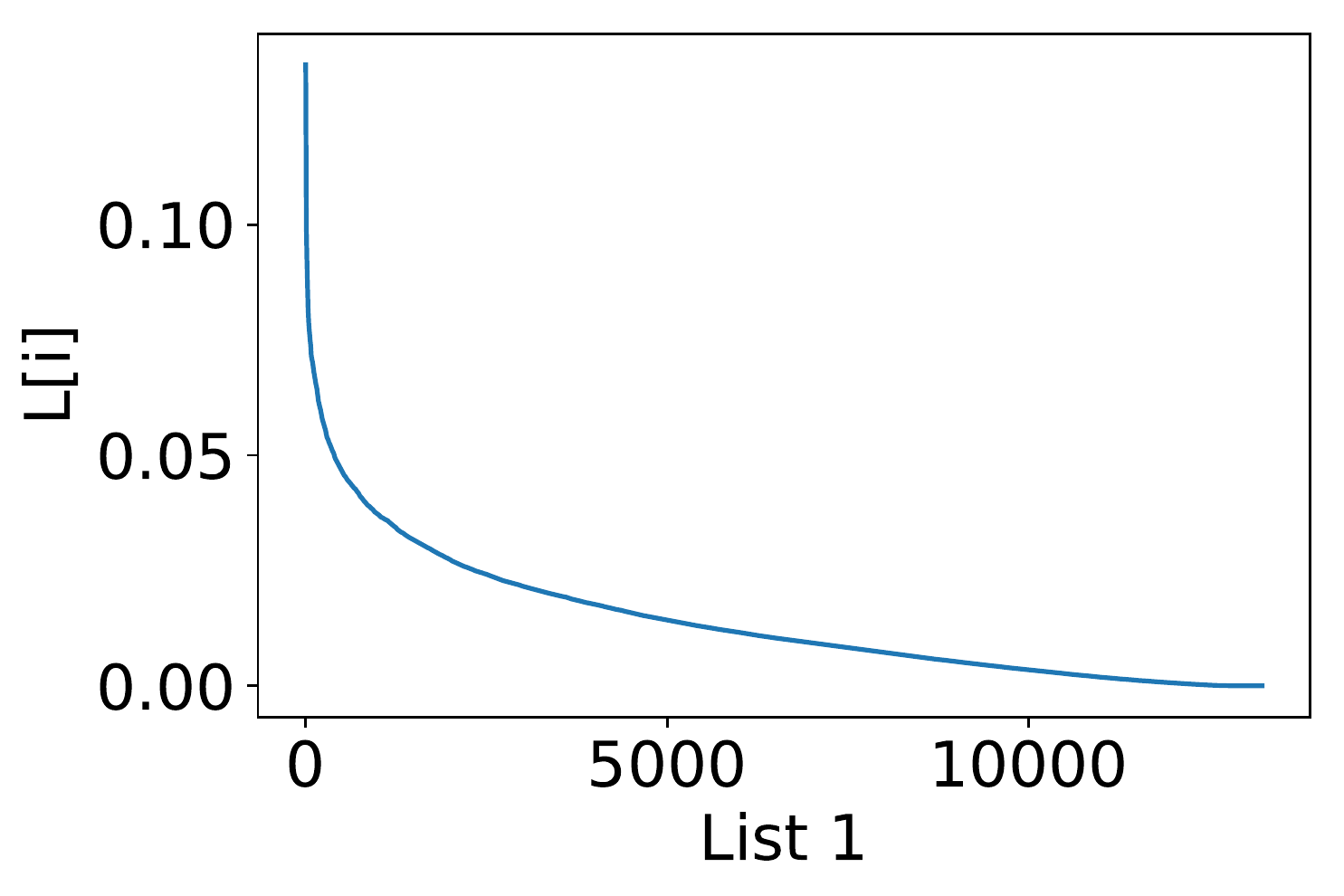}&\includegraphics[width=0.25\columnwidth]{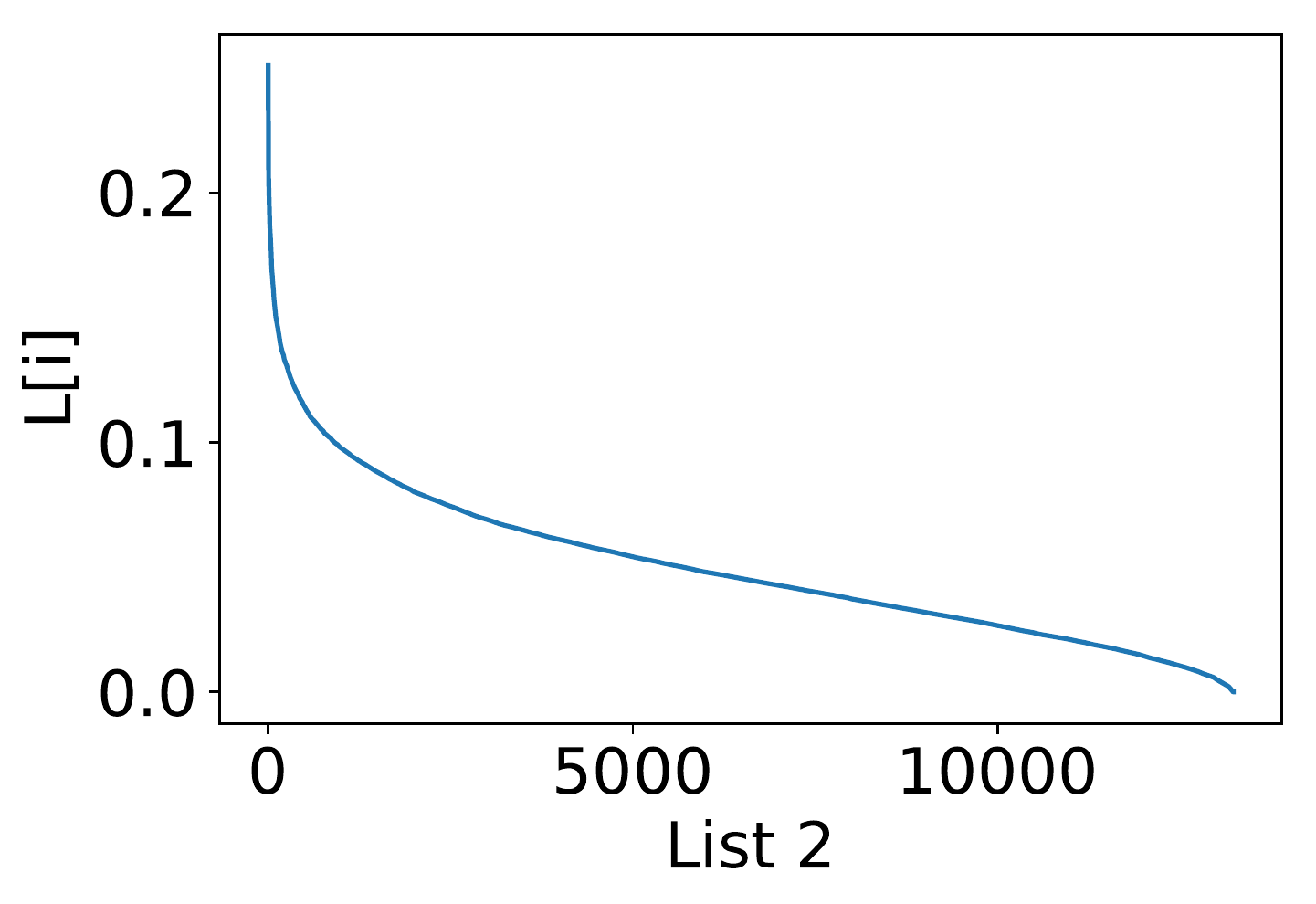}&
\includegraphics[width=0.25\columnwidth]{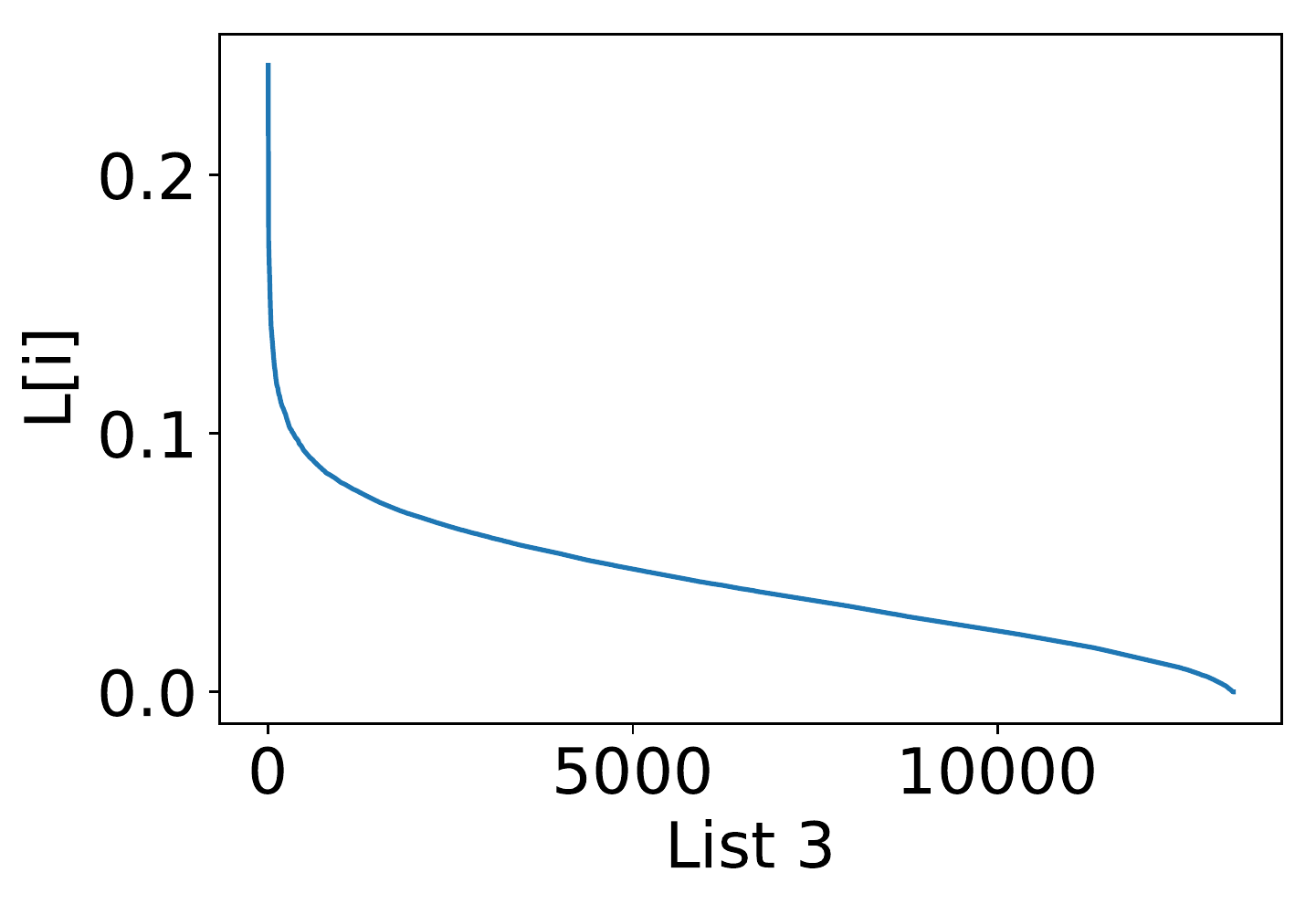}&\includegraphics[width=0.25\columnwidth]{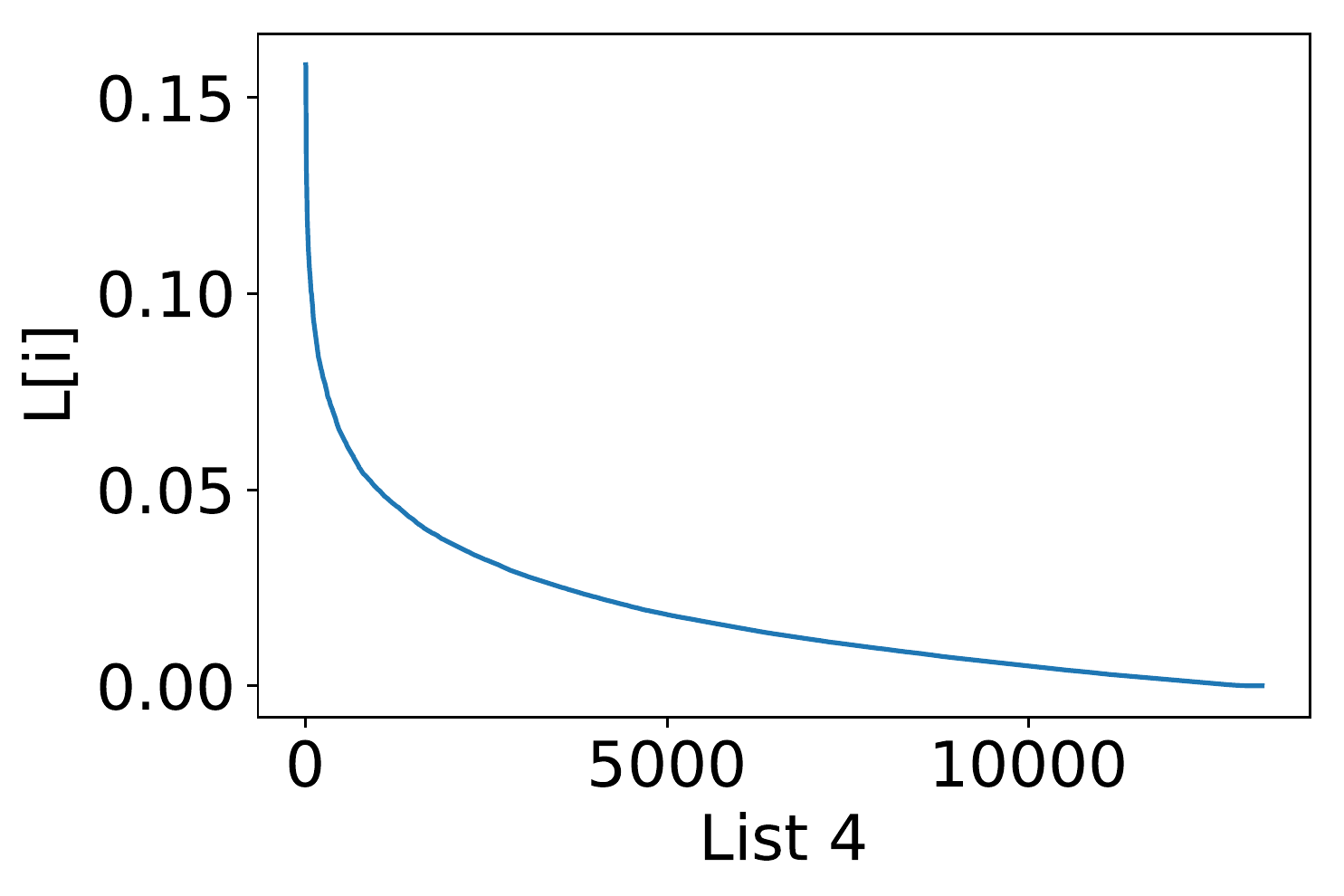}
\end{tabular}
%\vspace*{-3mm}
\caption{The inverted lists of the first 4 dimensions of an image dataset}\label{fig:near-convex-img}
% \vspace{-2mm}
\end{figure}

In image datasets, a cosine threshold query can be used for finding similar images to a given image.
With the same setting above, we ran 100 random queries of faces in the whole dataset with the same threshold $\theta = 0.6$.
The total access cost is 118,795,452, the total size of last gaps is 473,999, so
the additional access cost compared to $\opt$ is no more than 0.40\% of the overall access cost.

\section{Applying the proposed techniques to Top-$k$ cosine queries}\label{app:top-k}

We show that the proposed stopping condition and traversal strategy can also be applied to \emph{top-k} cosine queries. 
In the top-k setting, each query is a pair $(\mq, k)$
where $\mq$ is a query vector and $k$ is an integer parameter.
A query $(\mq, k)$ asks for the top $k$ database vectors with the highest cosine similarity with the query vector $\mq$.

The proposed techniques can be applied by adapting the classic structure of $\TA$.
The algorithm traverses the 1-d inverted lists and keeps track of the $k$-th highest similarity score among the gathered vectors. 
Note that to guarantee tightness, the exact similarity scores need to be computed online,
which results in additional computation cost since the exact computation can be avoided for threshold queries
using partial verification (Section \ref{sec:verification}).
The stopping condition can be adapted as follows.
\begin{theorem}
At a position vector $\mb$, let $\theta_k$ be the $k$-th highest score among vectors on or above $\mb$. 
Then the following stopping condition is tight and complete:
$$ \varphi_{\textsf{top-k}}(\mb) = \Big( \score(L[\mb]) < \theta_k \Big) . $$
\end{theorem}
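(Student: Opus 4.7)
The plan is to reduce both completeness and tightness of $\varphi_{\textsf{top-k}}$ to Theorem \ref{thm:qiLibi} by observing that in the top-$k$ setting $\theta_k$ plays exactly the role that the fixed threshold $\theta$ plays in the threshold problem. First I would make the safety notion precise: call $\mb$ \emph{top-$k$ complete} if for every unit vector $\ms < L[\mb]$ we have $\ms \cdot \mq < \theta_k$. Once this condition holds, the true top-$k$ is necessarily contained in the already-gathered pool, because every unseen candidate scores strictly below the current $k$-th best and thus cannot displace any element of the incumbent top-$k$.

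For completeness, the argument is immediate from Theorem \ref{thm:qiLibi}: the quantity $\score(L[\mb])$ is precisely the maximum value of $\ms \cdot \mq$ over all unit vectors $\ms$ satisfying $s_i \leq L_i[b_i]$. Hence $\score(L[\mb]) < \theta_k$ forces $\ms \cdot \mq < \theta_k$ for every such $\ms$, which is exactly top-$k$ completeness of $\mb$. For tightness I would argue the contrapositive: if $\score(L[\mb]) \geq \theta_k$, then the KKT-based maximizer $\ms^*$ from the proof of Theorem \ref{thm:qiLibi} is a unit vector with $s^*_i \leq L_i[b_i]$ and $\ms^* \cdot \mq \geq \theta_k$, witnessing failure of top-$k$ completeness at $\mb$.

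The main obstacle I anticipate is the same boundary subtlety already implicit in the threshold case of Theorem \ref{thm:qiLibi}: the maximizer $\ms^*$ may satisfy $s^*_i = L_i[b_i]$ on some coordinates, so it is not strictly below $L[\mb]$ in the partial order $<$. I would resolve this using the same convention adopted in the paper for the constraint set $\mC(\mb)$---working with $\leq$ inequalities and passing to a vanishing perturbation when a strict witness is needed---since the feasibility test for $\mC(\mb)$ is precisely what the authors equate with (non-)completeness throughout Section \ref{sec:stop}. A secondary point worth noting, though not required at a fixed $\mb$, is that $\theta_k$ is monotone non-decreasing as the gathering phase progresses (newly gathered vectors can only raise the $k$-th largest score), so the stopping condition only becomes easier to satisfy over time, confirming that the criterion is well-behaved when checked incrementally during traversal.
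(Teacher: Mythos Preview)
The paper does not actually supply a proof of this theorem; it is stated without proof in the appendix on top-$k$ queries, with the implicit justification being exactly the reduction to Theorem~\ref{thm:qiLibi} that you carry out. Your argument is correct and is the natural one: $\score(L[\mb])$ is by construction the maximum of $\ms\cdot\mq$ over unit vectors with $\ms \leq L[\mb]$, so comparing it to the running $k$-th score $\theta_k$ gives completeness directly and tightness via the KKT maximizer, precisely as in the threshold case with $\theta$ replaced by $\theta_k$. Your handling of the $\leq$ versus $<$ boundary issue and the monotonicity of $\theta_k$ are both appropriate side remarks, consistent with how the paper treats the feasibility set $\mC(\mb)$ in Section~\ref{sec:stop}.
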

The score $\score(L[\mb])$ can be computed using the same $\calO(\log d)$ incremental maintenance algorithm in Section \ref{sec:incremental}.
The lower bound $\theta_k$ needs to be updated when a new candidate is gathered. 
Computing the similarity score takes $\calO(d)$ time and updating $\theta_k$ can be done in $\calO(\log k)$ using a binary heap.

%Although the proposed stopping condition $\varphi_{\TC}$ is 
%for threshold cosine queries, we show that it can be easily adapted to the top-$k$ cosine query setting and remains complete and tight.
%Informally, such a stopping condition for the top-$k$ setting can be obtained by replacing the threshold $\theta$
%with the $k$-th largest cosine score among the current candidate set $\mathcal{C}$. The running time of computing the stopping condition 
%remains $\calO(\log d)$ since the same incremental maintenance algorithm can be applied.

The hull-based traversal strategy for inner product threshold queries 
can be directly applied to top-k inner product queries. 
The following near-optimality result can be shown.
\begin{theorem}
For a top-k inner product query $(\mq, k)$, 
the access cost of the hull-based traversal strategy $\calT_{\HL}$
on a near-convex database $\db$ is at most $\opt + c$ where 
$c$ is the convexity constant of $\db$.
\end{theorem}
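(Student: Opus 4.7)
The plan is to reduce the top-$k$ case to the already-established Theorem~\ref{thm:near-optimal} for the threshold case. The crucial observation is that the hull-based traversal strategy $\calT_{\HL}(\mb) = \argmax_i \tilde\Delta_i[b_i]$ does not depend on the stopping condition at all; it only uses pre-computed convex hull information. Hence the sequence of positions $\{\mb_t\}_{t\ge 0}$ that $\calT_{\HL}$ traces out is identical for the top-$k$ problem and for a threshold problem with any value of $\theta$. The only thing that changes is the index at which we halt.

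The key step is therefore a clean equivalence lemma. Let $\theta_k^* = $ the true $k$-th largest value of $F(\ms) = \mq \cdot \ms$ over $\ms \in \db$ (a fixed data- and query-dependent constant), and let $\theta_k(\mb)$ denote the $k$-th largest score among the candidates gathered by the time the traversal reaches $\mb$. I would prove:
\[
F(L[\mb]) < \theta_k(\mb) \ \Longleftrightarrow\ F(L[\mb]) < \theta_k^{*}.
\]
The ($\Leftarrow$) direction uses the standard $\TA$ upper-bound property: every unseen vector $\ms$ satisfies $F(\ms) \le F(L[\mb])$, so if $F(L[\mb]) < \theta_k^{*}$ then every vector whose true score is $\ge \theta_k^{*}$ must already be in the candidate pool, forcing $\theta_k(\mb) = \theta_k^{*}$. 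The ($\Rightarrow$) direction uses the elementary fact that for any subset $\mathcal{C}\subseteq\db$ of size $\ge k$, the $k$-th largest value in $\mathcal{C}$ is at most the $k$-th largest in $\db$, i.e.\ $\theta_k(\mb) \le \theta_k^{*}$, so the hypothesis immediately yields $F(L[\mb]) < \theta_k^{*}$. Note the observation that the candidate set at position $\mb$ depends only on $\mb$ (not on the order in which its entries were visited), which is why $\theta_k(\mb)$ is well-defined and this equivalence makes sense simultaneously for $\calT_{\HL}$ and for an optimal traversal.

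With this equivalence in hand, the stopping position of $\calT_{\HL}$ for the top-$k$ query coincides with its stopping position for the ordinary threshold query at threshold $\theta_k^{*}$; likewise $\opt$ for the top-$k$ problem equals $\opt(\theta_k^*)$ for the threshold problem, since both are characterized as $\min\{|\mb| : F(L[\mb]) < \theta_k^{*}\}$. Applying Theorem~\ref{thm:near-optimal} with $\theta = \theta_k^{*}$ immediately yields $\cost(\calT_{\HL}) < \opt + c$, as claimed. I expect the main (and really the only) obstacle to be a careful writeup of the equivalence lemma, in particular handling the degenerate cases where fewer than $k$ candidates have been gathered (in which case $\theta_k(\mb)$ is conventionally $-\infty$, both sides of the equivalence are false, and no reduction is needed) and handling ties at the value $\theta_k^{*}$ (which affects which set of $k$ vectors is ``the'' top-$k$ but does not affect the scalar $\theta_k^{*}$ or the reduction itself).
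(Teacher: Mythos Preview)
Your proposal is correct and matches the paper's intent: the paper states this theorem without proof, merely remarking that the hull-based strategy ``can be directly applied'' to top-$k$ inner product queries, and your equivalence lemma $F(L[\mb]) < \theta_k(\mb) \iff F(L[\mb]) < \theta_k^*$ is precisely the clean way to make that remark rigorous and reduce to Theorem~\ref{thm:near-optimal}. The only thing I would add is that your observation ``the candidate set at position $\mb$ depends only on $\mb$'' is doing real work---it is what lets you identify $\opt$ for the top-$k$ problem with $\opt(\theta_k^*)$ for the threshold problem, since otherwise the optimal traversal might gather a different candidate pool and hence face a different running threshold $\theta_k(\cdot)$ than $\calT_{\HL}$ does.
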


The hull-based strategy can also be applied to top-k cosine queries.
Recall that for cosine threshold queries, the hull-based traversal strategy
operates on the convex hulls of a decomposable approximation $\tilde{F}$ of $\score$ where
$\tilde{F}(L[\mb]) = \sum_{i=1}^d \min\{q_i \cdot \tilde{\tau}, L_i[b_i]\} \cdot q_i $.
As discussed in Section \ref{sec:global-optimal}, the choice of the constant $\tilde{\tau}$
is ideally the value $\tau$ of the optimal traversal path, which is dependent on the threshold $\theta$.
Therefore, for top-k cosine queries where the final threshold $\theta_k$ is unknown in advance,
a bad choice of $\tilde{\tau}$ can lead to poor approximation.
In a practical implementation, the constant $\tilde{\tau}$ can be made query-dependent and more carefully tuned.
We leave these aspects for future work.

\section{Additional Related Work}\label{sec:addRelatedWork}

In this section, we provide additional related work.
\subsection{Approximate approaches}\label{sec:approSearch}

Besides LSH, there are many other approximate approaches for high-dimensional similarity search, e.g., graph-based methods~\cite{FuWC17,Dong2011EKN,Wu2014FUL}, product quantization~\cite{Jegou2011PQN,Andre2015CLE,Lempitsky2012}, randomized KD-trees~\cite{Silpa08}, priority search k-means tree\cite{Muja14}, rank cover tree~\cite{Houle2015}, randomized algorithms~\cite{Yiqiu18},  HD-index~\cite{Arora2018}, and clustering-based methods\cite{Li2002CAS,Cui2010INA}.

\subsection{TA-family algorithms}\label{sec:TAfamily}

Bast et al. studied top-k ranked query processing with a different goal of minimizing the overall cost by scheduling the best sorted accesses and random accesses~\cite{Bast2006IIO}. However, when the number of dimensions is high, it requires a large amount of online computations to frequently solve a Knapsack problem,
which can be slow in practice. Note that, \cite{Bast2006IIO} only evaluated the number of sorted and random accesses in the experiments instead of the wall-clock time.  Besides, it does not provide any theoretical guarantee, which is also applicable to ~\cite{Jin2011EGE}.
Akbarinia et al. proposed \textsf{BPA} to improve \textsf{TA}, but the optimality ratio is the same as \textsf{TA} in the worst case~\cite{Akbarinia2007BPA}.
Deshpande et al. solved a special case of top-k problem by assuming that the attributes are drawn from a small value space~\cite{Deshpande2008}.
Zhang et al. developed an algorithm targeting for a large number of lists~\cite{Zhang2016LAT} by merging lists into groups and then apply $\TA$. However, it requires the ranking function to be distributive that does not hold for the cosine similarity function.
Yu et al.  solved top-k query processing in subspaces~\cite{Yu2014} that are not applicable to general cosine threshold queries.

%\smallskip
%\noindent \textbf{Algorithms for non-monotonic ranking functions}.
Some works considered non-monotonic ranking functions~\cite{Zhang2006BRQ,Xin2007PSM}. For example, \cite{Zhang2006BRQ} focused on the combination of a boolean condition with a regular ranking function and \cite{Xin2007PSM} assumed the ranking functions are lower-bounded.
%and \cite{TaoHPP07} focused on the low-dimensional space and the ranking function is unknown in advance.
The cosine function has not been considered in this line of work. %\textcolor{red}{To do: ``what aspect has not been considered? If they have unknown functions, couldn't the cosine be one of them? Then it is considered.''}

\subsection{CSS in Hamming Space}\label{sec:hamming}
\cite{EghbaliT16} and \cite{Jianbin18} studied cosine similarity search in the Hamming space where each vector contains only binary values. 
They proposed to store the binary vectors efficiently into multiple hash tables.
However, the techniques proposed there cannot be applied since transforming real-valued vectors to binary vectors loses information, and thus correctness is not guaranteed.

\yuliang{moved inner product search here.}

\subsection{Inner product search}\label{sec:relatedproblems}

%There are other variants of cosine similarity search in the literature.

%\smallskip
%\noindent
%\textbf{Inner product search. }
%Among the many variants, the inner product search problem is the most extensively studied.

The cosine threshold querying is also related to inner product search where vectors may not be unit vectors, otherwise, inner product search is equivalent to cosine similarity search.

%As reviewed in Section~\ref{sec:intro},

%the inner product search is equivalent to cosine similarity search on unit vectors.
%
%Given a query vector $\mathbf{q}$ and threshold $\theta$, the inner product search problem aims to find every vector $\mathbf{p}$ in the database such that $\mathbf{q}\cdot \mathbf{p} \ge\theta$.\footnote{We omit its top-k version since this work focuses on threshold query processing.} Note that $\mathbf{q}$ and $\mathbf{p}$ may not be normalized, otherwise, the problem is equivalent to cosine similarity search. %Next, we first review existing techniques developed for IPC where vectors are not restricted to unit vectors and then discuss the applicability on unit vectors to compare with our solutions.

Teflioudi et al. proposed the \textsf{LEMP} framework~\cite{Teflioudi2015LFR,Teflioudi2016} to solve the inner product search where each vector may not be normalized.
The main idea is to partition the vectors into buckets according to vector lengths and then apply
an existing cosine similarity search (CSS) algorithm to each bucket.
This work provides an efficient way for CSS that can be integrated to the \textsf{LEMP} framework.
%Although \textsf{LEMP} proposed an optimization for CSS that does not need to traverse the lists from the top, it does not help much on skewed lists (as concerned in this work) because the top values of a list decrease quickly with a few steps. Moreover, the optimization only works for high thresholds, which are not always the requirement. For example, a common and well-accepted threshold in mass spectrometry search is 0.6, which is a medium-sized threshold.
%\textcolor{red}{Jianguo: make it consistent with Yuliang's prior arguments why we dont start from the top.}

Li et al. developed an algorithm \textsf{FEXIPRO}~\cite{Li2017FFE} for inner product search in recommender systems
where vectors might contain negative values. Since we focus on non-negative values in this work, the proposed techniques
(such as length-based filtering and monotonicity reduction) are not directly applicable.
%assuming that the non-normalized vectors contain negative values. But the vectors in this work only contain positive values. \textcolor{red}{Jianguo: mention positive values in the definition} As a result, the techniques developed in ~\cite{Li2017FFE} (such as length-based filtering and monotonicity reduction) are not directly applicable.

There are also tree-based indexing and techniques for inner product search~\cite{Ram2012MIS,CurtinGR13}. But they are not scalable to high dimensions~\cite{Keivani17}. %Another issue is that they fail to exploit the skewness and normalization that the mass spectrometry vectors exhibit.
Another line of research is to leverage machine learning to solve the problem~\cite{MussmannE16,Fraccaro2016IPM,Shen15}. However, they do not provide accurate answers. Besides, they do not have any theoretical guarantee.

% \smallskip
% \noindent \textbf{Dimensionality reduction}.
\subsection{Dimensionality reduction}\label{app:dimensionality-reduction}
Since many techniques are affected negatively by the large number of dimensions, one potential solution is to apply dimensionality reduction
(e.g. PCA, Johnson-Lindenstrauss)~\cite{Liaw2010FEK,LianC09,Johnson1986} before any search algorithm. % removed Mingjie2018
However, this does not help much if the dimensions are not correlated and there are no hidden latent variables to be uncovered by dimensionality reduction. For example, in the mass spectrometry domain, each dimension represents a chemical compound/element and there is no physics justifying a correlation. We applied dimensionality reduction to the data vectors and turned out that only 4.3\% of dimensions can be removed in order to preserve 99\% of the distance.

\subsection{Keyword search}\label{sec:kwdsearch}

%\noindent \textbf{Spatial search}. Techniques for finding the nearest neighbors in spatial databases (e.g., R-tree and KD-tree~\cite{Gaede1998MAM}) cannot be used for high-dimensional vector similarity search due to the well known dimensionality curse issue~\cite{Korn2001DCS,Weber1998QAP,Hinneburg2000NNH}.

%If all the vectors are unit vectors, then cosine similarity can be converted to Euclidean distance and vice versa. Let $\textbf{p}$ and $\textbf{q}$ be two unit vectors, then if $\cos(\textbf{p},\textbf{q})\ge\theta$, then the Euclidean distance of $\textbf{p}$ and $\textbf{q}$ is $\|\textbf{p}-\textbf{q}\|\le 2 \sin \frac{\varphi}{2}$ where $\varphi = \arccos\theta$. Thus, the mass spectrometry search problem can be converted to $d$-dimensional spatial nearest neighbor problem, which has been studied in the database area for many year.

%The main technique used in spatial database is spatial index (such as R-tree, KD-tree)~\cite{Gaede1998MAM}. However, those spatial structures have the issue of dimensionality curse in the sense that they only work well in low dimensional space~\cite{Korn2001DCS,Weber1998QAP,Hinneburg2000NNH,Gionis1999SSH}. However, mass spectrometry search can be 1000s of dimensions.

%\smallskip
%\noindent \textbf{Keyword search}.
This work is different from keyword search although the similarity function is (a variant of) the cosine function and the main technique is inverted index~\cite{M08}. There are two main differences: (1) keyword search generally involves a few query terms~\cite{Wang2017MIL}; (2) keyword search tends to return Web pages that contain all the query terms~\cite{M08}. As a result, search engines (e.g., Apache Lucene) mainly sort the inverted lists by document IDs~\cite{Broder2003EQE,Ding2011FTD,Chakrabarti11} to facilitate boolean intersection. Thus, those algorithms cannot be directly applied to cosine threshold queries.
Although there are cases where the inverted lists are sorted by document frequency (or score in general) that are similar to this work, they usually follow  \textsf{TA}~\cite{Ding2011FTD}. This work significantly improves \textsf{TA} for cosine threshold queries.

%since \textsf{TA} works well when the number of involved lists is small. However, for mass spectrometry search, a query spectrum contains hundreds of non-zero entries that require hundreds of inverted lists. Thus, there is significant improvement for \textsf{TA}.

\subsection{Mass spectrometry search}\label{sec:masssearch}

%We will answer the following question in this subsection: \emph{What's the state-of-the-art approach for mass spectrometry search? What's wrong with existing approaches? How is our approach different from them?} A short answer: Existing works either perform linear scan within a bucket (traditional database search); or LSH/inverted list for solving theoretical spectrum search problem based on shared peak counts where intensities are missing~\cite{Dutta07,MSFragger}.
%
%Relevant papers: \cite{Eng1994,Dutta07,NovakSHL10,Duncan2005,Diament11,Craig04,MSFragger}. A nice paper to start with is \cite{Diament11}.
%
%As is stated in the introduction, the promise of mass spectrometry as a powerful way to translate biological data into actionable knowledge has in turn led to the development of many algorithms that use spectra matching, and often using cosine similarity.

%Existing algorithms for calculating spectrum-spectrum similarities appear in two applications of mass spectrometry.  The first one is  ``library search'',

For mass spectrometry search, the state-of-the-art approach is to partition the spectrum database
into buckets based on the spectrum mass (i.e., molecular weight) and only search the buckets that have the similar mass to the query~\cite{MSFragger}.
Each bucket is scanned linearly to obtain the results.
The proposed techniques in this work can be applied to each bucket and improve the performance dramatically.
\yuliang{I moved the related work on mass spec from the main text to here.}
%This work can dramatically improve the performance within one bucket.
%We provide more detailed discussion in the appendix (Section~\ref{sec:masssearch}).

Library search, wherein a spectrum is compared to a series of reference spectra to find the most similar match, providing a putative identification for the query spectrum~\cite{ac980122y}. Methods in library search often use cosine~\cite{PMIC200600625} or cosine-derived~\cite{pr400230p} functions to compute spectrum similarity.  Many of the state-of-the-art library search algorithms, e.g., SpectraST~\cite{PMIC200600625}, Pepitome~\cite{Dasari2012}, X!Hunter~\cite{Craig2006}, and \cite{Mingxun2016} find candidates by doing a linear scan of peptides in the library that are within a given parent mass tolerance and computes a distance function against relevant spectra.  This is likely because the libraries that are searched against have been small, but spectral libraries are getting larger, e.g.,  MassIVE-KB now has more than 2.1 million precursors.  This work fundamentally improves on these, as it uses inverted lists with many optimizations to significantly reduce the candidate spectra from comparison.  M-SPLIT used both a prefiltering and branch and bound technique to reduce the candidate set for each search~\cite{Wang2010}. This work improves on both these by computing the similar spectra directly, while still applying the filtering.

%A final algorithm to consider is FastPaSS which uses hardware optimizations such as using GPUs for massively parallel computation of dot-products~\cite{pr200074h}.  The method proposed here improves on the previous by accelerating the speed while providing provable bounds and guaranteeing all potential matches while still being able to be run on a desktop computer.

%The second kind of spectrum similarity matching is database search~\cite{Eng1994},
Database search~\cite{Eng1994} is where the experimental spectrum is compared to a host of theoretical spectra that are generated from a series of sequences.  The theoretical spectra differ from experimental spectra in that they do not have intensities for each peak and contain all peaks which could be in the given peptide fragmentation. In other words, the values in the vectors are not important and the similarity function evaluates the number of shared dimensions that are irrelevant to the values.  While the problem is different in practice than that described in this paper, it is still interesting to consider optimizations. Dutta and Chen proposed an LSH-based technique which embeds both the experimental spectra and the theoretical spectra onto a higher dimensional space~\cite{Dutta07}. While this method performs well at the expense of low recall rate, our method is guaranteed to not lose any potential matches. There are also other methods optimizing this problem by using preindexing~\cite{ac0481046,MSFragger} but none do this kind of preindexing for calculating matches between experimental spectra.

%\subsection{Others}\label{sec:others}
%\textcolor{red}{To do...} SIGMOD'18~\cite{Anil18}: No. this is not related at all, it focused on finding top-k over an array of integers, nothing to do with TA etc.

%\cite{Yiqiu18}: SIGMOD'18 paper on ultra-high similarity search, but it's approximate.

%VLDB'18~\cite{Arora2018}: HD-index, which is approximate and also based on triangle inequality that does not help much in our setting.

%\cite{Zhang2016LAT}: not useful. It requires the function to be a special case.

%\cite{Yu2014}: not useful either: it's only working on 2-6 dimensions.

%\cite{Chakrabarti11}: this focused on ID-sorted lists, while ours are value-sorted lists. The argument is that, ID-sorted lists can support intersection efficiently, better compression, and efficient maintenance.

\subsection{Others}
This work is also different from ~\cite{Wenhai18} because that work focused on similarity search based on set-oriented p-norm similarity, which is different from cosine similarity. %Also, the vectors are not normalized in that work. Thus, most optimizations (such as length-based pruning) are not applicable to the mass spectrometry search.

%Note that we are aware of the discussion of the meaningfulness of high dimensional similarity search~\cite{BeyerGRS99}. It is meaningful in mass spectrometry search because vectors are correlated in the sense that they are unit vectors.

%\yuliang{I think this is close to the inner-product search problem. }
%This work differs from ~\cite{Khan2014TIF} that finds all points $\mathbf{x}$ such that $\mathbf{q}\cdot \phi(\mathbf{x}) \le \theta$,
%because that approach (based on precomputation) is only applicable to low-dimensional space, e.g., within 20.
%\yuliang{I think this reference is not needed.}
%This work is also different from \cite{MilchevskiAM15} because that work assumed the similarity function to be $L_1$ distance (not cosine similarity).

%\yuliang{I think this is also not needed.}
In the literature, there are skyline-based approaches~\cite{Lee2012EDL} to solve a special case of top-k ranked query processing where no any ranking function is specified. However, those approaches cannot be used to solve the problems that only involves \emph{unit vectors} such that all the vectors belong to skyline points.

\end{document}